\newtheorem{theorem}{Theorem}
\newtheorem{proposition}[theorem]{Proposition}%
\newtheorem{definition}{Definition}%
\newtheorem{corollary}[theorem]{Corollary}
\newcommand{\R}{\mathbb{R}}
\newcommand{\X}{\mathcal{X}}
\newcommand{\W}{\mathcal{W}}
\newcommand{\y}{\boldsymbol{y}}
\newcommand{\x}{\boldsymbol{x}}
\newcommand{\z}{\boldsymbol{z}}
\newcommand{\e}{\boldsymbol{e}}
\newcommand{\w}{\boldsymbol{w}}
\newcommand{\p}{\boldsymbol{p}}
\newcommand{\q}{\boldsymbol{q}}
\newcommand{\K}{\boldsymbol{K}}
\newcommand{\D}{\boldsymbol{D}}
\newcommand{\M}{\boldsymbol{M}}
\newcommand{\I}{\boldsymbol{I}}
\newcommand{\prox}{\mathrm{prox}}
\title{Space-Variant Total Variation boosted by learning techniques in few-view tomographic imaging}
\author{Elena Morotti \\
Department of Political and Social Sciences, \\ University of Bologna, Italy. \\ \texttt{elena.morotti4@unibo.it}. \\
\And Davide Evangelista \\
Department of Computer Science and Engineering, \\ University of Bologna, Italy. \\
\And Andrea Sebastiani \\
Department of Physics, Informatics and Mathematics, \\ University of Modena and Reggio Emilia, Italy \\
\And Elena Loli Piccolomini \\
Department of Computer Science and Engineering, \\ University of Bologna, Italy.}
\date{}
\begin{document}
\maketitle

\begin{abstract}
This paper focuses on the development of a space-variant regularization model for solving an under-determined linear inverse problem. The case study is a medical image reconstruction from few-view tomographic noisy data. 
The primary objective of the proposed optimization model is to achieve a good balance between denoising and the preservation of fine details and edges, overcoming the performance of the popular and largely used Total Variation (TV) regularization through the application of appropriate pixel-dependent weights.
The proposed strategy leverages the role of gradient approximations for the computation of the space-variant TV weights.
For this reason, a convolutional neural network is designed, to approximate both the ground truth image and its gradient using an elastic loss function in its training. \\
Additionally, the paper provides a theoretical analysis of the proposed model, showing the uniqueness of its solution, and illustrates a Chambolle-Pock algorithm tailored to address the specific problem at hand.
This comprehensive framework integrates innovative regularization techniques with advanced neural network capabilities, demonstrating promising results in achieving high-quality reconstructions from low-sampled tomographic data.
\end{abstract}

\keywords{Spatially Adaptive Regularization, Space Variant Total Variation, Optimization, Neural Network}

\section{Introduction}
\label{sec:intro}

In this work, we consider linear inverse problems of the form:
	\begin{equation}\label{eq:inverse_problem}
		\y^\delta = \K \x^{GT} + \e, \quad || \e ||_2 \leq \delta,
	\end{equation}
where $\x^{GT} \in \X \subseteq \R^n$ is the (unknown) ground truth image, $\K \in \R^{m \times n}$ is an under-sampled linear operator, i.e.  $m \leq n$, $\y^\delta \in \R^m$ is the acquired, noisy datum, and $\e \in \R^m$ is the noise, whose norm is bounded by $\delta$.
Note that, since $m \leq n$, the kernel of $\K$ is generally non-trivial, which implies that \eqref{eq:inverse_problem} admits infinite solutions. Moreover, if $\K$ comes from the discretization of an ill-posed integral operator (as it typically happens in imaging), the solutions of \eqref{eq:inverse_problem} do not continuously depend on the data $\y^\delta$. To handle the scarcity of data and the ill-posedness of the problem, a good solution of \eqref{eq:inverse_problem} can be computed by minimizing a regularized problem of the form:
	\begin{equation}\label{eq:variational_formulation}
        \x^* = \arg\min_{\x \in \X} \mathcal{J}(\x, \y^\delta) := \mathcal{F}(\x, \y^\delta)  + \lambda \mathcal{R}(\x),
	\end{equation}
where $\mathcal{F}(\x, \y^\delta)$ is the data-fidelity term 
is utilized to ensure that $x^*$
  aligns with the acquisition model \eqref{eq:inverse_problem}
and $\mathcal{R}(\x)$ is the regularization term, enforcing stability and uniqueness of the solution by considering prior knowledge on $\x^*$. The scalar $\lambda>0$ controls the contribution of the regularization term over the fidelity one in the overall objective function $\mathcal{J}(\x, \y^\delta)$.

In traditional regularization approaches, $\mathcal{R}(\x)$ is applied uniformly to the whole image, i.e. its formulation equally weights the contributions across the pixels. We refer to this case as {\em global regularization}. 
Global regularization is simple to implement and computationally low demanding; 
on the other hand, it treats the entire image as a homogeneous entity making the setting the optimal weight $\lambda$ not trivial conceptually, because a value may not adapt well to varying characteristics within the image.
For instance, a considerably high value of $\lambda$ may effectively denoise the resulting image within flat regions, yet it could excessively smooth it, potentially leading to the loss or blurring of crucial details in certain areas. Conversely, a slightly lower value might adequately preserve details but fail to sufficiently eliminate noise in uniform areas.

To increase the flexibility of the regularization and better balance the trade-off between denoising and preserving fine details, a good strategy is the {\em space-variant regularization}. It allows for adapting regularization strength based on local image characteristics. This can enhance the preservation of fine details and edges without losing denoising efficacy, but its implementation is often more complex and computationally intensive than for global methods. In addition, its performance might be more sensitive to noise and artifacts, particularly if the local characteristics are not accurately estimated. In particular, the selection of optimal local regularization parameters is very challenging. 


In this work, we explore an imaging application focused on reconstructing medical images from few tomographic measurements. 
In this case, $\K$ comes from the discretization of the 2-dimensional fan-beam Radon transform and fits the aforementioned properties of ill-posedness. 
The data perturbations can be modeled with a white additive Gaussian noise, as in \eqref{eq:variational_formulation}, hence it is common to set $\mathcal{F}(\x,\y^\delta) = \frac{1}{2} || \K\x-\y^\delta||_2^2$ as the data-fitting term. 
Our task-oriented regularizing prior should preserve diagnostically significant features as well as remove noise. As medical images often contain relevant information in the form of low-contrast objects and boundaries between different anatomical tissues, a common choice for $\mathcal{R}$ is the Total Variation (TV) function. In fact, TV is known to be effective in preserving edges because it penalizes the variation or changes in pixel intensity by promoting sparsity in the gradient image domain \cite{sidky2014cttpv, piccolomini2021model, friot2022iterative}.
When globally applied, the drawback of TV regularization is the potential introduction of piecewise constant artifacts, which could impact the interpretation of medical information. It is particularly noticeable in regions with smooth intensity variations, as TV regularization tends to oversimplify these areas, resulting in blocky or stair-step artifacts. 
Furthermore, considering the very restricted number of projections acquired in few-view X-ray examinations (in contrast to conventional imaging protocols), the blocky counteraction can severely promote streaking artifacts caused by the lack of views in the acquisition process.

\paragraph{Contributions.} The contributions of this paper are twofold, both theoretical and algorithmic.

Firstly,  we demonstrate the uniqueness of the TV-based regularized solution, in the special case where $\K$ is under-determined. It is widely acknowledged that the solution is unique when $\K$ is over-determined; however, to the best of our knowledge, there is no proof in the under-determined case.\\
Secondly, we introduce a novel model for space-variant Total Variation (TV) regularization. In this model, the TV function's weights are determined by evaluating the gradient magnitudes of a pre-computed image.  We demonstrate that this image should closely approximate the ground truth, particularly within its gradient domain, to get an accurate final solution.  

On the algorithmic front, we present several methods for generating the pre-calculated image as a coarse reconstruction from the given observations.
When provided with only a single datum $\y^\delta$, we can employ either an analytical or a variational approach for a tomographic problem. 
Conversely, in cases where the medical scenario provides a dataset comprising consistent pairs of data and accurate reconstructions, we can harness the power of neural networks to generate images with desired features. We have developed a convolutional neural network trained to approximate both the ground truth image and its gradient, utilizing an elastic loss function.\\
Finally, we tailor the widely recognized Chambolle-Pock algorithm to address the resulting minimization problem.
Extensive numerical simulations  conducted on both a synthetic image and a real dataset assess the relevance of the model for this application of tomographic imaging.

The paper is organized as follows.
In Section \ref{sec:PbStatem}, we initially establish the uniqueness of the solution to a weighted Total Variation regularization model, followed by an exposition of our approach for determining the weights.  Subsequently, we outline the algorithmic aspects in Section \ref{sec:Algo} and
we conduct numerical experiments in Section \ref{sec:numexp} to evaluate the effectiveness of our proposal across various implementations. 
Finally, we discuss our approach and set the conclusions in Section \ref{sec:Concl}, whereas we refer to the Appendix \ref{sec:Appendix} for further theoretical aspects.

\section{Space-Variant regularization models} \label{sec:PbStatem}

In this Section, we state our regularized model and derive important theoretical results about its convergence to a unique optimal solution. 

To simplify the readability of the upcoming results, we introduce here some notations. 
When not specified, we always assume vectors $\x$ to lie in $\X \subseteq \R^n$, which we assume to be convex. Thus, for any $\x, \y \in \X$, the symbol $\langle \x, \y \rangle$ represents the standard inner product in $\R^n$. Moreover, if $\x, \y \in \R^n$, we say that $\x \leq \y$ if and only if $\x_i \leq \y_i$ for any $i = 1, \dots, n$. If $\boldsymbol{A} \in \R^{m \times n}$ is a linear operator, then $\boldsymbol{A}^T$ represents the transpose of $\boldsymbol{A}$. If $C \subseteq \R^n$, then $\iota_C(\x)$ is the characteristic function of $C$, defined as:
\begin{align}
    \iota_C(\x) = \begin{cases}
        0 \quad & \mbox{if } \x \in C, \\
        +\infty \quad & \mbox{if } \x \notin C.
    \end{cases}
\end{align}
Finally, we define the 2-dimensional discrete gradient operator $\D: \R^n \to \R^{2n}$, such that:
\begin{align}\label{eq:Dx_definition}
    \D\x = \begin{bmatrix}
        \D_h \x \\ \D_v \x
    \end{bmatrix} \in \R^{2n},
\end{align}
where $\D_h, \D_v \in \R^{n \times n}$ are the discrete central differences operators associated with the horizontal and vertical derivatives, respectively, and $| \D \x | \in \R^n$ is the gradient-magnitude image of $\x$, defined as:
	\begin{align}\label{eq:gradient_magnitude}
		\left( | \D \x | \right)_i =\sqrt{\left( \D_h \x \right)_i^2 + \left( \D_v \x \right)_i^2}.
	\end{align}
Note that, from \eqref{eq:Dx_definition}, it holds that $(\D\x)_i = (\D_h\x)_i$ for $i = 1, \dots, n$, while $(\D\x)_i = (\D_v\x)_{i-n}$ for $i=n+1, \dots, 2n$.

\subsection{Regularization based on Total Variation}\label{ssec:OurModel}

Space-variant (or adaptive) regularization functions, which operate distinctively within each pixel of the image, have been proposed in the literature, primarily grounded in the Total Variation approach. 
The popular isotropic Total Variation operator $TV(\x)$ is the Rudin, Osher, Fatemi prior \cite{rudin1992nonlinear} defined as:
	\begin{align}\label{eq:TV_definition}
		TV(\x) = || \D \x ||_{2, 1} := \sum_{i=1}^n \sqrt{\left( \D_h \x \right)_i^2 + \left( \D_v \x \right)_i^2}.
	\end{align}

In a recent review by \cite{pragliola2023and}, the authors demonstrate the efficacy of space-variant TV-based regularization in addressing the limitations of TV in characterizing local features. Indeed, a widely employed technique to derive space-variant models involves weighting the regularization function pixel-wise. They suggest various sparsifying regularization functions with weights derived from a Bayesian interpretation of the model.
 In other papers, these weights are typically determined based on either an estimation of the data noise \cite{hintermuller2017optimalI, hintermuller2017optimalII, bortolotti2016uniform, cascarano2023constrained}, by an assessment of the image scaling \cite{grasmair2009locally, chen2010adaptive} or by means of specifically trained neural networks \cite{cuomo2023speckle}.
An alternative approach entails utilizing diverse measures of the image gradient, as exemplified in the Total p-Variation (TpV) regularization method, which typically is:
    \begin{align}\label{eq:TpV}
	|| \D \x ||_{2, p}^p := \sum_{i=1}^n \left( \sqrt{\left( \D_h \x \right)_i^2 + \left( \D_v \x \right)_i^2} \right)^p
    \end{align}
with $0<p<1$.    
In this scenario, the parameter $p$ can be adapted for each pixel of the image \cite{chen2010adaptive, chen2021non}. \\
All the referenced papers employ regularization techniques to address denoising and deblurring problems, wherein the observed image (datum) and the reconstructed image (solution) belong to the same space. However, our application differs in that the data (sinogram) resides in a lower-dimensional space of the solution; thus, the noise present in the sinogram varies significantly from the noise inherent in the reconstructed image. Indeed, streaking artifacts also impact the reconstructions, originating from the sparsity of the views.
Given that several of the referenced algorithms adjust adaptive weights at each iteration of the solver, leading to heightened computational overhead, they are unsuitable for the application under consideration, where optimizing execution time is paramount for practical purposes.

For these reasons, we propose a new space-variant approach wherein pixel-dependent weights are customized to eliminate noise and streaking artifacts, remaining constant throughout all iterations of the solver.
In our variational model \eqref{eq:variational_formulation}, the space-variant regularizer  $\mathcal{R}(\x) = TV_{\w}(\x)$ is defined as: 
	\begin{align}\label{eq:TVweighted}
		TV_{\w}(\x) := \sum_{i=1}^n \w_i \sqrt{\left( \D_h \x \right)_i^2 + \left( \D_v \x \right)_i^2} = || \w \odot | \D \x | ||_1,
	\end{align}
where $\w = (\w_1, \dots, \w_n) \in \R^n$ is the vector of weights, and $\odot$ is the element-wise product.
\subsection{Uniqueness of the solution}\label{ssec:unicita}

In this Section we establish  the existence and uniqueness of the solution of problem \eqref{eq:variational_formulation} when $\mathcal{F}(\x, \y^\delta) = \frac{1}{2}||\K\x-\y^\delta||_2^2$ and $\R(x)=TV_{\w}(x)$.  To accomplish this, we initially focus on demonstrating the case where the regularization function   $\mathcal{R}(\x)$ correspnds to the TV prior defined in \eqref{eq:TV_definition}. 
In the following, as it is common in imaging science, we will always refer to $\X$ as the non-negative subspace $\X = \{\x \in \R^n; \x_i \geq \boldsymbol{0}, \: \forall i=1, \dots n\}$ of $\R^n$.\\
In this setting, proving the existence of (possibly many) solutions to the problem is trivial, as the objective function is convex on the convex feasible set $\X$. \\
On the contrary, the uniqueness of the optimal solution needs to be investigated more carefully and it passes through some preliminary propositions \cite{jorgensen2015testable}. In this Section, we denote as $\mathcal{J}(\x) := \mathcal{J}(\x, \y^\delta)$ when the explicit dependence on $\y^\delta$ is not relevant. Moreover, we denote as $\mathcal{J}^* = \min_{\x\in\X} \mathcal{J}(\x)$, whereas $\mathcal{M}:= \{ \x\in\X | \mathcal{J}(\x) = \mathcal{J}^*\}$ represents the set of the global minimizers, which is not empty since $\mathcal{J}$ is convex and coercive. We can now state the first important results.

\begin{proposition}\label{prop:disugJ}  
    For any $\x_1, \x_2 \in \R^n$, it holds:
    \begin{equation*}
        \mathcal{J} \left( \frac{\x_1 + \x_2}{2}\right) \leq \frac{1}{2} \left( \mathcal{J}(\x_1) + \mathcal{J}(\x_2)\right) - \frac{1}{8} ||\K\x_1 - \K\x_2||_2^2.
    \end{equation*}
\end{proposition}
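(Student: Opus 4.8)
The plan is to exploit the additive structure $\mathcal{J}(\x) = \mathcal{F}(\x,\y^\delta) + \lambda\, TV_{\w}(\x)$ and to treat the quadratic data term and the regularizer separately, since only the former carries curvature. I would first dispose of the regularizer: $TV_{\w}$ is a weighted sum of Euclidean norms of the linear images $(\D_h\x)_i, (\D_v\x)_i$, hence convex as a composition of a norm with a linear map. Midpoint convexity then yields $TV_{\w}\!\left(\frac{\x_1+\x_2}{2}\right) \le \frac{1}{2}\left(TV_{\w}(\x_1)+TV_{\w}(\x_2)\right)$ with no correction term, so the entire deficit $-\frac18||\K\x_1-\K\x_2||_2^2$ in the claimed inequality must be produced by $\mathcal{F}$ alone.

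For the data-fidelity term I would argue by a direct parallelogram-law computation rather than by a generic strong-convexity bound, because the constant $\frac18$ is sharp and I want an exact identity. Writing $\p_i := \K\x_i - \y^\delta$, the residual at the midpoint is exactly $\frac{\p_1+\p_2}{2}$, so $\mathcal{F}\!\left(\frac{\x_1+\x_2}{2}\right) = \frac{1}{8}||\p_1+\p_2||_2^2$. Applying the parallelogram identity $||\p_1+\p_2||_2^2 = 2||\p_1||_2^2 + 2||\p_2||_2^2 - ||\p_1-\p_2||_2^2$ and noting that $\p_1 - \p_2 = \K(\x_1-\x_2) = \K\x_1 - \K\x_2$ (the $\y^\delta$ cancels), this rearranges into the \emph{exact} equality $\mathcal{F}\!\left(\frac{\x_1+\x_2}{2}\right) = \frac{1}{2}\left(\mathcal{F}(\x_1)+\mathcal{F}(\x_2)\right) - \frac{1}{8}||\K\x_1-\K\x_2||_2^2$.

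Finally I would add the two estimates, multiplying the $TV_{\w}$ convexity inequality by $\lambda>0$, and recombine the right-hand sides into $\frac{1}{2}(\mathcal{J}(\x_1)+\mathcal{J}(\x_2))$, leaving the single negative term intact. I do not expect any genuine obstacle: the statement is asserted over all of $\R^n$, so the non-negativity set $\X$ plays no role, and convexity of $TV_{\w}$ is standard. The only point requiring slight care is verifying that the $\frac18$ generated by the quadratic term matches exactly the $\frac18$ in the claim, which is precisely why I favour the exact parallelogram identity over a looser curvature estimate. This sharp identity is also the mechanism that will later force any two minimizers to share the same image $\K\x$, which is the engine of the subsequent uniqueness argument.
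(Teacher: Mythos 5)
Your proposal is correct and follows essentially the same route as the paper: midpoint convexity of the regularizer (the paper invokes the triangle inequality for the norm $\|\cdot\|_{2,1}$, you invoke convexity of a norm composed with a linear map) combined with an exact computation for the quadratic data term, where the paper's expansion and completion of squares of $2\langle \K\x_1-\y^\delta,\K\x_2-\y^\delta\rangle$ is precisely the parallelogram identity you cite by name. No substantive difference.
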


\begin{proof}
    The proof is a simple algebraic manipulation of the objective function. Indeed, since $TV(\x) = || \D \x ||_{2, 1}$ is a norm, by the triangular inequality we get:
    \begin{align*}
        \mathcal{J} \left( \frac{\x_1 + \x_2}{2}\right) & = \frac{1}{2} \left\| \K \left( \frac{\x_1 + \x_2}{2} \right) - \y^\delta \right\|_2^2 + \lambda \mathcal{R}\left( \frac{\x_1 + \x_2}{2}\right) \\
        &
        \leq \frac{1}{2} \left\| \K \left( \frac{\x_1 + \x_2}{2} \right) - \y^\delta \right\|_2^2 + \frac{\lambda}{2}\left( \mathcal{R}(\x_1) +\mathcal{R}(\x_2) \right).
    \end{align*}
    Moreover, the linearity of $\K$ gives:
    \begin{align*}
        \frac{1}{2} \left\| \K \left( \frac{\x_1 + \x_2}{2} \right) - \y^\delta \right\|_2^2 & = \frac{1}{2} \left\| \frac{\K\x_1 + \K\x_2 - 2\y^\delta}{2} \right\|_2^2 \\ &= \frac{1}{8} \left\| \left(\K\x_1 - \y^\delta \right) + \left(\K\x_2 - \y^\delta \right) \right\|_2^2 \\ & = \frac{1}{8} \left( \left\| \K\x_1 - \y^\delta \right\|_2^2 + \left\| \K\x_2 - \y^\delta \right\|_2^2 + 2 \langle \K\x_1 - \y^\delta, \K\x_2 - \y^\delta \rangle \right).
    \end{align*}
    Note that, by the bilinearity of the scalar product:
    \begin{align*}
        \langle \K\x_1 - \y^\delta, \K\x_2 - \y^\delta \rangle = \langle \K\x_1, \K\x_2 \rangle + \langle \y^\delta, \y^\delta \rangle - \langle \K\x_1, \y^\delta \rangle - \langle \K\x_2, \y^\delta \rangle.
    \end{align*}
    Now we can complete the square of $\langle \K\x_1, \K\x_2 \rangle$, $\langle \K\x_1, \y^\delta \rangle$, and $\langle \K\x_2, \y^\delta \rangle$ by adding and subtracting $\frac{|| \K\x_1 ||_2^2}{2}$ and $\frac{|| \K\x_2 ||_2^2}{2}$ to obtain:
    \begin{align*}
        2\langle \K\x_1 - \y^\delta, \K\x_2 - \y^\delta \rangle = - || \K\x_1 - \K\x_2 ||_2^2 + || \K\x_1 - \y^\delta ||_2^2 + || \K\x_2 - \y^\delta ||_2^2.
    \end{align*}
    Substituting this value in the inequality above gives:
    \begin{align*}
        \mathcal{J} \left( \frac{\x_1 + \x_2}{2}\right) &\leq \frac{1}{2} \left\| \K \left( \frac{\x_1 + \x_2}{2} \right) - \y^\delta \right\|_2^2 + \frac{\lambda}{2}\left( \mathcal{R}(\x_1) +\mathcal{R}(\x_2) \right) \\
        &= \frac{1}{2} \left( \frac{1}{2}\left\| \K\x_1 - \y^\delta \right\|_2^2 + \frac{1}{2}\left\| \K\x_2 - \y^\delta \right\|_2^2 - \frac{1}{4}\left\| \K\x_1 - \K\x_2 \right\|_2^2 \right) + \frac{\lambda}{2}\left( \mathcal{R}(\x_1) +\mathcal{R}(\x_2) \right).
    \end{align*}
    Grouping together the fidelities with the corresponding regularization terms to form $\mathcal{J}$, we finally obtain:
    \begin{align*}
        \mathcal{J} \left( \frac{\x_1 + \x_2}{2}\right) \leq \frac{1}{2} \left( \mathcal{J}(\x_1) + \mathcal{J}(\x_2) \right) - \frac{1}{8} || \K\x_1 - \K\x_2||_2^2.
    \end{align*}
\end{proof}

An important consequence of this Proposition is that the difference of any two minimizers of $\mathcal{J}(\x)$ must lie in the kernel of $\K$.
\begin{proposition} \label{prop:x1x2InKer} 
    For any $\x_1, \x_2 \in \mathcal{M}$, it holds $\x_1-\x_2 \in \ker(\K)$.
\end{proposition}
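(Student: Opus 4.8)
The plan is to use Proposition \ref{prop:disugJ} directly, exploiting the negative quadratic correction term it provides. The essential observation is that the midpoint of two minimizers cannot do better than the minimal value, yet the inequality in Proposition \ref{prop:disugJ} says it does strictly better unless $\K\x_1 = \K\x_2$.

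First I would recall that, since $\X$ is assumed convex and $\x_1, \x_2 \in \mathcal{M} \subseteq \X$, the midpoint $\frac{\x_1 + \x_2}{2}$ again belongs to $\X$, hence is a feasible point. By the very definition of $\mathcal{J}^* = \min_{\x \in \X} \mathcal{J}(\x)$, this yields the lower bound
\begin{align*}
    \mathcal{J}^* \leq \mathcal{J}\left( \frac{\x_1 + \x_2}{2}\right).
\end{align*}
Next I would invoke Proposition \ref{prop:disugJ} together with the fact that $\x_1, \x_2$ are minimizers, so $\mathcal{J}(\x_1) = \mathcal{J}(\x_2) = \mathcal{J}^*$. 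Substituting these values into the right-hand side gives
\begin{align*}
    \mathcal{J}\left( \frac{\x_1 + \x_2}{2}\right) \leq \frac{1}{2}\left( \mathcal{J}^* + \mathcal{J}^* \right) - \frac{1}{8} \| \K\x_1 - \K\x_2 \|_2^2 = \mathcal{J}^* - \frac{1}{8} \| \K\x_1 - \K\x_2 \|_2^2.
\end{align*}

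Chaining the two displayed inequalities produces $\mathcal{J}^* \leq \mathcal{J}^* - \frac{1}{8}\|\K\x_1 - \K\x_2\|_2^2$, which forces $\frac{1}{8}\|\K\x_1 - \K\x_2\|_2^2 \leq 0$. Since a squared norm is non-negative, I conclude $\|\K\x_1 - \K\x_2\|_2 = 0$, i.e. $\K\x_1 = \K\x_2$, and therefore $\K(\x_1 - \x_2) = \boldsymbol{0}$, which is exactly the claim $\x_1 - \x_2 \in \ker(\K)$. I do not expect any genuine obstacle here: the whole proof is a two-line consequence of the previous proposition. The only point deserving care is the feasibility of the midpoint, which is why the convexity assumption on $\X$ is used; without it the lower bound $\mathcal{J}^* \leq \mathcal{J}(\frac{\x_1+\x_2}{2})$ could fail and the argument would break down.
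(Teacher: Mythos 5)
Your proof is correct and follows essentially the same route as the paper: both apply Proposition \ref{prop:disugJ} with $\mathcal{J}(\x_1)=\mathcal{J}(\x_2)=\mathcal{J}^*$ and chain against the lower bound $\mathcal{J}^* \leq \mathcal{J}\left(\frac{\x_1+\x_2}{2}\right)$ to force $\|\K\x_1-\K\x_2\|_2=0$. The only cosmetic difference is that the paper obtains the lower bound by noting the midpoint lies in $\mathcal{M}$ (via convexity of $\mathcal{J}$), whereas you only use feasibility of the midpoint, which is a marginally lighter but equivalent observation.
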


\begin{proof}
    Since $\mathcal{J}(\x)$ is convex, the set $\mathcal{M}$ of its minima is convex. Consequently, if both $\x_1, \x_2 \in \mathcal{M}$, also $\frac{\x_1 + \x_2}{2} \in \mathcal{M}$. 
    However, by Proposition \ref{prop:disugJ}, we get:
    \begin{align*}
        \mathcal{J} \left( \frac{\x_1 + \x_2}{2}\right) & \leq \frac{1}{2} \left( \mathcal{J}(\x_1) + \mathcal{J}(\x_2) \right) - \frac{1}{8} || \K\x_1 - \K\x_2||_2^2,
    \end{align*}
    where:
    \begin{align*}
        \mathcal{J}(\x_1) = \mathcal{J}(\x_2) = \mathcal{J} \left( \frac{\x_1 + \x_2}{2}\right) = \mathcal{J}^*.
    \end{align*}
    Consequently, we get: 
     \begin{align*}
        \mathcal{J}^* \leq \frac{1}{2} \left( \mathcal{J}^* + \mathcal{J}^* \right) - \frac{1}{4} || \K\x_1 - \K\x_2||_2^2 = \mathcal{J}^* - \frac{1}{8} || \K\x_1 - \K\x_2||_2^2 
    \end{align*}
     and it holds if and only if $|| \K\x_1 - \K\x_2||_2^2 = 0$, i.e. if $\K \left( \x_1 - \x_2 \right) = \boldsymbol{0}$ which concludes the proof.   
\end{proof}

Note that, since $\K$ is undersampled by assumption, $\ker(\K)$ is not trivial in general, which implies that Proposition \ref{prop:x1x2InKer} is not sufficient to prove that the solution of \eqref{eq:variational_formulation} is unique. However, Proposition \ref{prop:x1x2InKer} implies an interesting property of the TV regularizer.
\begin{proposition} \label{prop:RUguali} 
    For any $\x_1, \x_2 \in \mathcal{M}$, it holds $\mathcal{R}(\x_1) =\mathcal{R}(\x_2)$.
\end{proposition}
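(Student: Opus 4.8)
The plan is to exploit the two preceding propositions together with the positivity of $\lambda$ and the purely additive structure of $\mathcal{J}$. The essential observation is that once the data-fidelity contributions of $\x_1$ and $\x_2$ are shown to coincide, the equality of the objective values at the two minimizers forces the regularization contributions to coincide as well.

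First I would record that, by hypothesis, $\x_1, \x_2 \in \mathcal{M}$ yields $\mathcal{J}(\x_1) = \mathcal{J}(\x_2) = \mathcal{J}^*$. Next, invoking Proposition \ref{prop:x1x2InKer}, the difference $\x_1 - \x_2$ lies in $\ker(\K)$, so that $\K\x_1 = \K\x_2$. Consequently the two fidelity terms agree, i.e. $\frac{1}{2}\|\K\x_1 - \y^\delta\|_2^2 = \frac{1}{2}\|\K\x_2 - \y^\delta\|_2^2$, since they are the same function evaluated at the same point $\K\x_1 = \K\x_2$.

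Finally, writing $\mathcal{J}(\x_j) = \frac{1}{2}\|\K\x_j - \y^\delta\|_2^2 + \lambda \mathcal{R}(\x_j)$ for $j = 1, 2$ and subtracting the two expressions, the equal fidelity terms cancel and the identity $\mathcal{J}(\x_1) = \mathcal{J}(\x_2)$ collapses to $\lambda \mathcal{R}(\x_1) = \lambda \mathcal{R}(\x_2)$. Dividing by $\lambda > 0$ delivers $\mathcal{R}(\x_1) = \mathcal{R}(\x_2)$, as claimed.

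I do not expect any genuine obstacle here: the statement is an immediate corollary of Proposition \ref{prop:x1x2InKer}, and the only point requiring care is that $\lambda$ is strictly positive, so that the final division is legitimate. I would also remark that the argument uses nothing specific to the choice $\mathcal{R} = TV$ beyond the decomposition of $\mathcal{J}$ into fidelity plus regularization; the same reasoning therefore transfers verbatim to the weighted functional $TV_{\w}$ studied later.
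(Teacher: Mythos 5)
Your proof is correct and follows the same route as the paper: apply Proposition \ref{prop:x1x2InKer} to conclude $\K\x_1 = \K\x_2$, hence equal fidelity terms, and then cancel them from $\mathcal{J}(\x_1) = \mathcal{J}(\x_2) = \mathcal{J}^*$ to isolate $\lambda\mathcal{R}$. Your explicit remark that $\lambda > 0$ is needed for the final division is a small but worthwhile addition over the paper's terser version.
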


\begin{proof}
    Since we assume $\x_1, \x_2 \in \mathcal{M}$, Proposition \ref{prop:x1x2InKer} suggests that $\x_1 - \x_2 \in \ker(\K)$, i.e. $\K\x_1 = \K\x_2$ and it implies that 
    $\mathcal{F}(\x_1, \y^\delta) = \mathcal{F}(\x_2, \y^\delta)$. Moreover, by convexity, $\mathcal{J}(\x_1) = \mathcal{J}(\x_2) = \mathcal{J}^*$, hence  
    the result trivially follows.
\end{proof}
A last step to prove the uniqueness of the solution to the TV-grounded Problem \eqref{eq:variational_formulation} requires fixing  some preliminary notations.
Denoting  by $I_h = \{1, \dots, n\}$ and $I_v = \{n+1, \dots, 2n\}$ the sets of indices of $\D\x$ related to the horizontal and vertical derivatives of $\x$, the operator $\hat{\D}\x: \R^n \to \R^{2n}$ is defined as follows:
\begin{align}\label{eq:Dhatx_definition1}
    \forall\ i\in I_h, \quad ( \hat{\D}\x )_i = ( \hat{\D}_h \x )_i := \begin{cases} 
        \frac{\left( \D_h\x \right)_i}{\left(| \D \x | \right)_i} & \mbox{if} \left(| \D \x | \right)_i \neq 0, \\
        \frac{1}{2} & \mbox{otherwise},
    \end{cases}
\end{align}
\begin{align}\label{eq:Dhatx_definition2}
        \forall\ i\in I_v,  \quad ( \hat{\D}\x )_i = ( \hat{\D}_v \x )_{i-n} :=\begin{cases}
        \frac{\left( \D_v\x \right)_{i-n}}{\left(| \D \x | \right)_{i-n}} & \mbox{if} \left(| \D \x | \right)_{i-n} \neq 0, \\
        \frac{1}{2} & \mbox{otherwise},
    \end{cases}
\end{align}
where $\left(| \D \x | \right)_i$ is the $i$-th element of the gradient magnitude of $\x$, as defined in \eqref{eq:gradient_magnitude}. The importance of $\hat{\D}$ resides in the following property.


\begin{proposition}\label{prop:norm21scalarprod}  
    For any $\x \in \X$, it holds $||\D\x||_{2,1} = \langle \hat{\D}\x, \D\x \rangle$.
\end{proposition}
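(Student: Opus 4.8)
The plan is to evaluate the inner product $\langle \hat{\D}\x, \D\x \rangle$ directly, exploiting the block structure of $\D\x$ in \eqref{eq:Dx_definition}. First I would split the sum over the $2n$ components into the horizontal block $I_h$ and the vertical block $I_v$, and then regroup the two contributions arising from the same pixel $i \in \{1, \dots, n\}$. Using the identifications $(\D\x)_i = (\D_h\x)_i$ for $i \in I_h$ and $(\D\x)_i = (\D_v\x)_{i-n}$ for $i \in I_v$ (stated just after \eqref{eq:Dx_definition}), each pixel contributes a single term, so that $\langle \hat{\D}\x, \D\x \rangle = \sum_{i=1}^n \big[ (\hat{\D}_h\x)_i (\D_h\x)_i + (\hat{\D}_v\x)_i (\D_v\x)_i \big]$.

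Next I would analyze this per-pixel term through the two-case definition of $\hat{\D}$ in \eqref{eq:Dhatx_definition1}--\eqref{eq:Dhatx_definition2}. When $(|\D\x|)_i \neq 0$, substituting the quotient definitions yields $\frac{(\D_h\x)_i^2 + (\D_v\x)_i^2}{(|\D\x|)_i}$; recalling from \eqref{eq:gradient_magnitude} that the numerator equals $(|\D\x|)_i^2$, the term collapses to exactly $(|\D\x|)_i$. The objective is then to verify that this same identity persists in the degenerate case, so that summing over $i$ recovers $\sum_{i=1}^n (|\D\x|)_i = ||\D\x||_{2,1}$ by \eqref{eq:TV_definition}.

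The only point requiring genuine care — and the reason the value $\frac{1}{2}$ is assigned in \eqref{eq:Dhatx_definition1}--\eqref{eq:Dhatx_definition2} rather than left undefined — is the case $(|\D\x|)_i = 0$. There $(\D_h\x)_i = (\D_v\x)_i = 0$ simultaneously, so the per-pixel contribution is $\frac{1}{2}\cdot 0 + \frac{1}{2}\cdot 0 = 0$, which coincides with $(|\D\x|)_i = 0$. Hence the arbitrary choice of $\frac{1}{2}$ is immaterial to the value of the inner product, as it multiplies vanishing derivative components. With both cases producing the common value $(|\D\x|)_i$, the summation immediately gives the claimed equality. I do not anticipate any real obstacle, since the statement reduces to a pixelwise algebraic identity; the sole subtlety is notational, namely correctly tracking the index shift between the $I_v$ block of $\D\x$ and the vertical derivative $\D_v\x$.
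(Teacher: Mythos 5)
Your proposal is correct and follows essentially the same route as the paper's proof: both evaluate $\langle \hat{\D}\x, \D\x \rangle$ componentwise, regroup the horizontal and vertical contributions pixel by pixel, collapse the nondegenerate terms to $\frac{(\D_h\x)_i^2 + (\D_v\x)_i^2}{(|\D\x|)_i} = (|\D\x|)_i$, and observe that the case $(|\D\x|)_i = 0$ contributes zero because the derivative components themselves vanish, making the choice of $\frac{1}{2}$ immaterial. No gaps; your explicit handling of the index shift between the $I_v$ block and $\D_v\x$ is, if anything, slightly more careful than the paper's.
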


\begin{proof}
    First of all, by definition, we get:
    \begin{align*}
        ||\D\x||_{2,1} = \sum_{i=1}^n \left(| \D\x |\right)_i.
    \end{align*}
    whereas:
    \begin{align*}
        \langle \hat{\D}\x, \D\x \rangle = \sum_{i=1}^{2n} (\hat{\D}\x )_i \left(\D\x \right)_i.
    \end{align*}
    Note that, for any $i \in I_h$ such that $\left(|\D\x\right|)_i \neq 0$ it holds $(\hat{\D}\x )_i \left(\D\x \right)_i = \frac{\left(\D\x \right)_i^2}{\left(| \D \x |\right)_i}$, and similarly $(\hat{\D}\x )_i \left(\D\x \right)_i = \frac{\left(\D\x \right)_i^2}{\left(| \D \x |\right)_{i-n}}$ for any $i \in I_v$ such that $\left(|\D\x\right|)_{i-n} \neq 0$.
    On the contrary, $(\hat{\D}\x )_i \left(\D\x \right)_i = 0$ whenever $\left(|\D\x\right|)_i = 0$ for $i \in I_h$ and whenever $\left(|\D\x\right|)_{i-n} = 0$ for $i \in I_v$. 
    Consequently:
    \begin{align*}
        \sum_{i=1}^{2n} (\hat{\D}\x )_i \left(\D\x \right)_i & = \sum_{i: \left(|\D\x\right|)_i \neq 0} \frac{\left(\D\x \right)_i^2}{\left(| \D \x |\right)_i} =  \sum_{\substack{i \in I_h \\ \left(|\D\x\right|)_i \neq 0}} \frac{\left(\D_h\x \right)_i^2}{\left(| \D \x |\right)_i} + \sum_{\substack{i \in I_v \\ \left(|\D\x\right|)_{i-n} \neq 0}} \frac{\left(\D_v\x \right)_{i-n}^2}{\left(| \D \x |\right)_{i-n}}
        \\
        &
        = \sum_{\substack{i \in I_h \\ \left(|\D\x\right|)_i \neq 0}} \frac{\left(\D_h\x \right)_i^2 + \left(\D_v\x \right)_i^2}{\left(| \D \x |\right)_i} 
        = \sum_{\substack{i \in I_h \\ \left(|\D\x\right|)_i \neq 0}} \frac{\left( | \D\x | \right)_i^2}{\left(| \D \x |\right)_i} + \sum_{\substack{i \in I_h \\ \left(|\D\x\right|)_i = 0}} \left(|\D\x\right|)_i \\ 
        &
        = \sum_{i=1}^n \left(|\D\x\right|)_i = || \D\x ||_{2, 1},
    \end{align*}
    which concludes the proof.
\end{proof}

With these tools, we can now prove the last preliminary propositions, which will allow to prove the main result for this Section.
\begin{proposition}\label{prop:S1_vectorialspace}  
    Let $\x_1\in\mathcal{M}$ and $\bar{I} = \{ i \in \{1, \dots, n\} \ | \ \left(| \D \x_1 |\right)_i = 0 \}$. Considering the set $S_1 := \{ \boldsymbol{v} \in \R^n \: | \:  \ \left(| \D \boldsymbol{v}|\right)_i = 0\ \forall\ i\in\bar{I} \}$. Then $S_1$ is a vectorial space and $\x_1\in S_1$.
\end{proposition}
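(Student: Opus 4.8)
The plan is to exploit the fact that the gradient-magnitude constraint defining $S_1$, despite its nonlinear appearance, is equivalent to a system of homogeneous linear equations. The key observation I would establish first is that for any $\boldsymbol{v} \in \R^n$ and any index $i$, the quantity $(|\D\boldsymbol{v}|)_i = \sqrt{(\D_h\boldsymbol{v})_i^2 + (\D_v\boldsymbol{v})_i^2}$ vanishes if and only if both $(\D_h\boldsymbol{v})_i = 0$ and $(\D_v\boldsymbol{v})_i = 0$. This is immediate, since the Euclidean norm of the two-vector $((\D_h\boldsymbol{v})_i, (\D_v\boldsymbol{v})_i)$ is zero precisely when that vector is zero.

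With this reformulation, $S_1 = \{\boldsymbol{v} \in \R^n : (\D_h\boldsymbol{v})_i = 0 \text{ and } (\D_v\boldsymbol{v})_i = 0 \ \forall i \in \bar I\}$. I would then argue that this set is a vector space by writing it as an intersection of kernels of linear maps: letting $P_{\bar I}$ denote the coordinate-selection (projection) operator picking out the entries indexed by $\bar I$, we have $S_1 = \ker(P_{\bar I}\D_h) \cap \ker(P_{\bar I}\D_v)$. Since $\D_h$, $\D_v$ and $P_{\bar I}$ are all linear, each kernel is a linear subspace of $\R^n$, and an intersection of subspaces is again a subspace; in particular $S_1$ is closed under addition and scalar multiplication and contains $\boldsymbol{0}$. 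Alternatively, I could verify the subspace axioms directly: if $\boldsymbol{v}, \boldsymbol{w} \in S_1$ and $\alpha, \beta \in \R$, then for every $i \in \bar I$ linearity gives $(\D_h(\alpha\boldsymbol{v} + \beta\boldsymbol{w}))_i = \alpha(\D_h\boldsymbol{v})_i + \beta(\D_h\boldsymbol{w})_i = 0$ and likewise for $\D_v$, so $(|\D(\alpha\boldsymbol{v}+\beta\boldsymbol{w})|)_i = 0$ and hence $\alpha\boldsymbol{v}+\beta\boldsymbol{w} \in S_1$.

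Finally, the membership $\x_1 \in S_1$ is essentially tautological: by the very definition of $\bar I$ we have $(|\D\x_1|)_i = 0$ for every $i \in \bar I$, which is exactly the condition characterizing $S_1$. The only substantive point in the whole argument — and the one I would state carefully — is the equivalence between the single scalar equation $(|\D\boldsymbol{v}|)_i = 0$ and the pair of linear equations $(\D_h\boldsymbol{v})_i = (\D_v\boldsymbol{v})_i = 0$; once this is made explicit, the linear-algebraic structure of $S_1$ follows with no further work. I do not anticipate any genuine obstacle, since no optimality or convexity input is required for this particular statement: the proposition is a purely structural observation, which will presumably be combined downstream with Proposition~\ref{prop:x1x2InKer} to pin down the minimizer.
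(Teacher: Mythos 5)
Your proof is correct and follows essentially the same route as the paper's: a direct verification of the subspace axioms, with the membership $\x_1\in S_1$ being tautological. One point in your favour: the paper's own proof asserts the identity $\left(|\D\boldsymbol{t}|\right)_i=\alpha\left(|\D\boldsymbol{v}|\right)_i+\beta\left(|\D\boldsymbol{w}|\right)_i$ for $\boldsymbol{t}=\alpha\boldsymbol{v}+\beta\boldsymbol{w}$, which is literally false since the gradient magnitude $|\D\cdot|$ defined in \eqref{eq:gradient_magnitude} is not linear (the correct bound is $\left(|\D\boldsymbol{t}|\right)_i\leq|\alpha|\left(|\D\boldsymbol{v}|\right)_i+|\beta|\left(|\D\boldsymbol{w}|\right)_i$, which still yields zero). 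Your reduction of the single scalar condition $\left(|\D\boldsymbol{v}|\right)_i=0$ to the pair of linear equations $(\D_h\boldsymbol{v})_i=(\D_v\boldsymbol{v})_i=0$, exhibiting $S_1$ as an intersection of kernels, is the cleaner and fully rigorous way to make the same argument go through.
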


\begin{proof}
    Clearly, $\x_1 \in S_1$ since $\left(| \D \x_1|\right)_i = 0$ for any $i \in \bar{I}$. Moreover, $\boldsymbol{0} \in S_1$ since $\left(| \D \boldsymbol{0}|\right)_i$ equals $0$ for any $i = 1, \dots, n$ and in particular for any $i \in \bar{I}$. Now, let $\boldsymbol{v}, \boldsymbol{w} \in S_1$ and $\alpha, \beta \in \R$. Then $\boldsymbol{t} := \alpha\boldsymbol{v} + \beta \boldsymbol{w} \in S_1$ since $\left(| \D \boldsymbol{t}|\right)_i = \alpha \left(| \D \boldsymbol{v}|\right)_i + \beta \left(| \D \boldsymbol{w}|\right)_i = 0$ for any $i \in \bar{I}$. This concludes the proof. 
\end{proof}

We can now prove the uniqueness of the solution of an underdetermined inverse problem with TV regularization, with the following Theorem.
\begin{theorem}[Unicity theorem]\label{theo:unicita}
    Let $\x_1\in\mathcal{M}$ and assume:
    \begin{enumerate}[label=(\roman*)]
        \item $\exists \ \z \in \R^n$ s.t. $\D^T \hat{\D} \x_1 = \K^T \z$, 
        \item $\ker(\K) \cap S_1 = \{ \boldsymbol{0} \}$.
    \end{enumerate}
    Then the solution of Problem \eqref{eq:variational_formulation} with $\mathcal{F}(\x, \y^\delta) = \frac{1}{2}||\K\x-\y^\delta||_2^2$ and $\R(x)=TV(x)$ is unique, i.e.  $\mathcal{M} = \{\x_1\}$.
\end{theorem}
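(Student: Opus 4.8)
The plan is to take an arbitrary second minimizer $\x_2 \in \mathcal{M}$ and to show that it must coincide with $\x_1$. The endgame will be assumption (ii): if I can establish that $\x_1 - \x_2 \in S_1$, then, since Proposition \ref{prop:x1x2InKer} already guarantees $\x_1 - \x_2 \in \ker(\K)$, the hypothesis $\ker(\K) \cap S_1 = \{\boldsymbol{0}\}$ immediately forces $\x_1 = \x_2$. Because $S_1$ is a vector space containing $\x_1$ (Proposition \ref{prop:S1_vectorialspace}), this is equivalent to proving $\x_2 \in S_1$, that is, $(|\D\x_2|)_i = 0$ at every index $i \in \bar{I}$ where the gradient magnitude of $\x_1$ vanishes.

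The central tool I would introduce is the one-sided inequality
\[
    \langle \hat{\D}\x_1, \D\x \rangle \leq ||\D\x||_{2,1}, \qquad \forall\, \x \in \X,
\]
proved pixel-by-pixel. At each index $i$, the two-component vector $\big((\hat{\D}_h\x_1)_i, (\hat{\D}_v\x_1)_i\big)$ has Euclidean norm exactly $1$ where $(|\D\x_1|)_i \neq 0$, and norm $\frac{1}{\sqrt{2}}$ (coming from the value $\frac{1}{2}$ assigned in \eqref{eq:Dhatx_definition1}--\eqref{eq:Dhatx_definition2}) where $(|\D\x_1|)_i = 0$; in either case its norm is at most $1$. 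A pixel-wise Cauchy--Schwarz then bounds the $i$-th summand by $(|\D\x|)_i$, and summation yields the inequality. Proposition \ref{prop:norm21scalarprod} records that it becomes an equality at $\x = \x_1$.

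Next I would show that this inequality is in fact saturated at $\x_2$. Using assumption (i), $\D^T\hat{\D}\x_1 = \K^T\z$, I can transfer the operator across the inner product,
\[
    \langle \hat{\D}\x_1, \D\x_2 \rangle = \langle \K^T\z, \x_2 \rangle = \langle \z, \K\x_2 \rangle,
\]
and since $\x_1, \x_2 \in \mathcal{M}$ gives $\K\x_1 = \K\x_2$ by Proposition \ref{prop:x1x2InKer}, the right-hand side equals $\langle \z, \K\x_1 \rangle = \langle \hat{\D}\x_1, \D\x_1 \rangle = ||\D\x_1||_{2,1}$, which coincides with $||\D\x_2||_{2,1}$ by Proposition \ref{prop:RUguali}. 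Hence $\langle \hat{\D}\x_1, \D\x_2 \rangle = ||\D\x_2||_{2,1}$; since every summand is dominated by $(|\D\x_2|)_i$, equality must then hold at each pixel separately.

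The final step --- and the place where the specific construction of $\hat{\D}$ does the real work --- is to read off what this pixel-wise equality says on $\bar{I}$. There the direction vector equals $(\tfrac{1}{2}, \tfrac{1}{2})$, whose norm is $\tfrac{1}{\sqrt{2}} < 1$, so Cauchy--Schwarz yields the strict estimate
\[
    (|\D\x_2|)_i = \tfrac{1}{2}(\D_h\x_2)_i + \tfrac{1}{2}(\D_v\x_2)_i \leq \tfrac{1}{\sqrt{2}}\,(|\D\x_2|)_i .
\]
Because $\tfrac{1}{\sqrt{2}} < 1$ and $(|\D\x_2|)_i \geq 0$, this is only possible if $(|\D\x_2|)_i = 0$, giving $\x_2 \in S_1$ and closing the argument via assumption (ii). I expect this last step to be the crux: the whole theorem hinges on the apparently arbitrary choice of $\frac{1}{2}$ for $\hat{\D}$ at zero-gradient pixels, whose sole role is to keep the local norm strictly below $1$ there, which is precisely what turns a saturated inner product into the vanishing of the gradient of $\x_2$ on $\bar{I}$.
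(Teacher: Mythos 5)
Your proof is correct and follows essentially the same route as the paper's: the same chain of Propositions \ref{prop:norm21scalarprod}, \ref{prop:x1x2InKer} and \ref{prop:RUguali}, hypothesis (i) to move $\D^T\hat{\D}\x_1$ across the inner product, and the pixel-wise Cauchy--Schwarz estimate exploiting that the local norm of $\hat{\D}\x_1$ drops to $1/\sqrt{2}$ on $\bar{I}$. The only difference is organizational: the paper argues by contradiction (a single index $i_0\in\bar{I}$ with $(|\D\x_2|)_{i_0}\neq 0$ forces the strict inequality $\mathcal{R}(\x_1)<\mathcal{R}(\x_2)$), whereas you run the argument directly, saturating the inequality at every pixel to conclude $\x_2\in S_1$.
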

\begin{proof}
    Let $\x_1 \in \mathcal{M}$ and we consider $\x_2 \in \mathcal{M}$ to show that necessarily $\x_1 = \x_2$. To this aim, we consider the two complementary cases of $\x_2 \in S_1$ and $\x_2 \notin S_1$. 
    
    If $\x_2 \in S_1$, then $\x_1 - \x_2 \in S_1$ as well, since  $S_1$ is a vectorial space by Proposition \ref{prop:S1_vectorialspace} and it is closed by sum. Moreover, by Proposition \ref{prop:x1x2InKer}, $\x_1 - \x_2 \in \ker(\K)$. Consequently, for the second assumption of this Theorem, $\x_1 = \x_2$. 
    
    If $\x_2 \notin S_1$, there must exist at least an index $i_0 \in \bar{I}$ such that  $\left( | \D\x_2 | \right)_{i_0} \neq 0$. In addition, according to the definition of $\hat{\D}$ given in \eqref{eq:Dhatx_definition1} and \eqref{eq:Dhatx_definition2}, $\left( \D_h \x_1 \right)_{i_0} = \left( \D_v \x_1 \right)_{i_0} = 0$ and $(\hat{\D}_h \x_1)_{i_0} = (\hat{\D}_v \x_1)_{i_0} = \frac{1}{2}$, . \\ 
    Now, recalling the result in Proposition \ref{prop:norm21scalarprod}:
    \begin{align*}
        \mathcal{R}(\x_1) = || \D\x_1 ||_{2,1} = \langle \hat{\D}\x_1, \D\x_1 \rangle,
    \end{align*}
    and by means of adjoint matrix, if $\z \in \R^n$ is the variable defined by the first hypothesis:
    \begin{align*}
        \langle \hat{\D}\x_1, \D\x_1 \rangle = \langle \D^T \hat{\D}\x_1, \x_1 \rangle = \langle \K^T \z, \x_1 \rangle. 
    \end{align*}
    Thus, by Propositions \ref{prop:x1x2InKer} and \ref{prop:norm21scalarprod}, we get:
    \begin{align*}
        \mathcal{R}(\x_1) &= \langle \K^T \z, \x_1 \rangle = \langle \z, \K \x_1 \rangle = \langle \z, \K\x_2 \rangle = \langle \K^T \z, \x_2 \rangle \\ &= \langle \D^T\hat{\D}\x_1, \x_2 \rangle = \langle \hat{\D}\x_1, \D\x_2 \rangle.
    \end{align*}
    In addition, we can derive the following inequality:
    \begin{align*}
        \langle \hat{\D}\x_1, \D\x_2 \rangle & = \sum_{i=1}^{2n} (\hat{\D}\x_1)_i (\D\x_2 )_i \leq \sum_{i=1}^n \sqrt{(\hat{\D}_h\x_1)_i^2 + (\hat{\D}_v\x_1)_i^2} \cdot \sqrt{(\D_h\x_2)_i^2 + (\D_v\x_2)_i^2} \\ &= \sum_{i=1}^n \sqrt{(\hat{\D}_h\x_1)_i^2 + (\hat{\D}_v\x_1)_i^2} \cdot\left( | \D\x_2  | \right)_i,
    \end{align*}
    for the Cauchy-Schwartz inequality applied onto the 2-dimensional vectors $[(\hat{\D}_h\x_1)_i, (\hat{\D}_v\x_1)_i]$ and $[(\D_h\x_2)_i, (\D_v\x_2)_i]$, for all the indexes $i$ separately.
    To conclude, note that by definition of $\hat{\D}$, it holds:
    \begin{align}
        \sqrt{(\hat{\D}_h\x_1)_i^2 + (\hat{\D}_v\x_1)_i^2} = \begin{cases}
            1 \quad & \mbox{if } \left(| \D\x_1 | \right)_i \neq 0, \\
            \frac{1}{\sqrt{2}} \quad & \mbox{if } \left(| \D\x_1 | \right)_i = 0.
        \end{cases}
    \end{align}
    Consequently,
    \begin{align*}
        &
        \sum_{i=1}^n \sqrt{(\hat{\D}_h\x_1)_i^2 + (\hat{\D}_v\x_1)_i^2} \left( | \D\x_2  | \right)_i \\
        &
        = \sum_{i\in\{1,\ldots,n\}\setminus\{i_0\}} \sqrt{(\hat{\D}_h\x_1)_i^2 + (\hat{\D}_v\x_1)_i^2} \left( | \D\x_2  | \right)_i + \sqrt{(\hat{\D}_h\x_1)_{i_0}^2 + (\hat{\D}_v\x_1)_{i_0}^2} \left( | \D\x_2  | \right)_{i_0} \\ 
        &
        \leq \sum_{i\in\{1,\ldots,n\}\setminus\{i_0\}} \left( | \D\x_2  | \right)_i + \sqrt{(\hat{\D}_h\x_1)_{i_0}^2 + (\hat{\D}_v\x_1)_{i_0}^2} \left( | \D\x_2  | \right)_{i_0} \\ 
        &
        < \sum_{i\in\{1,\ldots,n\}\setminus\{i_0\}} \left( | \D\x_2  | \right)_i + \left( | \D\x_2  | \right)_{i_0} = \sum_{i = 1}^n \left( | \D\x_2  | \right)_i = || \D\x_2 ||_{2, 1} = \mathcal{R}(\x_2).
    \end{align*}
    Bringing together the previous inequalities, we have demonstrated that $\mathcal{R}(\x_1) < \mathcal{R}(\x_2)$, which is a contradiction with Proposition \ref{prop:RUguali}, where we proved that if $\x_1, \x_2 \in \mathcal{M}$, then $\mathcal{R}(\x_1) = \mathcal{R}(\x_2)$. Thus, every element of $\mathcal{M}$ must be in $S_1$ and it implies that $\mathcal{M} = \{ \x_1 \}$.
\end{proof}
Note that the first condition of Theorem \ref{theo:unicita} is satisfied whenever $\D^T \hat{\D} \x_1 \in Rg(\K^T)$, the range of $\K^T$. We recall that $Rg(\K^T) = \ker(\K)^\perp$, whose dimension depends on the number of CT acquisitions performed by $\K$. Consequently, Theorem \ref{theo:unicita} implies that, whenever the number of acquisitions is sufficiently large, then the solution to Problem \eqref{eq:variational_formulation} is unique. \\
In addition, Theorem \ref{theo:unicita} can be simply extended to prove the uniqueness of the solution of any space-variant TV regularized model, i.e. when $\R(x)=TV_{\w}(x)$.  
To achieve  this, consider the matrix $\W$ defined as a $2n \times 2n$ diagonal matrix with two equal diagonal blocks of size $n \times n$, where the elements $\left( \w \right)_i$ reside on the diagonal. Observing  that: 
\begin{align*}
    || \w \odot | \D \x | ||_1= || \W \D \x||_{2,1},
\end{align*}
we can state the following Corollary.
\begin{corollary}
    Replacing $\D$ with $\W \D$ in the assumptions of Theorem \ref{theo:unicita}, then the problem:
    \begin{align}\label{eq:MinWeighted}
        \min_{\x \in \X} \mathcal{J}(\x, \y^\delta) := \frac{1}{2}|| \K\x - \y^\delta ||_2^2 + \lambda  || \w \odot | \D \x | ||_1
    \end{align}
    always admits a unique solution in $\X = \{\x \in \R^n; \x_i \geq \boldsymbol{0} \ \forall i=1, \dots n\}$.
\end{corollary}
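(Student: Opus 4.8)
The plan is to reduce the weighted problem \eqref{eq:MinWeighted} to the unweighted one already settled by Theorem \ref{theo:unicita}, by regarding $\W\D$ as a single linear operator. The starting point is the identity recorded just above the statement, $|| \w \odot |\D\x| ||_1 = || \W\D\x ||_{2,1}$, so that the regularizer in \eqref{eq:MinWeighted} is exactly the $||\cdot||_{2,1}$-norm of the image of $\x$ under the linear map $\W\D : \R^n \to \R^{2n}$. Since $\W$ is block-diagonal with both $n\times n$ blocks equal to $\mathrm{diag}(\w)$, this map preserves the horizontal/vertical block structure of $\D$: its first $n$ components are $\w_i(\D_h\x)_i$ and its last $n$ components are $\w_i(\D_v\x)_i$, whence the magnitude image satisfies $(|\W\D\x|)_i = \w_i\,(|\D\x|)_i$. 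This is essentially the only computation specific to the weighted setting; the rest of the argument proceeds by substitution.

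First I would check that Propositions \ref{prop:disugJ}--\ref{prop:RUguali} survive the substitution untouched. Their proofs never use any property of $\D$ beyond the fact that $\x \mapsto ||\D\x||_{2,1}$ is convex and obeys the triangle inequality; since $\W\D$ is linear and $||\cdot||_{2,1}$ is a norm, the composition $\x \mapsto ||\W\D\x||_{2,1}$ is a seminorm and therefore inherits both absolute homogeneity and the triangle inequality. Thus Proposition \ref{prop:disugJ} (the inequality expressing strong convexity in the variable $\K\x$), Proposition \ref{prop:x1x2InKer} ($\x_1 - \x_2 \in \ker(\K)$ for any two minimizers), and Proposition \ref{prop:RUguali} ($\mathcal{R}(\x_1) = \mathcal{R}(\x_2)$ on $\mathcal{M}$) hold verbatim with $\mathcal{R} = TV_{\w}$; existence of minimizers is inherited from convexity exactly as in the $TV$ case.

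Next I would transcribe the normalization operator: define $\widehat{\W\D}\x$ exactly as in \eqref{eq:Dhatx_definition1}--\eqref{eq:Dhatx_definition2}, but with $\D$ replaced by $\W\D$ and $|\D\x|$ replaced by $|\W\D\x|$. Proposition \ref{prop:norm21scalarprod} is a purely formal identity about a block operator and its pixel-wise normalization, so it yields $||\W\D\x||_{2,1} = \langle \widehat{\W\D}\x, \W\D\x\rangle$; Proposition \ref{prop:S1_vectorialspace} uses only linearity, so the set $S_1 = \{\boldsymbol{v} : (|\W\D\boldsymbol{v}|)_i = 0 \ \forall i \in \bar{I}\}$, with $\bar{I}$ now the zero-set of $|\W\D\x_1|$, is again a vector space containing $\x_1$. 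Reading the two hypotheses of Theorem \ref{theo:unicita} with $\W\D$ and $\widehat{\W\D}$ in place of $\D$ and $\hat{\D}$, its proof then applies line by line: at any index $i_0 \in \bar{I}$ with $(|\W\D\x_2|)_{i_0} \neq 0$ the normalized vector $[(\widehat{\W\D}_h\x_1)_{i_0}, (\widehat{\W\D}_v\x_1)_{i_0}] = [\tfrac12,\tfrac12]$ has Euclidean length $\tfrac{1}{\sqrt 2} < 1$, which is precisely what forces the strict inequality $\mathcal{R}(\x_1) < \mathcal{R}(\x_2)$, contradicting Proposition \ref{prop:RUguali}. Hence $\mathcal{M} = \{\x_1\}$.

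The argument is almost entirely bookkeeping, so the genuine checkpoint — the place I would be most careful — is confirming that $\W\D$ still decomposes into a horizontal and a vertical $n\times n$ block, so that the magnitude \eqref{eq:gradient_magnitude}, the norm \eqref{eq:TV_definition}, and the hat operator \eqref{eq:Dhatx_definition1}--\eqref{eq:Dhatx_definition2} all remain meaningful after substitution; the block-diagonal form of $\W$ is exactly what guarantees this. As a minor remark, if one permits some $\w_i = 0$ the regularizer degenerates to a genuine seminorm, but this disturbs none of the steps above: such indices contribute $(|\W\D\x_2|)_i = 0$ and hence never enter the strict-inequality estimate.
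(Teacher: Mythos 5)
Your proof is correct and takes exactly the route the paper intends: the paper's own justification is the one-line remark that the result follows ``by replacing $\D$ with $\W\D$ in the proof of Theorem \ref{theo:unicita}'', and you have simply carried out that substitution explicitly, verifying the key identity $(|\W\D\x|)_i = \w_i(|\D\x|)_i$ and checking that each preliminary proposition survives. Your closing remark on the degenerate case $\w_i = 0$ is a sensible extra precaution, though in the paper's setting Proposition \ref{prop:w_is_a_scale_term} guarantees the weights are strictly positive, so it is not needed.
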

The proof is trivial by replacing $\D$ with $\W \D$ in the proof of Theorem \ref{theo:unicita}. \\
In the following, we assume that these hypotheses are satisfied by our operator $\K$.

\subsection{On the choice of the weights}\label{ssec:conv}

In this Section we focus on the selection of the weights $\w$ in the TV model \eqref{eq:MinWeighted}, which is a matter of high delicacy.\\
Ideally, $w_i$ should be small if the corresponding pixel $\x_i$ of image to be restored lies either close to an edge of a low-contrast region or on a small detail, so that the regularization strength on that pixel is moderate and the fidelity term recovers the information provided in the data $\y^\delta$. Similarly, to maximize the regularization effect, $\w_i$ should be high on large and uniform regions of $\x$ so that the emerging of streaking artifacts and the noise propagation get removed or hindered.\\
To determine the weights $\w$, we drew inspiration from the paper \cite{sidky2014cttpv}, where the authors successfully utilize the TpV regularization \eqref{eq:TpV} in few-view CT. This regularization method efficiently promotes sparsity in the gradient domain and mitigates the undesired blurring effects often associated with Total Variation regularization, particularly on critical details and edges.  In particular, we establish the weighting factors motivated by the iterative reweighting $\ell_1$-norm strategy \cite{candes2008enhancing,daubechies2010iteratively}, commonly utilized in solving TpV problems, as:
	\begin{align}\label{eq:OurWeights}
		\w_\eta(\tilde{\x}) :=  \left( \frac{\eta}{\sqrt{\eta^2 + | \D \tilde{\x} |^2}} \right)^{1 - p},
	\end{align}
where $\Tilde \x \in \R^n$ is a suitable image and $\eta\ge 0$ is a scalar parameter.

The following statements serve to elucidate the conceptual underpinnings of our weight parameters, formally.

\begin{proposition}\label{prop:w_is_a_scale_term}
    For any $\eta > 0$, any $0 < p < 1$, and any $\tilde{\x} \in \R^n$, 
    $\left(w_\eta(\tilde{\x})\right)_i \in (0, 1]$,  $\forall i = 1, \dots, n$. Moreover, $\left(\w_\eta(\tilde{\x})\right)_i = 1$ on a pixel $i \in 1, \dots, n$ if and only if $| \D \tilde{\x} |_i = 0$.
\end{proposition}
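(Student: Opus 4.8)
The plan is to treat the claim componentwise and reduce it to elementary monotonicity properties of the scalar map $t \mapsto t^{1-p}$. Fix an index $i \in \{1, \dots, n\}$ and write $g_i := \left(|\D\tilde{\x}|\right)_i$ for brevity. By the definition of the gradient magnitude in \eqref{eq:gradient_magnitude}, $g_i \geq 0$, since it is the square root of a sum of squares. The $i$-th component of the weight in \eqref{eq:OurWeights} then reads $\left(\w_\eta(\tilde{\x})\right)_i = b_i^{1-p}$, where $b_i := \eta / \sqrt{\eta^2 + g_i^2}$ is what I will call the \emph{base}.

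First I would establish that $b_i \in (0, 1]$. Since $\eta > 0$ and $g_i^2 \geq 0$, the radicand satisfies $\eta^2 + g_i^2 \geq \eta^2 > 0$, so $\sqrt{\eta^2 + g_i^2} \geq \eta > 0$ is well defined and strictly positive. Positivity of both numerator and denominator gives $b_i > 0$, while the inequality $\sqrt{\eta^2 + g_i^2} \geq \eta$ yields $b_i \leq 1$. Moreover, equality $b_i = 1$ holds exactly when $\sqrt{\eta^2 + g_i^2} = \eta$, i.e. when $g_i^2 = 0$, which (since $g_i \geq 0$) is equivalent to $g_i = 0$; this settles the characterization of the equality case at the level of the base.

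The last step is to transfer these facts through the exponent $1-p$. Because $0 < p < 1$, we have $0 < 1-p < 1$, so the map $t \mapsto t^{1-p}$ is continuous and strictly increasing on $[0, \infty)$, carries $(0,1]$ into $(0,1]$, and fixes the endpoint $1$. Applying it to $b_i \in (0,1]$ gives $\left(\w_\eta(\tilde{\x})\right)_i = b_i^{1-p} \in (0,1]$, and by strict monotonicity $b_i^{1-p} = 1$ if and only if $b_i = 1$. Chaining this with the equivalence $b_i = 1 \iff g_i = 0$ from the previous step proves both assertions, and since $i$ was arbitrary the conclusion holds for every $i = 1, \dots, n$.

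I expect no genuine obstacle here: the statement is essentially a monotonicity bookkeeping exercise. The only points deserving a moment of care are recording that the hypothesis $\eta > 0$ is precisely what keeps the base strictly positive — so that raising to the fractional power $1-p$ is legitimate and the result lands in $(0,1]$ rather than touching $0$ — and noting that the strictness of the ``if and only if'' relies on $t \mapsto t^{1-p}$ being strictly increasing rather than merely nondecreasing.
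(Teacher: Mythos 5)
Your proof is correct and follows essentially the same route as the paper's: reduce to showing the base $\eta/\sqrt{\eta^2 + |\D\tilde{\x}|_i^2}$ lies in $(0,1]$ with equality to $1$ exactly when the gradient magnitude vanishes, then pass through the exponent $1-p>0$. You are slightly more explicit than the paper about why the power map preserves both the interval and the equality case, but the argument is the same.
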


\begin{proof}
    Note that, for any $0 < p < 1$, $1 - p > 0 $. Consequently, 
    \begin{align*}
        \left(\w_\eta(\tilde{\x})\right)_i = \left( \frac{\eta}{\sqrt{\eta^2 + | \D \tilde{\x} |^2}} \right)^{1 - p} \in (\boldsymbol{0}, \boldsymbol{1}] \iff \frac{\eta}{\sqrt{\eta^2 + | \D \tilde{\x} |^2}} \in (\boldsymbol{0}, \boldsymbol{1}].
    \end{align*}
    Moreover, since both $\eta$ and $\sqrt{\eta^2 + | \D \tilde{\x} |^2}$ are strictly positive, their ratio is necessarily strictly positive. On the other side,
    \begin{align*}
        \frac{\eta}{\sqrt{\eta^2 + | \D \tilde{\x} |^2}} \leq \boldsymbol{1} \iff \eta \leq \sqrt{\eta^2 + | \D \tilde{\x} |^2} \iff | \D \tilde{\x} |^2 \geq \boldsymbol{0},
    \end{align*}
    which holds for any $\tilde{\x} \in \R^n$, and $| \D \tilde{\x} |_i^2 = 0$ on a pixel $i$ if and only if $| \D \tilde{\x} |_i=0$.
\end{proof}
Now we observe that whenever $| \D \tilde{\x} |_i = 0$ (i.e., if $\tilde{\x}_i$ is on a flat region), then $\left(\w_\eta(\tilde{\x})\right)_i = 1$, and the Total Variation is exactly weighted by $\lambda$. Conversely, whenever $| \D \tilde{\x} |_i \gg 0$ (indicating that pixel $i$ is either on an edge or a small detail), then $\left(\w_\eta(\tilde{\x})\right)_i <1$, signifying that the Total Variation regularization is weaker, and the details are preserved.\\
Focusing now on the role of $\eta$ in the computation of the weights $\w_\eta$, Figure \ref{fig:eta_plot} shows the value of $\left(\w_\eta(\tilde{\x})\right)_i$  at increasing values of $| \D \tilde{\x} |_i$, for different values of $\eta > 0$. 
It is evident that, for very small values of $\eta$, $\left(\w_\eta(\tilde{\x})\right)_i$ rapidly decreases to zero (i.e., the solution of the problem is not regularized around pixel $i$). On the other side, for large values of $\eta$, $\left(\w_\eta(\tilde{\x})\right)_i$ decreases very slowly as a function of $ | \D \tilde{\x} |_i$, which implies that small details will be flattened out.
Wrapping up the analysis of weights, we emphasize that since the optimal values of $\w_\eta$ are linked to the contrasts and the arrangement of details in $\x^{GT}$, we strive for $|\D\tilde{\x}|$ to closely resemble $|\D\x^{GT}|$, especially near the edges.

\begin{figure}
    \centering
    \includegraphics[width=0.6\linewidth]{./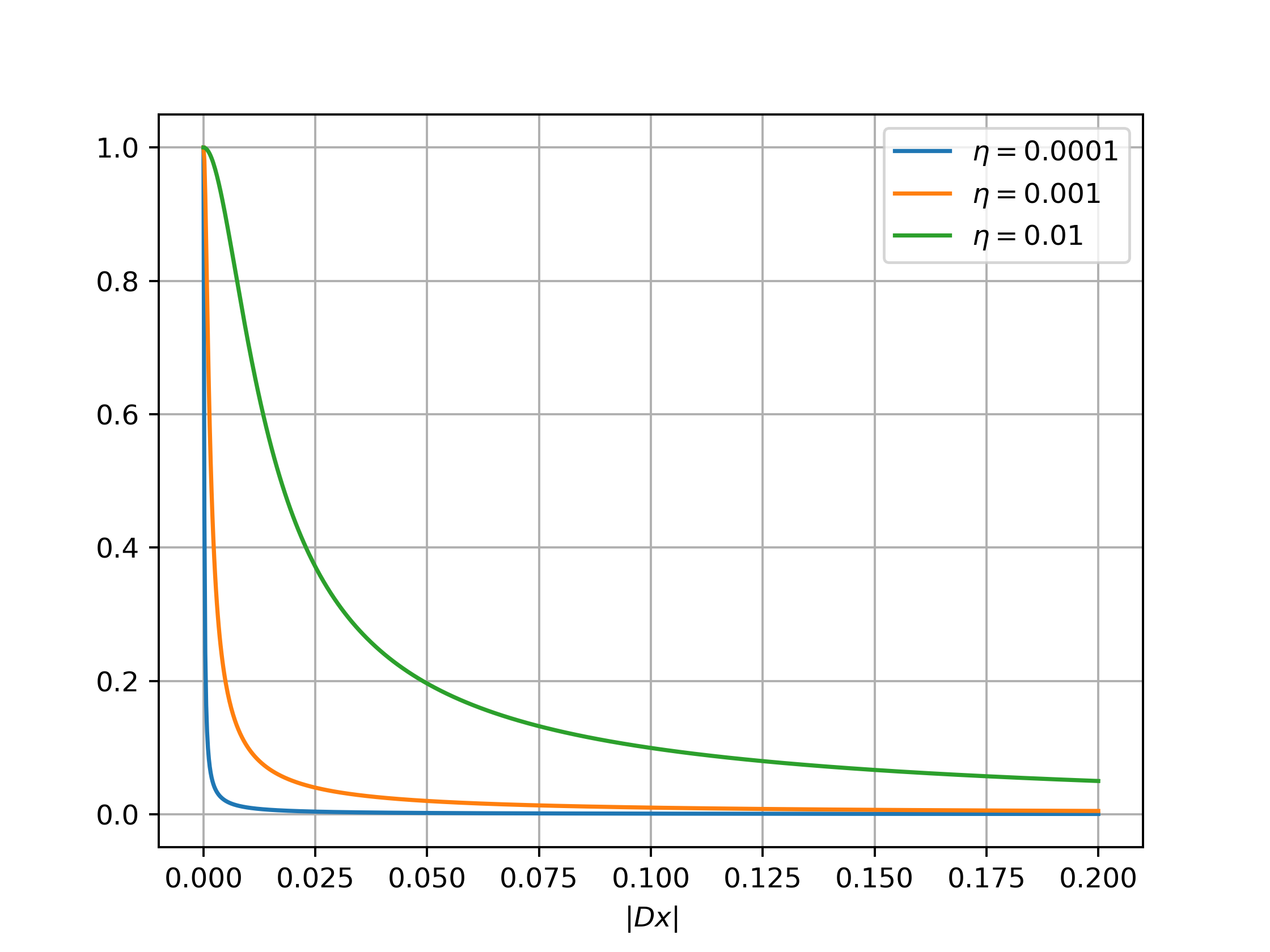}
    \caption{A plot of $\left(\w_\eta(\tilde{\x})\right)_i$ for different values of $\eta$, over $ | \D \tilde{\x} |_i$.}
    \label{fig:eta_plot}
\end{figure}

To provide a more precise specification on how to derive $\Tilde \x$ from $\y^{\delta}$  we consider $\Psi: \mathbb{R}^m \rightarrow \mathbb{R}^n$ as a Lipschitz-continuous function that maps $\y^{\delta}$ to an approximate reconstruction $\tilde{\x} = \Psi(\y^{\delta})$ of $\x^{GT}$. In accordance with the nomenclature introduced in \cite{evangelista2022or}, we denote such mappings as {\it reconstructors}.\\
Hence, given a reconstructor $\Psi: \R^m \to \R^n$, $\lambda>0$ and $\eta > 0$, we can formalize the proposed weighted TV model as:
    \begin{align}\label{eq:Psi_WL1_formulation}
         \arg\min_{\x \in \X} \frac{1}{2}|| \K\x - \y^\delta ||_2^2 + \lambda || \w_\eta(\tilde{\x}) \odot | \D \x | ||_1 
    \end{align}
    where $\tilde{\x} = \Psi(\y^\delta)$, the space-variant weights are computed as in Equation \eqref{eq:OurWeights} and $\X = \{\x \in \R^n; \ \x_i \geq \boldsymbol{0}, \ \forall i=1, \dots n\}$, representing the non-negative subspace of $\R^n$.
In the following, we denote the solution of \eqref{eq:Psi_WL1_formulation} as $\x^*_{\Psi, \delta}$ and 
we refer to this model as $\Psi$-W$\ell_1$ to remark its reliance on $\Psi$ in the weighted $\ell_1$-norm regularization (as further elaborated upon in Section \ref{ssec:PsiAlgo}). 


We have already intuitively observed that $\x^{GT}$ would be a suitable candidate for $\tilde{\x}$. This intuition is further confirmed by the numerical results obtained from simulations (presented in Section \ref{sec:numexp}). Indeed, if we use the ground truth image to compute the space-variant weights, we can achieve highly accurate solutions (that will be referred to as $\x^*_{GT, \delta}$, with a minor notational adaptation).\\
However, the ground truth images are not available in real cases, hence we need to use a reconstructor working on the noisy data. 
Under a few assumptions, we can demonstrate that 
if $|\D\Psi(\y^{\delta})|$ provides a good approximation of $|\D\x^{GT}|$, then the solution obtained considering $\w_{\eta}(\Psi(\y^{\delta}))$ is a reliable approximation of $\x^*_{GT, \delta}$.
To this aim, we need to build a sequence of reconstructors $\Psi_k: \R^m \to \R^n$ indexed by $k$, and demonstrate the following theorem.
\begin{theorem}\label{teo:conv}
    Given $\delta > 0$, let $\{ \Psi_k \}_{k \in \mathbb{N}}$ be a sequence of reconstructors such that:
    \begin{align*}
        \lim_{k \to \infty} || | \D\Psi_k(\y^\delta)| - | \D\x^{GT}| ||_1 = 0, \quad \forall \y^\delta = \K\x^{GT} + \e, || \e ||_2 \leq \delta,
    \end{align*}
    then:
    \begin{align*}
        \lim_{k \to \infty} || \x^*_{\Psi_k, \delta} - \x^*_{GT, \delta}  ||_1 = 0.
    \end{align*}
\end{theorem}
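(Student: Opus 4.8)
The plan is to reduce the statement to the convergence of the weight vectors and then to run a compactness-plus-uniqueness argument, since the absence of strong convexity of $\mathcal{J}$ prevents any direct quantitative estimate on the minimizers. Write $\w_k := \w_\eta(\Psi_k(\y^\delta))$ and $\w_\infty := \w_\eta(\x^{GT})$, and let $\x_k := \x^*_{\Psi_k,\delta}$ and $\x_\infty := \x^*_{GT,\delta}$ be the corresponding minimizers, each unique by the Corollary to Theorem \ref{theo:unicita}. The first step is to note that, for fixed $\eta>0$ and $0<p<1$, the scalar map $t \mapsto (\eta/\sqrt{\eta^2+t^2})^{1-p}$ is Lipschitz on $[0,\infty)$. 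Since by \eqref{eq:OurWeights} the weights depend on $\tilde{\x}$ only through $|\D\tilde{\x}|$, the hypothesis $\| \, |\D\Psi_k(\y^\delta)| - |\D\x^{GT}| \, \|_1 \to 0$ transfers through this Lipschitz map to give $\|\w_k - \w_\infty\|_1 \to 0$, and in particular componentwise convergence $(\w_k)_i \to (\w_\infty)_i$ for every $i$.

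Second, I would establish that the sequence $\{\x_k\}$ is bounded (equicoercivity). By Proposition \ref{prop:w_is_a_scale_term} the entries of $\w_\infty$ are strictly positive, so $c := \tfrac12 \min_i (\w_\infty)_i > 0$; the componentwise convergence then yields an index $k_0$ with $(\w_k)_i \geq c$ for all $i$ and all $k \geq k_0$. For such $k$ and every $\x \in \X$ we have the uniform lower bound $\mathcal{J}_k(\x) \geq \tfrac12\|\K\x - \y^\delta\|_2^2 + \lambda c\, TV(\x)$, whose right-hand side (restricted to $\X$) is a TV-regularized functional of exactly the form already shown to be coercive, i.e. the one guaranteeing $\mathcal{M}\neq\emptyset$. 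On the other hand, since $(\w_k)_i \leq 1$, minimality gives the uniform upper bound $\mathcal{J}_k(\x_k) \leq \mathcal{J}_k(\x_\infty) = \tfrac12\|\K\x_\infty-\y^\delta\|_2^2 + \lambda\|\w_k \odot |\D\x_\infty|\|_1 \leq \tfrac12\|\K\x_\infty-\y^\delta\|_2^2 + \lambda\, TV(\x_\infty) =: C$. Thus $\{\x_k\}_{k\ge k_0}$ lies in a bounded sublevel set of the coercive minorant, and is therefore bounded.

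Finally, I would argue by subsequences. Given any subsequence of $\{\x_k\}$, boundedness extracts a further subsequence $\x_{k_j} \to \bar{\x} \in \X$ (the set $\X$ being closed). Because the ambient space is finite-dimensional and all ingredients ($\K$, $\D$, the Euclidean magnitude $|\cdot|$, and the products against the convergent weights $\w_{k_j}\to\w_\infty$) are continuous, I can pass to the limit in the optimality inequality $\mathcal{J}_{k_j}(\x_{k_j}) \leq \mathcal{J}_{k_j}(\x)$, valid for every $\x\in\X$: the left-hand side converges because $\w_{k_j}\odot|\D\x_{k_j}| \to \w_\infty\odot|\D\bar{\x}|$, and the right-hand side because $\w_{k_j}\odot|\D\x| \to \w_\infty\odot|\D\x|$ for fixed $\x$. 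This yields $\mathcal{J}_\infty(\bar{\x}) \leq \mathcal{J}_\infty(\x)$ for all $\x \in \X$, so $\bar{\x}$ minimizes $\mathcal{J}_\infty$, and by the uniqueness Corollary $\bar{\x} = \x^*_{GT,\delta}$. Since every subsequence admits a further subsequence with the same limit $\x^*_{GT,\delta}$, the whole sequence converges, and by equivalence of norms in $\R^n$ we conclude $\|\x^*_{\Psi_k,\delta} - \x^*_{GT,\delta}\|_1 \to 0$.

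The main obstacle is precisely the lack of strong convexity of $\mathcal{J}$: the usual device of summing the two optimality inequalities only produces the monotonicity relation $\langle \w_k - \w_\infty,\, |\D\x_k| - |\D\x_\infty|\rangle \le 0$, which does not by itself control $\x_k - \x_\infty$. The compactness route circumvents this, but it is essential to secure the uniform positivity of the weights for large $k$ (so that equicoercivity holds with a single coercive minorant) and to check that the limit point is feasible and minimizes the limit functional, at which point the uniqueness results proved earlier deliver the conclusion.
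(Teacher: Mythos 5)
Your proof is correct and follows essentially the same route as the paper: convergence of the functionals $\mathcal{J}_k \to \mathcal{J}_{GT}$ driven by the Lipschitz continuity of the weight map (the paper's Proposition \ref{prop:lip_of_w}, packaged there as the uniform-on-compacts estimate $|\mathcal{J}_k(\x,\y^\delta)-\mathcal{J}_{GT}(\x,\y^\delta)| \leq n\lambda L(\w_\eta)\|\D\|_{2,1}\|\x\|_\infty \| |\D\x^{GT}|-|\D\Psi_k(\y^\delta)| \|_1$), followed by compactness of the minimizer sequence, passage to the limit in the optimality inequality, and identification of the limit via uniqueness — exactly the content of the paper's Proposition \ref{prop:uniformconvergence_implies_convergenceminimizers}. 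The one substantive difference is in your favor: you explicitly verify boundedness of $\{\x^*_{\Psi_k,\delta}\}$ via the equicoercive minorant $\tfrac12\|\K\x-\y^\delta\|_2^2+\lambda c\,TV(\x)$ (using that the weights are eventually bounded below by $c>0$), whereas the paper's proof of the theorem leaves this implicit, since Proposition \ref{prop:uniformconvergence_implies_convergenceminimizers} simply posits a compact set containing all the minimizers.
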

The verification of this theorem necessitates numerous observations, the proofs of which, along with the theorem's proof itself, are meticulously expounded in the Appendix \ref{sec:Appendix}. This organizational choice is made to uphold the overall readability and fluidity of the entire manuscript.

At last, we observe that the above theorem leads to the following result, straightforwardly.
\begin{corollary}\label{cor:conv}
    Given $\delta > 0$, let $\{ \Psi_k \}_{k \in \mathbb{N}}$ be a sequence of reconstructor such that:
    \begin{align*}
        \lim_{k \to \infty} || \Psi_k(\y^\delta) - \x^{GT} ||_1 = 0, \quad \forall \y^\delta = \K\x^{GT} + \e, || \e ||_2 \leq \delta,
    \end{align*}
    then:
    \begin{align*}
        \lim_{k \to \infty} || \x^*_{\Psi_k, \delta} - \x^*_{GT, \delta} ||_1 = 0.
    \end{align*}
\end{corollary}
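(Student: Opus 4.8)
The plan is to reduce Corollary \ref{cor:conv} to Theorem \ref{teo:conv} by showing that the hypothesis of the Corollary — convergence of the reconstructions $\Psi_k(\y^\delta)$ to $\x^{GT}$ in the $\ell_1$ norm — implies the hypothesis of the Theorem — convergence of the associated gradient magnitudes $|\D\Psi_k(\y^\delta)|$ to $|\D\x^{GT}|$ in the $\ell_1$ norm. Once this implication is established, the conclusion follows immediately by invoking Theorem \ref{teo:conv}, which is exactly the reason the paper can label the statement a corollary.

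First I would establish a pixelwise estimate for the gradient-magnitude operation. Fixing an index $i$, note that $(|\D\x|)_i$ is the Euclidean norm of the two-dimensional vector $[(\D_h\x)_i, (\D_v\x)_i]$. By the reverse triangle inequality applied to this norm, and using the linearity of $\D_h$ and $\D_v$, for any $\x, \boldsymbol{v} \in \R^n$ one has:
\begin{align*}
    \left| (|\D\x|)_i - (|\D\boldsymbol{v}|)_i \right| \leq (|\D(\x - \boldsymbol{v})|)_i.
\end{align*}
Summing over $i = 1, \dots, n$ and recalling that $\| |\D\z| \|_1 = \| \D\z \|_{2,1}$ yields $\| |\D\x| - |\D\boldsymbol{v}| \|_1 \leq \| \D(\x - \boldsymbol{v}) \|_{2,1}$.

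Next I would control the right-hand side by the $\ell_1$ distance between $\x$ and $\boldsymbol{v}$. Bounding $\sqrt{a^2 + b^2} \leq |a| + |b|$ pixelwise gives $\| \D\z \|_{2,1} \leq \| \D_h\z \|_1 + \| \D_v\z \|_1$, and since $\D_h, \D_v$ are fixed finite matrices there exists a constant $C > 0$ (the sum of their $\ell_1 \to \ell_1$ operator norms) such that $\| \D\z \|_{2,1} \leq C \| \z \|_1$ for every $\z$. Setting $\z = \Psi_k(\y^\delta) - \x^{GT}$ and chaining the two estimates produces:
\begin{align*}
    \| |\D\Psi_k(\y^\delta)| - |\D\x^{GT}| \|_1 \leq C \| \Psi_k(\y^\delta) - \x^{GT} \|_1.
\end{align*}
Under the hypothesis of the Corollary the right-hand side tends to zero as $k \to \infty$, hence so does the left-hand side. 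This is precisely the hypothesis required by Theorem \ref{teo:conv}, and applying that theorem delivers $\lim_{k\to\infty} \| \x^*_{\Psi_k,\delta} - \x^*_{GT,\delta} \|_1 = 0$.

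There is no genuine obstacle here: the entire content is the chain of two elementary bounds — the reverse triangle inequality for the pixelwise Euclidean norm, and the boundedness of the linear operator $\D$ — after which Theorem \ref{teo:conv} does all the substantive work. The only point meriting minor care is that the hypotheses of both statements are quantified over \emph{every} admissible datum $\y^\delta = \K\x^{GT} + \e$ with $\|\e\|_2 \leq \delta$; one must therefore note that the constant $C$ is independent of $\y^\delta$ (it depends only on the fixed operator $\D$), so that the implication between the two hypotheses holds in the uniform, quantified form needed to apply the Theorem.
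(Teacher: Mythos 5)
Your proposal is correct and follows essentially the same route the paper intends: the paper leaves the corollary as a ``straightforward'' consequence of Theorem \ref{teo:conv}, and the chain of bounds you use --- the reverse triangle inequality giving $\| |\D\x| - |\D\boldsymbol{v}| \|_1 \leq \| \D(\x - \boldsymbol{v}) \|_{2,1}$ followed by the induced-norm bound $\| \D \z \|_{2,1} \leq \| \D \|_{2,1} \| \z \|_1$ --- is exactly the estimate the paper itself deploys in the proof of Proposition \ref{prop:distance_of_J}. Your remark that the constant depends only on $\D$ and is therefore uniform over all admissible $\y^\delta$ is a correct and worthwhile point of care.
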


The Corollary suggests that if we can fix a reconstructor whose output is sufficiently close to $\x^{GT}$ for any noisy datum,  
the optimal images $ \x^*_{\Psi_k, \delta}$ and $\x^*_{GT, \delta}$ (computed with the respective weights) will be sufficiently close to each other.

\subsection{Proposed $\Psi$ reconstructors}\label{ssec:PsiAlgo}

We can now propose various possible choices to define the reconstructor $\Psi(\y^\delta)$ and compute  $\tilde{\x}$ from the sinogram $\y^\delta$.

Dealing with tomographic imaging from  X-ray measurements, a simple yet fast reconstructor is the Filtered Back Projection (FBP) solver \cite{kak2001principles}. As an analytical algorithm, it operates by applying a filter to the back-projected data, disregarding the inverse problem formulation of the imaging task. Even if FBP may not address all reconstruction challenges, its speed and simplicity make it a pragmatic choice for many medical imaging scenarios, particularly when balancing computational resources and reconstruction quality is essential.

A further option is setting $\Psi(\y^{\delta})$ as an early-stopped variational solver. For example, if we still consider the global TV regularization \eqref{eq:TV_definition} and stop the iterative  algorithm for the solution  of the corresponding problem \eqref{eq:variational_formulation} according to computational constraints.
While in few-view CT the FBP solution is typically characterized by neat details but also by noise and streaking artifacts, the solution computed by an early stopped TV solver is expected to be a blurred version of the ground truth image, with moderate few-view-caused artifacts.
However, we observe that these reconstructors can only provide images approximating $\x^{GT}$ in a classical interpretation.

\begin{figure}
\centering
\begin{subfigure}{0.9\textwidth}
    \includegraphics[trim=0 45mm 0 0,clip, width=\textwidth]{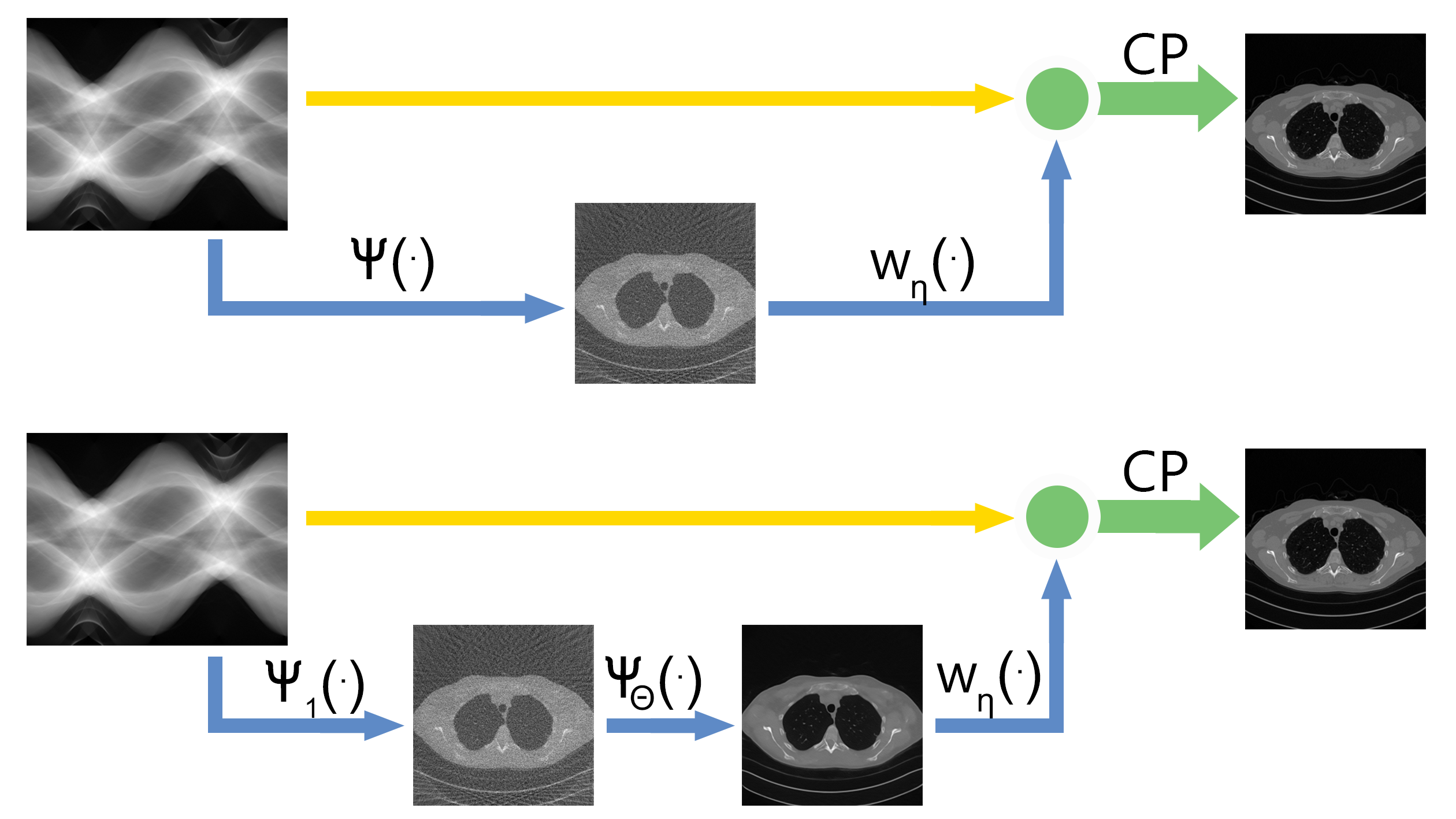}
\end{subfigure}\\ \vspace{7mm}
\begin{subfigure}{0.9\textwidth}
    \includegraphics[trim=0 0 0 45mm,clip, width=\textwidth]{fig/Psi-Wl1.png}
\end{subfigure}
\caption{Visual depictions of the proposed $\Psi$-W$\ell_1$ method, illustrating the key computational steps of our proposal, spanning from the sinogram to the final reconstruction. On the top, the scenario where $\Psi$ is implemented by a single solver; on the bottom,  the case where it is formed through a composition of two functions.}
\label{fig:GraphicalAbstract}
\end{figure}

Another method entails the use of a composition of two reconstructors. We can in fact define $\Psi(\y^{\delta})=\Psi_2( \Psi_1(\y^{\delta}))$, where $\Psi_1(\y^{\delta})$ can be one of the previously described reconstructors applied to map the data $\y^\delta$ to the image domain, whereas $\Psi_2: \mathbb{R}^n \to \mathbb{R}^n$ is designed to enhance the quality of $\Psi_1(\y^{\delta})$. 
In Figure \ref{fig:GraphicalAbstract}, we visually illustrate the contrast between the current and the previous methods for the establishing of $\Psi$ in the $\Psi$-W$\ell_1$ model. \\
In our specific case, we have chosen $\Psi_2$ to be a neural network (NN).
As extensively documented in the literature, NNs possess the capability to execute efficient image-processing computations through pre-trained deep architectures, offering the flexibility to estimate various image transformations, such as the image gradients \cite{lu2022single, cascarano2022plug} or the image shearlets coefficients \cite{bubba2019learning}. However, it is crucial to note that the accuracy of NN reconstructors is contingent upon the quality of the training samples and a considerable computational effort is required during their offline training phase.\\
It is noteworthy that, in the context of medical imaging, the collection of data from the same clinical device allows for the construction of a coherent dataset for training a network.
Therefore, we can presume the availability of the training set $\mathcal{D}$, constituted by $M$ couples $(\x^{GT}_j, \Psi_1(\y^\delta_j))$ of high-quality reconstructed images $\x^{GT}_j$ (to be used as ground truth images) and the corresponding low-quality reconstructions  $\Psi_1(\y^\delta_j)$, for all $j=1, \dots, M$.\\
We utilize in our experiments the renowned Residual U-Net architecture as described in \cite{green_post_processing,evangelista2023ambiguity}.
A neural network-based reconstructor $\Psi_2$ can be seen as a $\Theta$ parameter-dependent solver, where $\Theta$ represents the high-dimensional vector of the NN inner weights; therefore we will denote it as $\Psi_{\Theta}$ in the following. 
To set these weights, we need to train the network having established a loss function to be minimized.
Inspired by the convergence results in Theorem \ref{teo:conv} and Corollary \ref{cor:conv} where we respectively assume that $| \D\Psi_k(\y^\delta)| \to | \D\x^{GT}|$ and $\Psi_k(\y^\delta) \to \x^{GT} \ \mbox{for\ } k\to\infty$, we have designed a loss function such that each $\Psi_\Theta( \Psi_1(\y^\delta_j))$ image approximates not only the ground truth one but also the magnitude images $|\D \x^{GT}_j|$.
To this aim, inspired by the regularization technique proposed in \cite{Benfenati_2020}, we train $\Psi_\Theta$ with the \emph{elastic loss}: 
    \begin{equation}\label{eq:ElasticLoss}
        \Theta^* \in \arg\min_{\Theta} \sum_{j=1}^M \alpha || |\D \x^{GT}_j| - |\D \Psi_\Theta(\y^\delta_j)| ||_2^2 \ +  \ (1-\alpha)|| \x^{GT}_j - \Psi_\Theta(\y^\delta_j) ||_2^2
    \end{equation}
where $\alpha \in [0, 1]$.
Setting $\alpha = 0$ yields a network aimed at achieving the best image approximation, whereas $\alpha = 1$ directs the training process to focus solely on learning gradient images.
We anticipate that, in our experimental setting, we will consider $\alpha = \{0, 0.5, 1\}$ to test the effectiveness of the elastic loss in these extreme $\alpha$ values as well as in its central elastic coefficient.
In addition, we will use $\Psi_1$ as the FBP algorithm, to get a comprehensive reconstructing framework of very fast execution. 
As training samples, we use $M$=3305 couples $(\x^{GT}_j, \Psi_1(\y^\delta_j))$ of real chest images extracted from the Low Dose CT Grand Challenge data set by the Mayo Clinic \cite{mccollough2016tu}. They are $256\times 256$ pixel resolution and are normalized in the unitary interval.  
We perform 50 epochs with  Adam optimizer with a learning rate equal to $10^{-3}$ and $\beta_1=0.9$, $\beta_2=0.9999$ to complete the training phase. 

\section{Chambolle-Pock algorithm}\label{sec:Algo}


To solve the optimization problem \eqref{eq:Psi_WL1_formulation}, we consider the Chambolle-Pock (CP) algorithm \cite{chambolle2011first}, that is a popular iterative method used to solve the $\text{TV}$-regularized inverse problem \cite{piccolomini2021model,sidky2014cttpv} in few-view CT. In its original formulation, it can be employed to minimize an objective function of the form:
\begin{align}\label{eq:CP_general_formulation}
    \min_{\x \in \R^n} F(\M\x) + G(\x),
\end{align}
where both $F$ and $G$ are real-valued, proper, convex, lower semi-continuous functions  and $\M$ is a linear operator from $\R ^n$ to $\R^s$. 
Note that there are no constraints on the smoothness of either $F$ and $G$; therefore, the method can be applied to our problem by setting:
\begin{align}\label{eq:CP_funtions}
    \begin{cases}
        G(\x) = \iota_{\X}(\x), \\
        F(\M\x) = \mathcal{J}_{\Psi}(\x, \y^\delta) = \frac{1}{2} || \K\x - \y^\delta ||_2^2 + \lambda || \w_\eta(\tilde{\x}) \odot | \D \x | ||_1,
    \end{cases}
\end{align}
where $\iota_{\X}(\x)$ is the indicator function of the feasible set $\X$. As already stated, we consider $\X$ as the  non-negative subspace of $\R^n$.\\
To apply the CP method to our problem, we  define the linear operator $\M \in \R^{s \times n}$  by concatenating row-wise $\K$ and $\D$, namely $\M = \left[ \K; \D \right]$ and $s=(m + 2n)$. The CP algorithm considers the primal-dual formulation of \eqref{eq:CP_general_formulation}, that reads:
\begin{align}
    \min_{\x \in \R^n} \max_{\z \in \R^{m+2n} } \z^T \M \x + G(\x) - F^*(\z),
\end{align}
where $F^*$ is the convex conjugate of $F$ \cite{bauschke2017correction}, defined as:
\begin{align}
    F^*(\z^*) := \sup_{\z \in \R^{m+2n} } \left\{ \z^T \z^* - F(\z) \right\}.
\end{align} 
Given a starting guess for both the primal variable $\x^{(0)}$ and the dual variable $\z^{(0)}$, the update rule is the following:
\begin{align}\label{eq:CP_iterates}
    \begin{cases}
        \z^{(k+1)} = \prox_{\sigma F^*}\left(\z^{(k)} + \sigma \M\bar{\x}^{(k)}\right), \\
        \x^{(k+1)} = \prox_{\tau G}\left(\x^{(k)} - \tau \M^T\z^{(k+1)}\right), \\
        \bar{\x}^{(k+1)} = \x^{(k+1)} + \beta \left(\x^{(k+1)} - \x^{(k)} \right),
    \end{cases}
\end{align}
where $\bar{\x}^{(0)} = \boldsymbol{0}$, $\beta \in [0, 1]$ is a parameter that we set equal to 1 in the experiments, while $\sigma > 0$, $\tau > 0$ are computed as $\sigma = \tau \approx \frac{1}{|| \boldsymbol{M} ||_2}$. A reliable approximation of $|| \boldsymbol{M} ||_2$ can be computed by means of the power method  \cite{epperson2021introduction}. 

Focusing on $G(\x)$, its proximal operator corresponds to the projection $\mathcal{P}_+$ over the non-negative subspace $\X$. Consequently the updating rule of the primal variable $\x^{(k)}$ becomes:
\begin{align}
    \x^{(k+1)} = \mathcal{P}_+\left(\x^{(k)} - \tau \M^T\z^{(k+1)}\right).
\end{align}
The explicit derivation of $\prox_{\sigma F^*}$ requires the introduction of two dual variables, $\p \in \R^m$ and $\q \in \R^{2n}$, such that:
\begin{align}
    F(\p, \q) = \underbrace{\frac{1}{2} || \p - \y^\delta ||_2^2}_{:= F_1(\p)} + \underbrace{\lambda || \w_\eta(\tilde{\x}) \odot | \q | ||_1}_{:= F_2(\q)}.
\end{align}
Denominating $F_1(\p) = \frac{1}{2} || \p - \y^\delta ||_2^2$, its convex conjugate $F_1^*(\p)$ can be easily computed as:
\begin{align}
    F_1^*(\p) &= \sup_{\p'} \left\{ \p^T \p' - \frac{1}{2} || \p' - \y^\delta ||_2^2 \right\} \\ &= \p^T \y^\delta + \frac{3}{2} || \p ||_2^2,
\end{align}
since the problem defining $F_1^*$ is quadratic in $\p'$ and it can be solved by imposing the optimality condition of null gradient. 
Consequently, the proximal map of $F_1^*$ gets:
\begin{align}
    \prox_{\sigma F_1^*}(\p) &= \arg\min_{\p'} (\p')^T \y^\delta + \frac{3}{2} || \p' ||_2^2 + \frac{1}{2\sigma} || \p' - \p ||_2^2 \\&= \frac{\p - \sigma \y^\delta}{1 + 3\sigma}.
\end{align}
Similarly, calling $F_2(\q) = \lambda || \w_\eta(\tilde{\x}) \odot | \q | ||_1$, we derive:
\begin{align}
    F_2^*(\q) = \begin{cases}
        0 \quad &\text{if } \q \leq \lambda \w_\eta(\tilde{\x}), \\
        \infty \quad &\text{otherwise}, 
    \end{cases}
\end{align}
which implies that:
\begin{align}
    \prox_{\sigma F_2^*}(\q) &= \arg\min_{\q'} F_2^*(\q') + \frac{1}{2\sigma} || \q - \q' ||_2^2 \\ &= \frac{\lambda \w_\eta(\tilde{\x}) \odot \q}{\max \left( \lambda \w_\eta(\tilde{\x}), \q \right)},
\end{align}
where both the maximum and the division are taken element-wise. More details on the computation of the proximal operator of $F_2^*$ can be found in \cite{sidky2014cttpv}. \\

Plugging these results into the CP scheme \eqref{eq:CP_iterates} with $\z^{(k)} = (\p^{(k)}, \q^{(k)})$ leads to the following updating rules:
\begin{align}
    \begin{cases}
        \p^{(k+1)} = \frac{\p^{(k)} + \sigma \left(\K \bar{\x}^{(k)} - \y^\delta \right)}{1 + 3\sigma}, \\
        \q^{(k+1)} = \frac{\lambda \w_\eta(\tilde{\x}) \odot \left(\q^{(k)} + \sigma | \D \bar{\x}^{(k)} | \right)}{\max \left( \lambda \w_\eta(\tilde{\x}), \q^{(k)} + \sigma | \D \bar{\x}^{(k)} | \right)}, \\
        \x^{(k+1)} = \mathcal{P}_+\left(\x^{(k)} - \tau \M^T\begin{bmatrix}
            \p^{(k+1)} \\ \q^{(k+1)}
        \end{bmatrix} \right), \\
        \bar{\x}^{(k+1)} = \x^{(k+1)} + \beta \left(\x^{(k+1)} - \x^{(k)} \right).
    \end{cases}
\end{align}
The corresponding CP algorithm is reported in Algorithm \ref{algo1}.

\begin{algorithm}
\caption{The Chambolle-Pock algorithm to solve Problem \eqref{eq:variational_formulation}}\label{algo1}
\begin{algorithmic}[1]
\Require a linear operator $\K$, corrupted data $\y^\delta \in \R^m$, a regularization parameter $\lambda > 0$, a reconstructor $\Psi: \R^m \to \R^n$
\Require $\bar{\x}^{0} = \x^{0} \in \R^n$, $\p^{(0)} \in \R^m$, $\q^{(0)} \in \R^{2n}$
\State \textbf{define} $\M = \left[ \K; \D \right]$, $\gamma \approx || \M ||_2, \tau = \sigma = \gamma^{-1}$, $\beta = 1$, $\eta > 0$
\State \textbf{initialize} $k = 0$
\Repeat
    \State $\p^{(k+1)} = \frac{\p^{(k)} + \sigma \left(\K \bar{\x}^{(k)} - \y^\delta \right)}{1 + 3\sigma}$ \Comment{Update dual variables}
    \State $\q^{(k+1)} = \frac{\lambda \w_\eta(\tilde{\x}) \odot \left(\q^{(k)} + \sigma | \D \bar{\x}^{(k)} | \right)}{\max \left( \lambda \w_\eta(\tilde{\x}), \q^{(k)} + \sigma | \D \bar{\x}^{(k)} | \right)}$
    \State
    \State $\x^{(k+1)} = \mathcal{P}_+\left(\x^{(k)} - \tau \M^T\begin{bmatrix}
            \p^{(k+1)} \\ \q^{(k+1)}
        \end{bmatrix} \right)$ \Comment{Update primal variable}
    \State
    \State $\bar{\x}^{(k+1)}=\x^{(k+1)}+\beta(\x^{(k+1)}-\x^{(k)})$ \Comment{Update inertia term}
    \State
    \State \textbf{update } $PDG(\x^{(k+1)}, \p^{(k+1)}, \q^{(k+1)})$ as in \eqref{eq:primal_dual_gap}
    \State \textbf{update } $k = k + 1$
\Until{$PDG(\x^{(k)}, \p^{(k)}, \q^{(k)}) \leq \epsilon_{\mathcal{J}}$ \textbf{or} $|| \x^{(k+1)} - \x^{(k)} ||_2 \leq \epsilon_{\x} || \x^{(k)} ||_2$}
\end{algorithmic}
\end{algorithm}

We stop the CP iterations according to a combination of two convergence criteria: firstly, at any iteration, we check whether $|| \x^{(k+1)} - \x^{(k)} ||_2 \leq \epsilon_{\x} || \x^{(k)} ||_2$, to stop the algorithm if the method get stucked on a flat region of the objective function. Secondly, as the Chambolle-Pock algorithm is a primal-dual method, we consider a primal-dual gap (PDG) criterium, terminating the iterations if:
\begin{align}\label{eq:primal_dual_gap}
    \underbrace{F(\M\x^{(k)}) + G(\x^{(k)}) + G^*\left(\M^T\begin{bmatrix}
            \p^{(k)} \\ \q^{(k)}
        \end{bmatrix}\right) + F^*\left(\begin{bmatrix}
            \p^{(k)} \\ \q^{(k)}
        \end{bmatrix}\right)}_{\text{Primal-Dual Gap } PDG(\x^{(k)}, \p^{(k)}, \q^{(k)})} \leq \epsilon_{\mathcal{J}}.
\end{align}

We remark that the CP algorithm is convergent if both $F(\M\x)$ and $G(\x)$ are convex, proper and lower semi-continuous, with a theoretical convergence rate of $\mathcal{O}(\frac{1}{k^2})$ \cite{chambolle2011first}, and we fit these requirements.

\section{Numerical experiments}
\label{sec:numexp}

In this Section, we present the numerical results obtained on some synthetic and real test images.
The tests on synthetic images primarily aim to assess the potential of the model and analyze the method's behavior in an optimal scenario. Real images, as such, inherently contain artifacts and some noise, which may not necessarily be desirable to reproduce in the reconstruction. On the other hand, they possess characteristics and areas of interest that cannot be perfectly replicated in a simulation.

In both cases we compute a test problem, by setting a very sparse fan beam geometry with 45 angles in the range $[0,180]$ and computing a projection matrix $\K$ by using Astra-Toolbox \cite{astra_1,astra_2} routines. 
The sinogram $\y^{\delta}$ is then computed from the ground truth image $\x^{GT}$, according to the image formation model in Equation \eqref{eq:inverse_problem}. The noise $\e$ is obtained by sampling $\z \sim \mathcal{N}(\boldsymbol{0}, \I)$ and computing:
\begin{align*}
    \e = \nu \frac{\z}{|| \z ||_2} || \y ||_2,
\end{align*}
where we indicated with $\y$ the noiseless sinogram (i.e. with $\nu = 0$). Note that this formulation is equivalent to \eqref{eq:inverse_problem}, with $\delta = \nu || \y ||_2$.

We analyze the considered reconstructions by means of metrics such as the Relative Error (RE), the Peak Signal to Noise Ratio (PSNR) and the Structural Similarity Index (SSIM) defined in \cite{wang2004image},  with respect to the ground truth image. 

To justify the choice of the space-variant parameter, we compare the results obtained from the proposed method with the global  Total Variation regularization. In this case, the parameter $\lambda$ has been chosen heuristically to minimize the relative error.
Moreover, we conduct a comparative analysis of the outcomes achieved by varying the choice of the reconstructor $\Psi$, as discussed in Section \ref{ssec:PsiAlgo}.  We also considered the case $\tilde \x=\x ^{GT}$ for comparison.
The methods are labeled as FBP-W$\ell_1$, TV-W$\ell_1$, NN-W$\ell_1$, and GT-W$\ell_1$.
Notably, in the case of TV-W$\ell_1$ method, the CP method has been used to implement the $\Psi$ reconstructor, prematurely terminated after 100 iterations. In the case of NN-W$\ell_1$, we have used $\alpha \in \{0, 0.5, 1 \}$ in the elastic loss. \\ 

\subsection{Experiments on a synthetic image}\label{ssec:Sintetica}

The synthetic image, depicted in Figure \ref{fig:results_Sint005}, contains objects of interest in tomography: homogeneous areas of regular shape simulating tumoral masses, high-density areas simulating bones or metal, and objects with very thin edges.\\
All considered methods were stopped at 10000 iterations, after ensuring that they satisfy the convergence criteria \eqref{eq:primal_dual_gap} with $\epsilon_{\mathcal{J}}=\epsilon_{\x}=10^{-5}$. 
Since the image to be reconstructed has an extremely sparse gradient, the parameter $\eta$ must be small and we have set $\eta=2 \cdot 10^{-5}$ for all the methods.\\
We conducted two simulations: in the first instance, low noise was introduced with $\nu=0.005$, while in the second case, $\nu$ was set to 0.02. For both simulations, Table \ref{tab:Sintetica} presents the metrics for both the $\tilde{\x}$ image and the reconstructed image $\x^*_{\Psi, \delta}$ across the various considered approaches.\\
We first consider the low noise projections obtained with $\nu=0.005$; in this case we set $\lambda=5$ for all the methods. 
Table \ref{tab:Sintetica} highlights that the reconstructions achieved through the proposed space-variant regularization method consistently demonstrate exceptional quality, as evidenced by SSIM values surpassing 0.99. 
Notably, this high quality persists despite considerable dissimilarity among the $\tilde{\x}$ images employed for computing the weights.
Examination of RE and PSNR values reveals superior outcomes when utilizing, in sequential order, the GT-$W\ell_1$, the FBP-$W\ell_1$ and the TV-$W\ell_1$  approaches. Conversely, the global TV reconstruction yields the least favorable results across all metrics.
\begin{table}[h]
\caption{Performance results on the synthetic image with low ($\nu=0.005$) and medium ($\nu=0.02$) noise. In the first three columns the metrics relative to the image $\tilde{\x}$, in the last three columns the metrics relative to the output image $\x^*_{\Psi,\delta}$.}\label{tab:Sintetica}%
\begin{tabular*}{\textwidth}{@{\extracolsep\fill}ll rrr rrr}
\toprule
 &   & \multicolumn{3}{@{}c@{}}{ $\tilde{\x} $} & \multicolumn{3}{@{}c@{}}{$\x^*$}\\
  &         & RE & PSNR   & SSIM          & RE & PSNR   & SSIM\\
\cmidrule{3-5} \cmidrule{6-8}
\midrule
\multirow{ 4}{*}{$\nu=0.005$} 
   & GT-$W\ell_1$  & 0.0000   & 100.00  & 1.0000            & \textbf{0.0217} & \textbf{46.9881} & \textbf{0.9979}  \\
   & FBP-$W\ell_1$   & 0.5056   & 19.6281  & 0.1974           & 0.0373 & 42.2662 & 0.9953  \\
   & TV-$W\ell_1$    & \textbf{0.2236}   & \textbf{26.7167}  & \textbf{0.7335}           & 0.0621 & 37.8454 & 0.9920  \\
   & global TV    & -   & -  & -                    & 0.1236 & 31.8657 & 0.9805  \\
\midrule
\multirow{ 4}{*}{$\nu=0.02$}
   & GT-$W\ell_1$   & 0.0000   & 100.00  & 1.0000            & \textbf{0.0398} & \textbf{41.7159} & \textbf{0.9968}  \\
   & FBP-$W\ell_1$    & 0.9694   & 13.9748 & 0.0524            & 0.1025 & 33.4928 & 0.9809  \\
   & TV-$W\ell_1$   & \textbf{0.2435} & \textbf{25.9754} & \textbf{0.5968}            & 0.0911 & 34.5162 & 0.9797  \\
   & global TV    & -   & -  & -                     & 0.1249 & 31.7715 & 0.9787  \\
\bottomrule
\end{tabular*}
\end{table}

\begin{figure}
    \centering
    \includegraphics[trim={35mm 0 15mm 0},clip, width=0.3\linewidth]{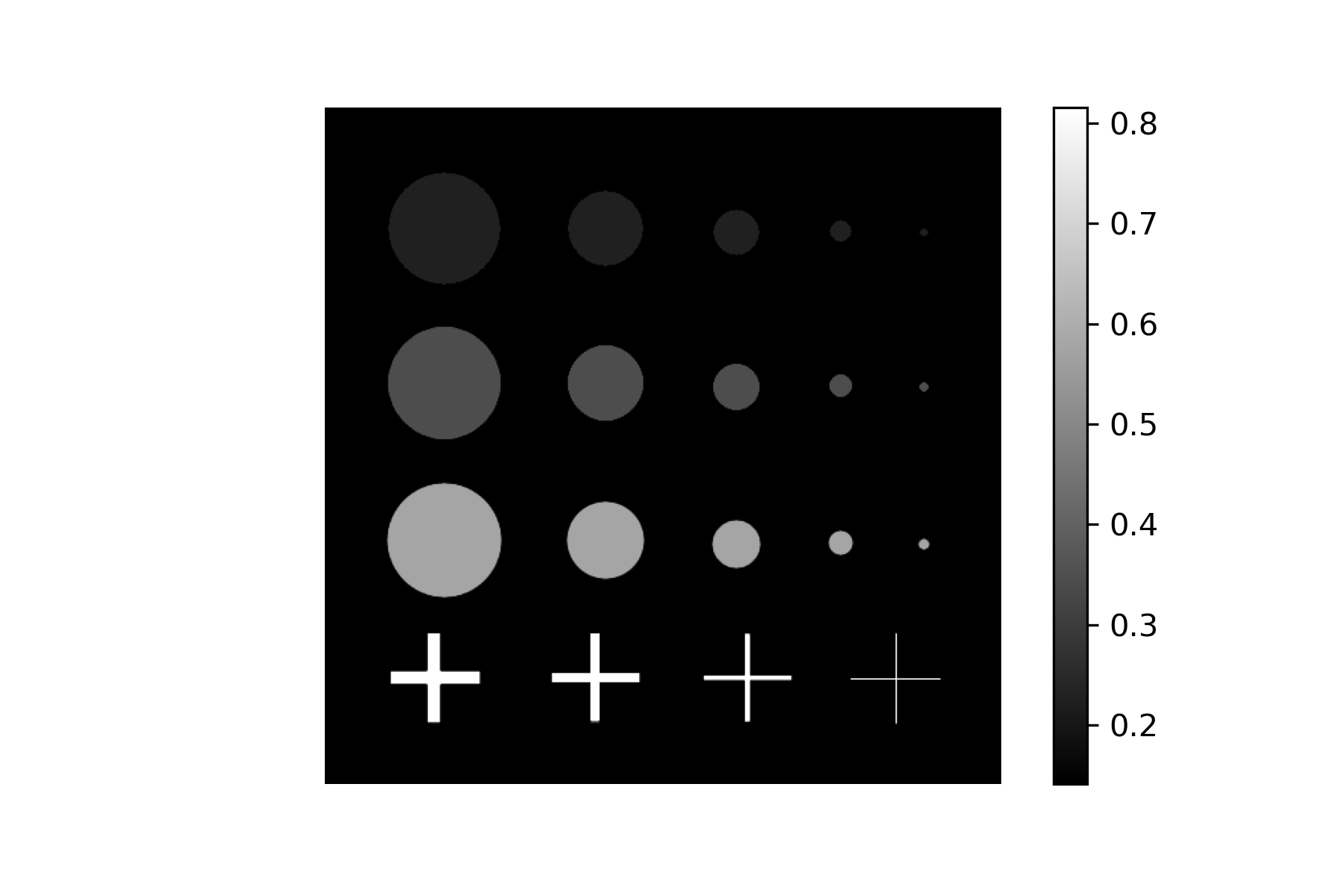}
    \includegraphics[trim={35mm 0 15mm 0},clip, width=0.3\linewidth]{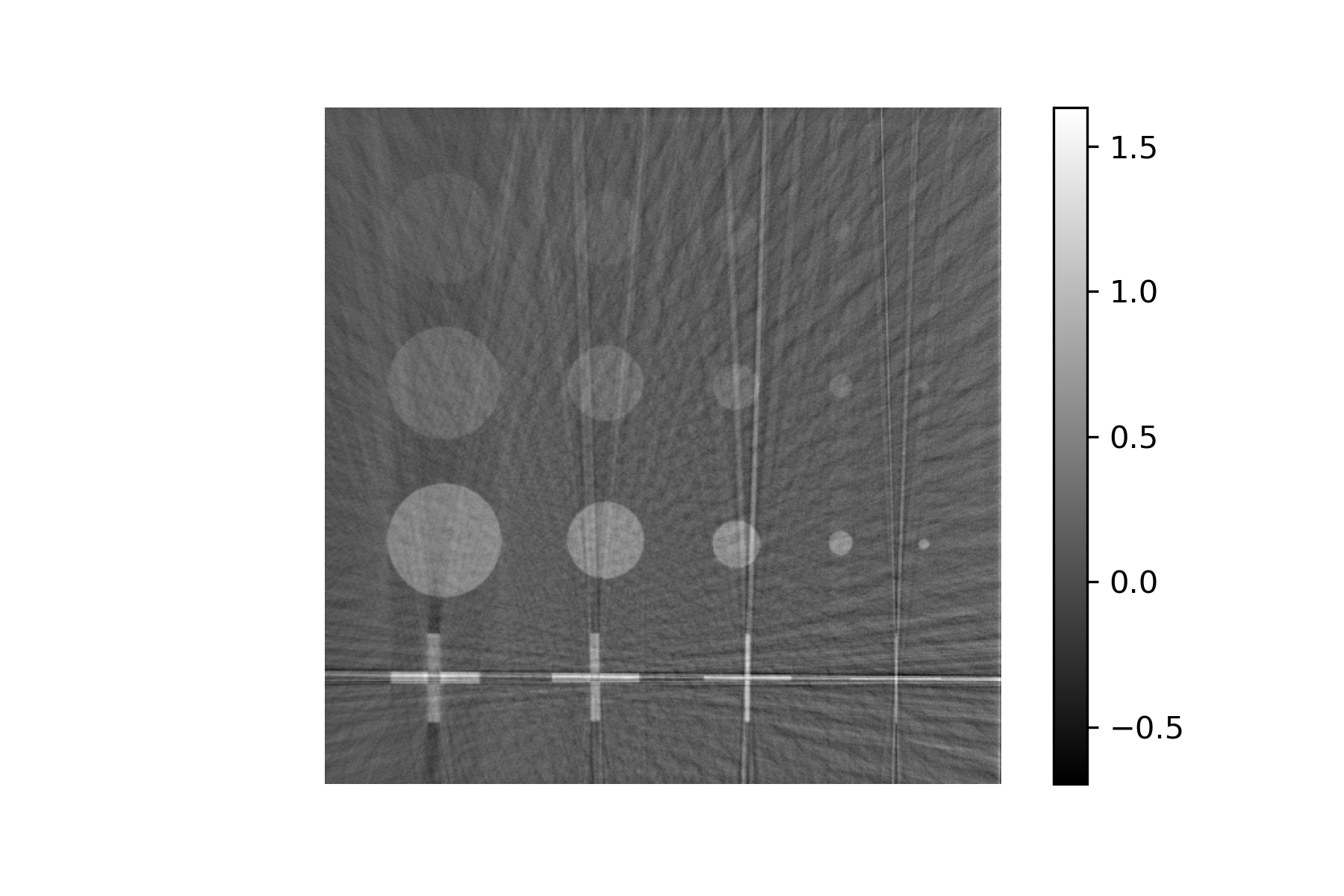}
    \includegraphics[trim={35mm 0 15mm 0},clip, width=0.3\linewidth]{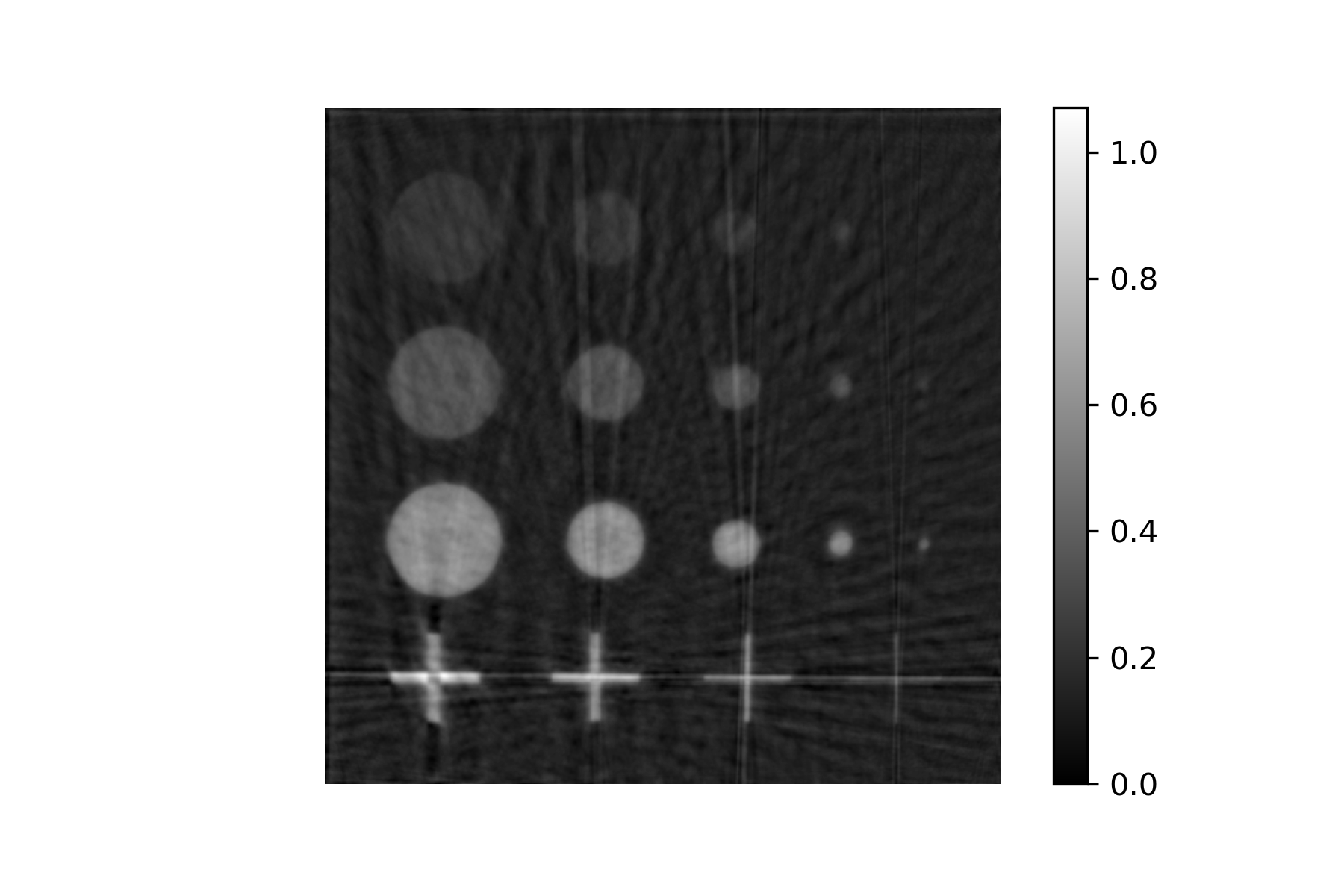}\\
    \includegraphics[trim={35mm 0 15mm 0},clip, width=0.3\linewidth]{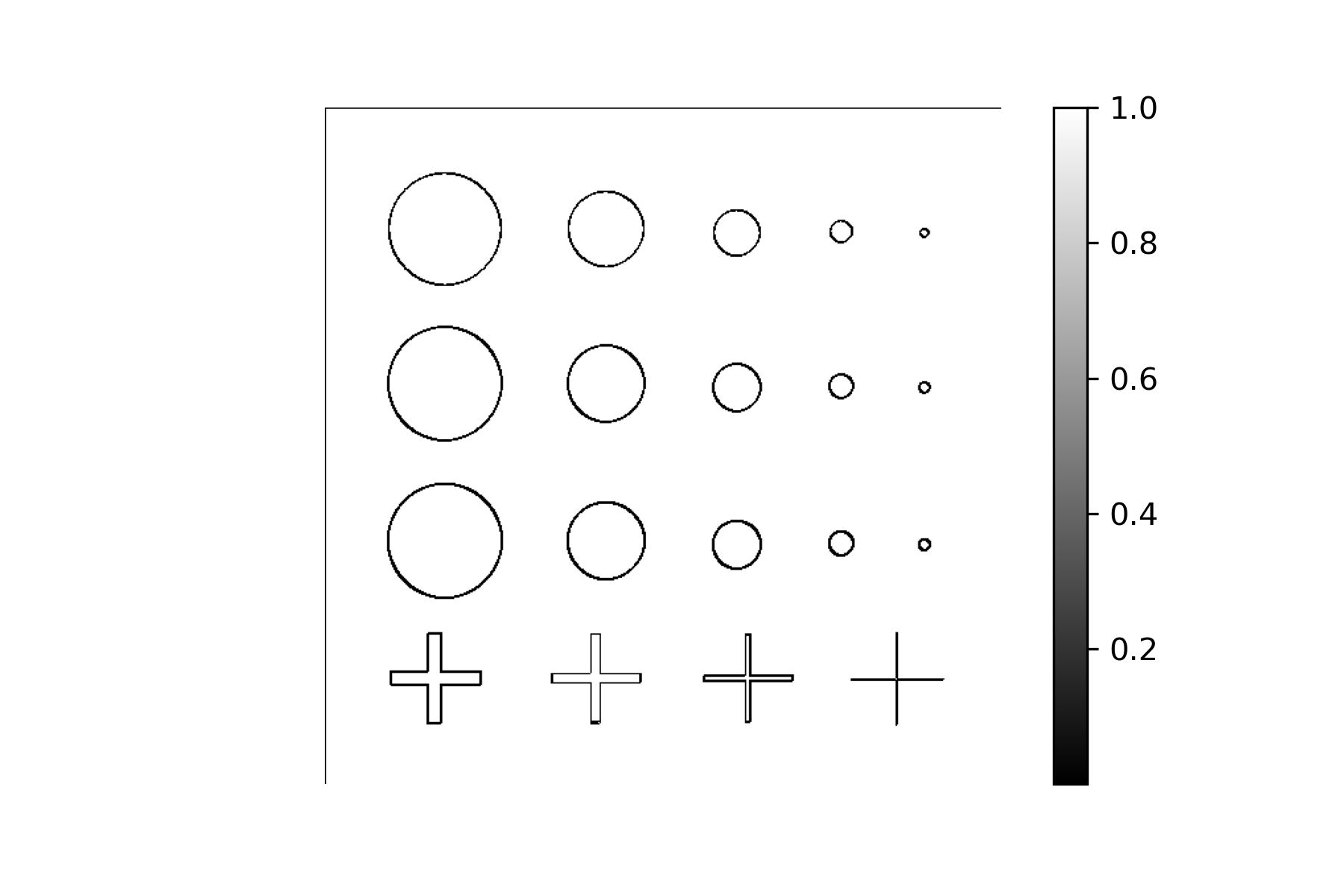}
    \includegraphics[trim={35mm 0 15mm 0},clip, width=0.3\linewidth]{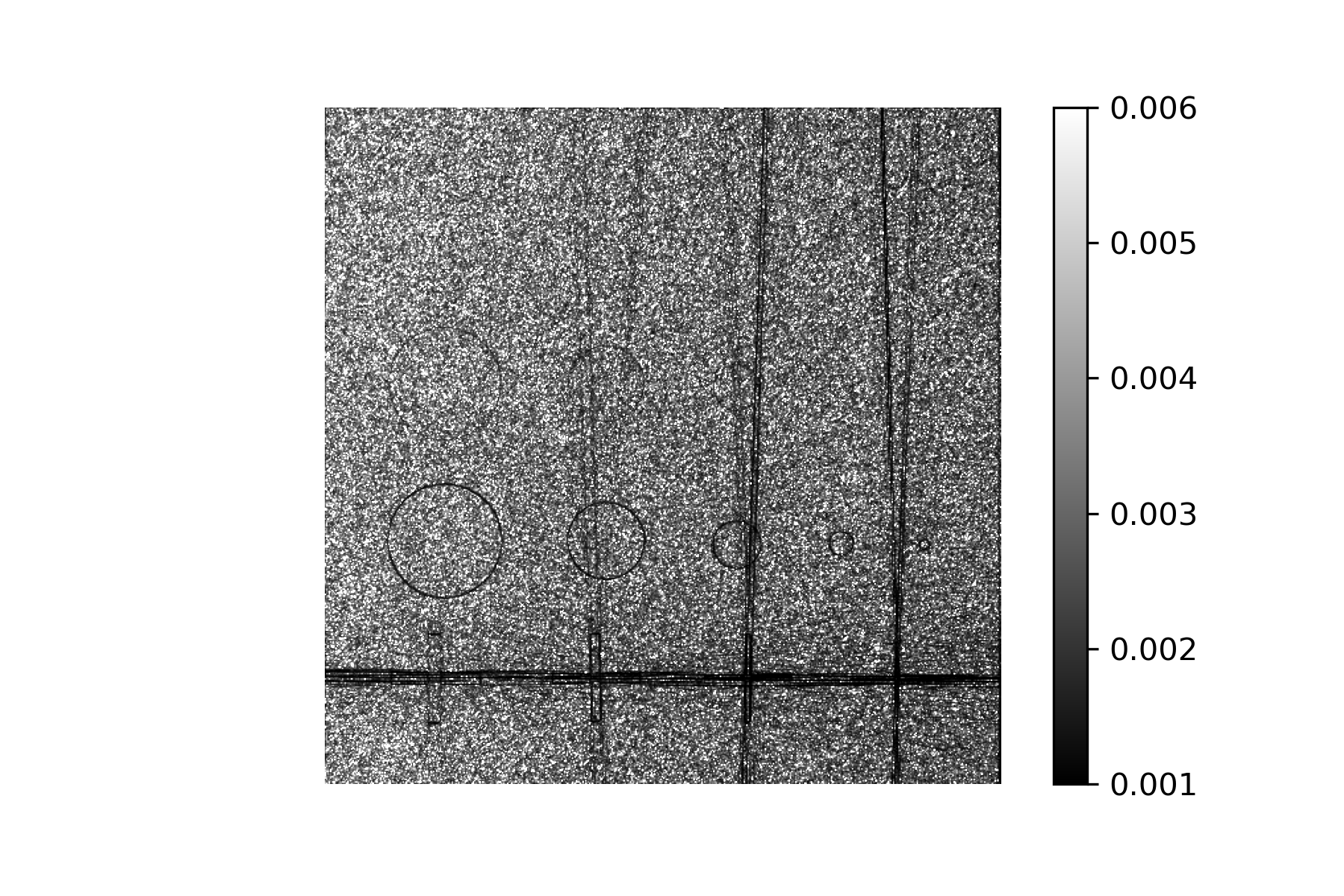}
    \includegraphics[trim={35mm 0 15mm 0},clip, width=0.3\linewidth]{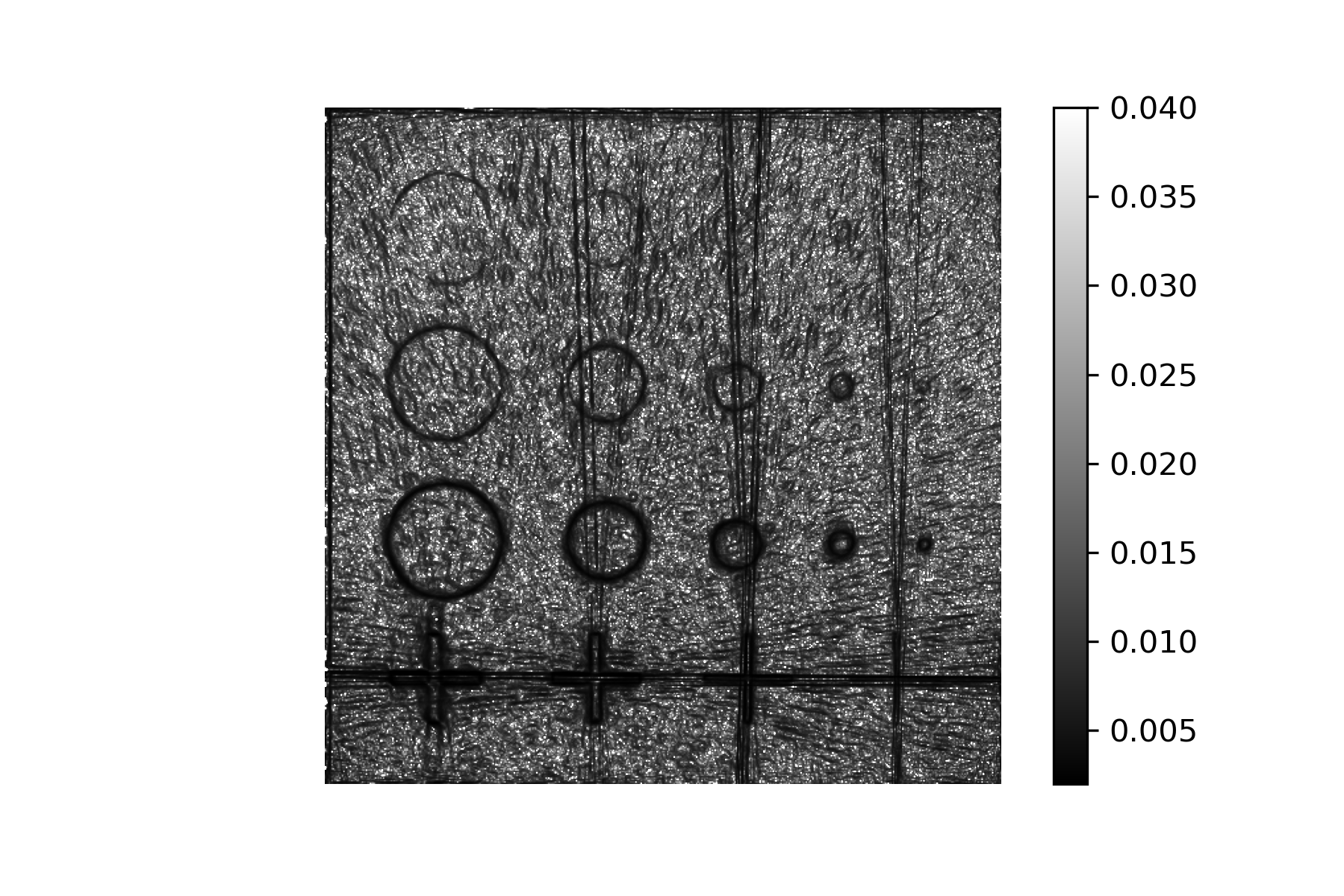} \\ 
  \caption{Experiment on the synthetic image with low noise ($\nu=0.005$). In the first row, from left to right: $\tilde{\x}^{GT}$, $\tilde{\x}^{FBP}$, $\tilde{\x}^{TV}$. In the second row: the corresponding images of weights $\w_\eta(\tilde{\x})$.}
  \label{fig:results_Sint005}
\end{figure}

In Figure \ref{fig:results_Sint005} we display the $\tilde{\x}$ images in the upper row and the corresponding weights $\w_\eta(\tilde{\x})$ in the lower row. The images computed by the FBP and TV reconstructors exhibit small noise but severe artifacts represented by streaks, which are due to the low number of views (45 out of 180 degrees). These artifacts are inherited by the $\w_\eta(\tilde{\x})$ weight images. The colorbar provides information about the range of the reconstructed values by the various methods. However, a quite large value of $\lambda=5$ inhibits these defects and leads to a visually nearly perfect reconstruction (SSIM $> 0.99$). \\
In the subsequent test, we increased the noise ($\nu=0.02$)  and we set $\lambda=10$ and $\eta=2 \cdot 10^{-5}$ for all the methods except the case where $\Psi$ is the FBP algorithm where $\eta=2 \cdot 10^{-3}$.
The results are reported in the lower part of Table \ref{tab:Sintetica} and the images are displayed in in Figure \ref{fig:results_Sint02}. 
The achieved results exhibit overall excellence in terms of SSIM (above all considering the sparsity of the tomographic data). However, as depicted in Figure \ref{fig:results_Sint02}, which showcases a crop of the restored images, discernible differences are evident in output, particularly in the reconstruction of the finer cross details and contrast. Notably, the global TV method consistently fails to reconstruct that cross regardless of the parameter value. The superior performance of the reconstruction when $\tilde{\x}={\x}^{GT}$ underscores the efficacy of adapting regularization to image pixels when provided with a highly accurate approximation of the target image.\\
Figure \ref{fig:results_Sint02} also includes plots illustrating the behavior of the objective function (on the left) and Relative Error (on the right) across iterations for the various examined approaches. It is noteworthy that, despite comparing distinct objective functions in the first plot, their trends exhibit similar descending patterns, ultimately converging to nearly identical values.
In regards to the error plot, the order of the curves is maintained across all iterations, consistent with the sequence discussed in Table \ref{tab:Mayo005}. The plots clearly indicate that the error ceases to decrease further, indicating that the best achievable reconstruction has been obtained for each method.

\begin{figure}
\centering
\includegraphics[trim={65mm 0 0 65mm},clip, width=0.23\linewidth]{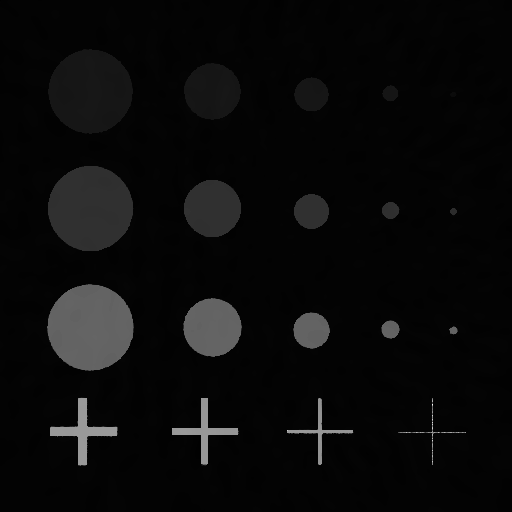}  \ 
\includegraphics[trim={65mm 0 0 65mm},clip, width=0.23\linewidth]{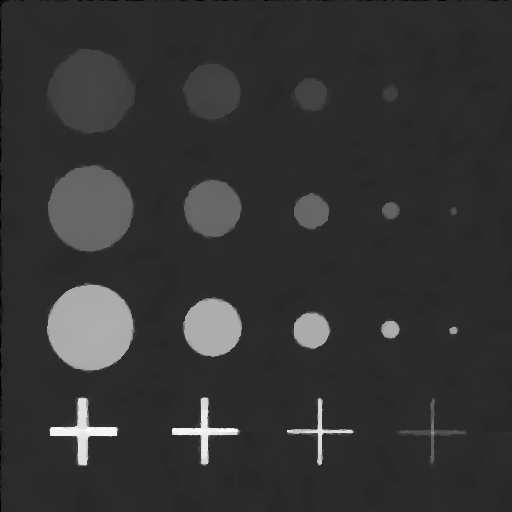}  \
\includegraphics[trim={65mm 0 0 65mm},clip, width=0.23\linewidth]{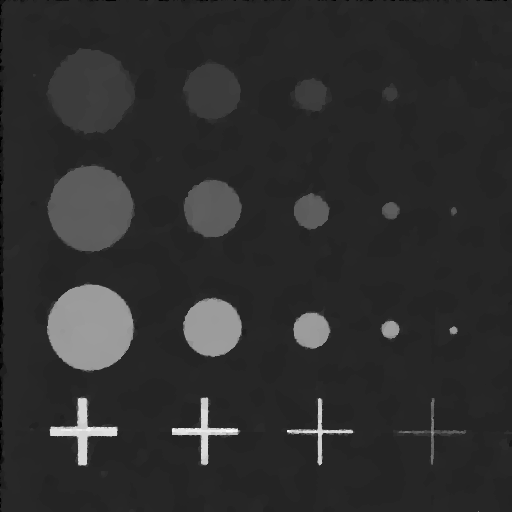} \ 
\includegraphics[trim={65mm 0 0 65mm},clip, width=0.23\linewidth]{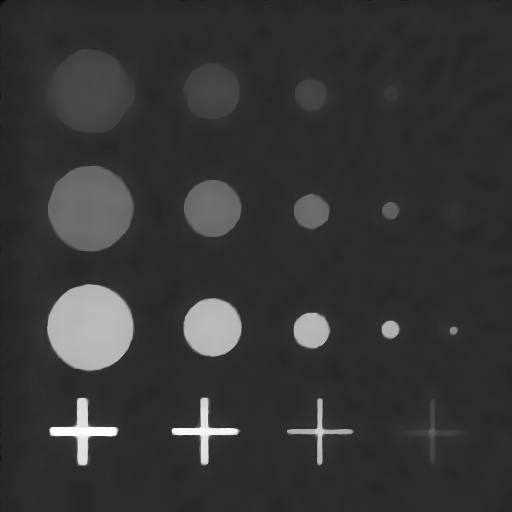} \\
\includegraphics[width=0.45\linewidth]{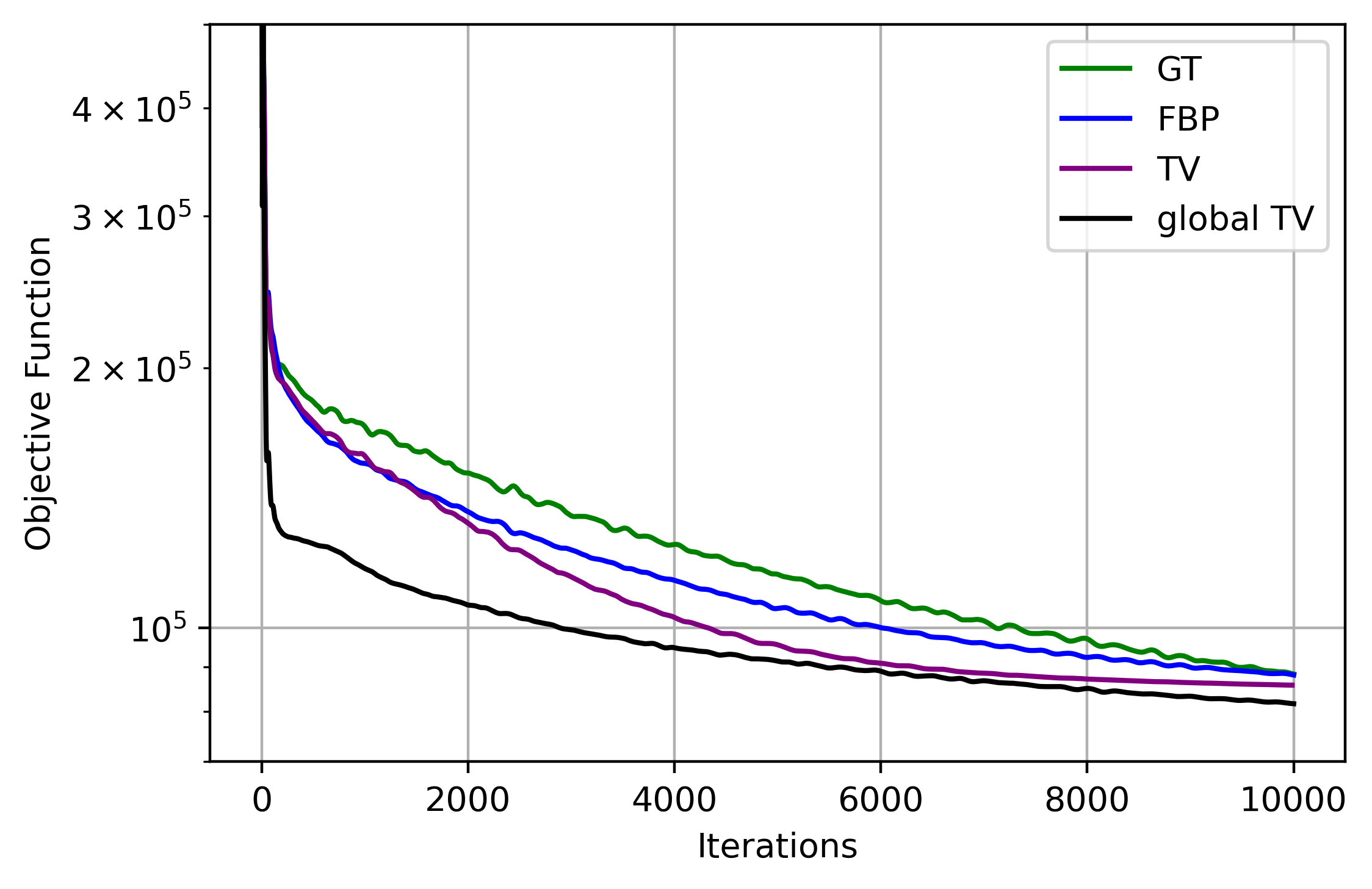}
\includegraphics[width=0.44\linewidth]{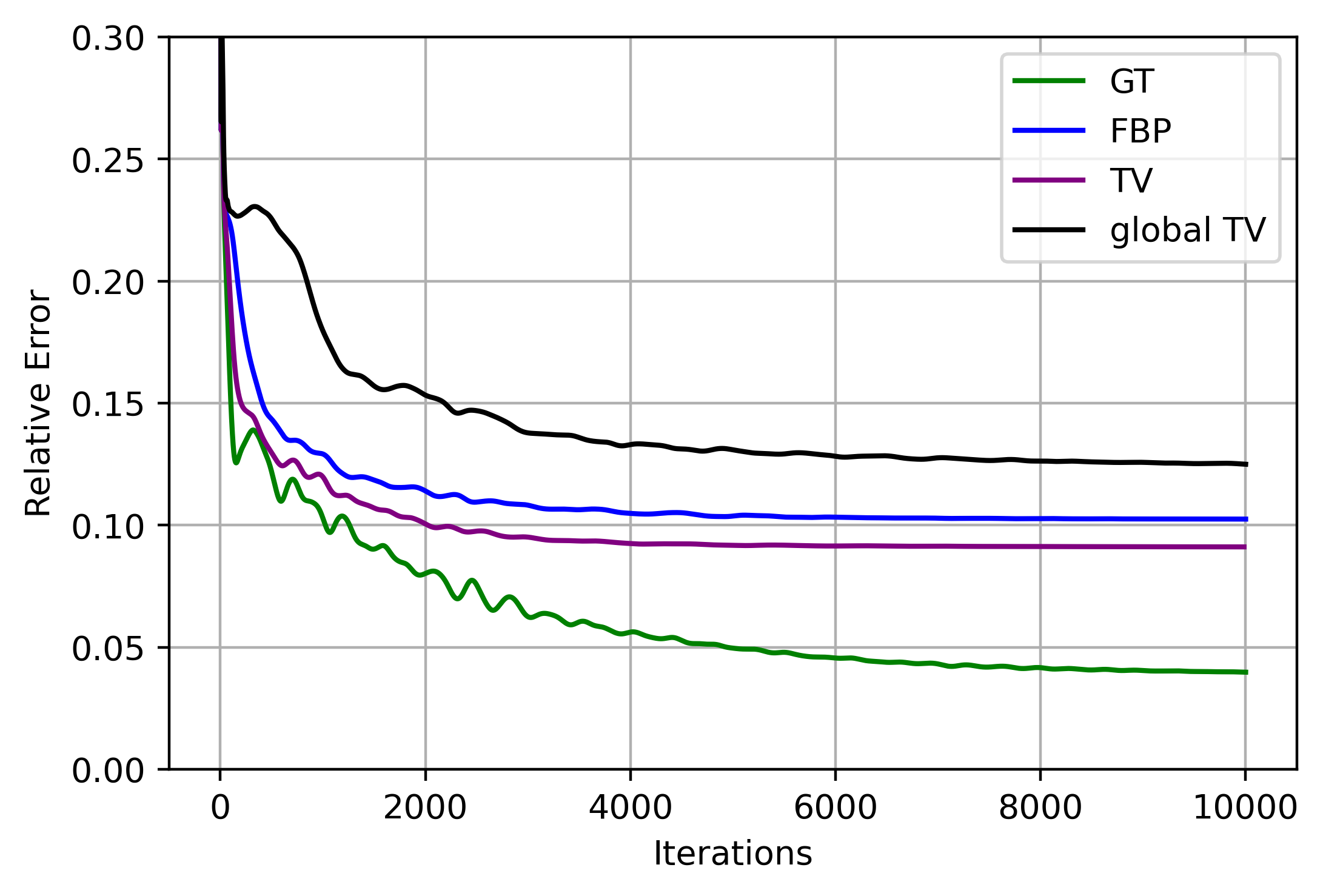}
  \caption{Results of the experiment on the synthetic image with high noise ($\nu=0.02$). In the first row, from left to right: zooms on the lower-right angle of the reconstructions    GT-$W\ell_1$, FBP-$W\ell_1$, TV-$W\ell_1$, global TV. In the second row: plot of the objective function (left) and of the Relative Error  (right) over the iterations.}
  \label{fig:results_Sint02}
\end{figure}


\subsection{Experiments on real medical images}\label{ssec:Mayo}

The second test image is a real chest tomographic image  from Low Dose CT Grand Challenge data set by the Mayo Clinic, which is not part of the training set used for our neural network. The image is  characterized by low-contrast regions and by high-contrast tiny details in the lungs.  It is depicted  with  its  crop  in Figure \ref{fig:MayoGT}. The aim of these experiments is firstly to test the reconstructor $\Psi_{Theta}$ constituted by the NN, and secondly to analyze the methods' performance on a real image with a gradient less sparse than in the synthetic, previous case.\\
We created the test problem as in the synthetic simulation and initially fixed $\nu=0.005$. 
From the metrics reported in Table \ref{tab:Mayo005} we observe that the NN-$W\ell_1$ approach always overcomes the FBP-$W\ell_1$ in final quality. Regarding the elastic-loss of the neural network defined in \eqref{eq:ElasticLoss}, the best outcomes are attained with $\alpha=1$, wherein the network learns merely gradients instead of the image. In fact, the $\tilde{\x}$ image is of low quality, yet the computed weights derived from it yield highly accurate reconstructions. Interestingly, setting $\alpha=0.5$ gives the best $\tilde\x$ and the corresponding final solution has metrics very close to one achieved with $\alpha=1$. In addition, in these two cases the behavior of the RE metric is highly comparable over the iterations, as visible in the plot in Figure \ref{fig:MayoGT}. \\

In the final experiment, we introduce high noise by setting $\nu=0.02$ and we apply the $\Psi$-W$\ell_1$ methods.  In this case we start from the same image provided by the $\Psi_{\Theta}$ reconstructor (with $\alpha=0.5$) and we consider various values for the parameter  $\eta=2 \cdot 10^{-3},2 \cdot 10 ^{-4}, 2 \cdot 10^{-5}$. Table \ref{tab:Mayo02} demonstrates that the value of $\eta$ exerts a discrete influence on the quality of the output image, as also depicted in Figure \ref{fig:results_Mayo02_w}. The initial column of the figure displays the masks in descending order of $\eta$ from top to bottom. 
The second and third columns of Figure \ref{fig:results_Mayo02_w} correspond to the final reconstruction. The top image emphasizes the edges of the objects, the middle one smoothens the background to a value close to one, and the last one produces an excessively smooth image. Thus, the optimal reconstruction is attained with $\eta=2 \cdot 10^{-4}$, representing the optimal compromise among the aforementioned reconstructions

\begin{figure}
\centering
\includegraphics[width=0.24\linewidth]{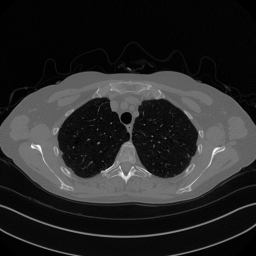}
\includegraphics[trim={35mm 10mm 10mm 35mm},clip, width=0.24\linewidth]{imm_Mayo_005/mayoGT.png}
\includegraphics[width=0.4\linewidth]{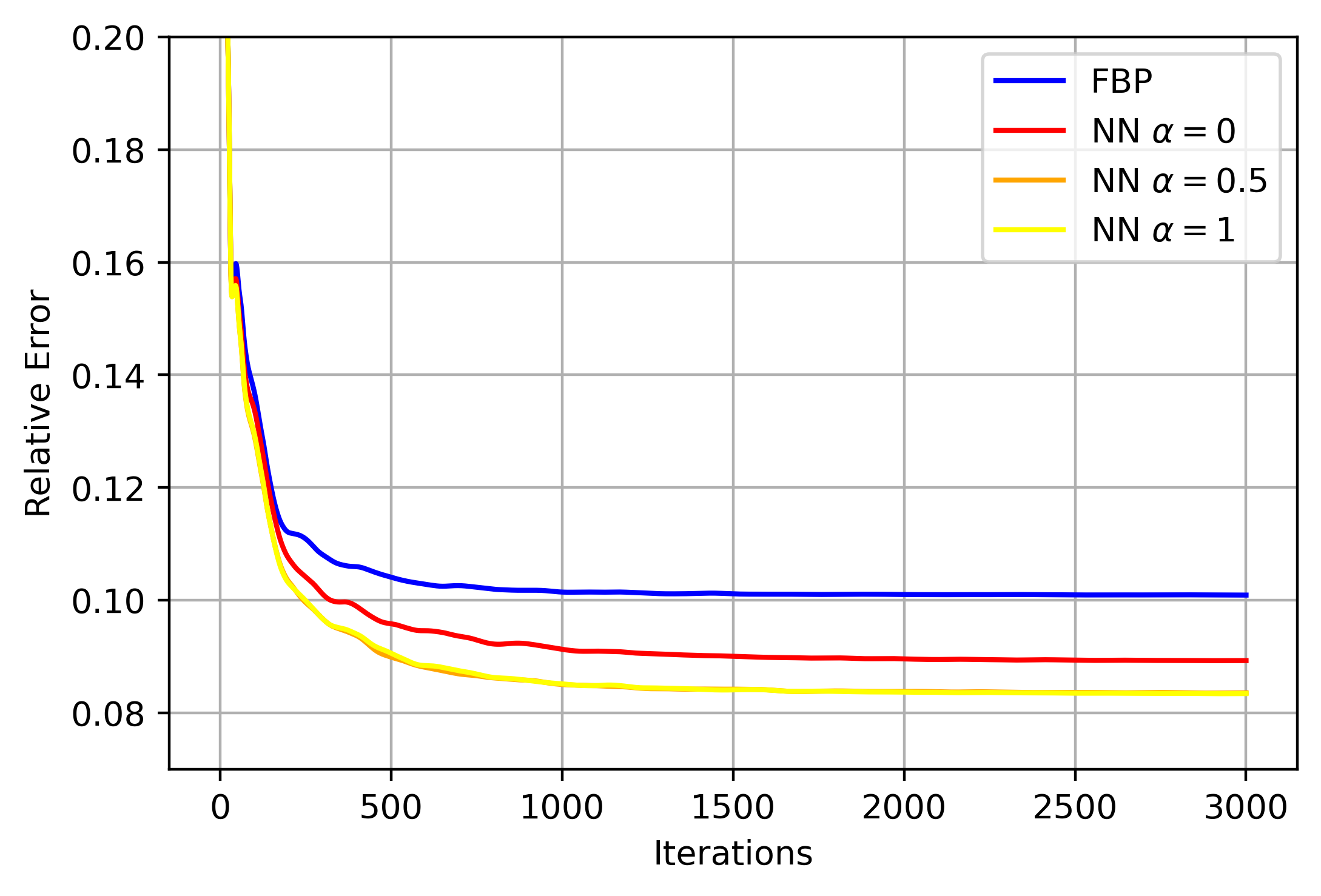}
  \caption{From left to right: ground truth image from Mayo data set, a cropped portion of it, a plot of the RE over iterations for the reconstructions  by FBP-$W\ell_1$ and NN-$W\ell_1$ with $\alpha=0, 0.5, 1$ ($\nu=0.005$).}
  \label{fig:MayoGT}
\end{figure}



\begin{table}[h]
\caption{Performance results on the real image with low ($\nu=0.005$) noise. In the first three columns the metrics relative to the image $\tilde{\x}$, in the last three columns the metrics relative to the output image $\x^*_{\Psi,\delta}$. }\label{tab:Mayo005}%
\begin{tabular*}{\textwidth}{@{\extracolsep\fill}l rrr rrr}
\toprule
 & \multicolumn{3}{@{}c@{}}{ $\tilde{\x} $} & \multicolumn{3}{@{}c@{}}{$\x^*_{\Psi,\delta}$}\\
 & RE & PSNR   & SSIM & RE & PSNR   & SSIM\\
\cmidrule{2-4} \cmidrule{5-7}
\midrule
FBP-$W\ell_1$                     & 0.2879 & 23.1629 & 0.4133         & 0.1009 & 32.2698 & 0.8832  \\
NN-$W\ell_1$  ($\alpha=0$)    & 0.1038 & 32.0253 & 0.8588         & 0.0893 & 33.3329 & 0.9009  \\ 
NN-$W\ell_1$  ($\alpha=0.5$)  & 0.0854 & 33.7229 & 0.8994         & 0.0836 & 33.9087 & 0.9121  \\
NN-$W\ell_1$  ($\alpha=1$)    & 0.5092 & 18.2113 & 0.3843         & 0.0834 & 33.9224 & 0.9128  \\    
\bottomrule
\end{tabular*}
\end{table}



\begin{table}[h]
\caption{Performance results on the real image with high  noise ($\nu=0.02$). In the first three columns the metrics relative to the image $\tilde{\x}$, in the last three columns the metrics relative to the output image $\x^*_{\Psi,\delta}$. }\label{tab:Mayo02}%
\begin{tabular*}{\textwidth}{@{\extracolsep\fill}l rrr rrr}
\toprule
 & \multicolumn{3}{@{}c@{}}{ $\tilde{\x} $} & \multicolumn{3}{@{}c@{}}{$\x^*_{\Psi,\delta}$}\\
            & RE & PSNR   & SSIM               & RE & PSNR   & SSIM\\
\cmidrule{2-4} \cmidrule{5-7}
\midrule
FBP-$W\ell_1$                       & 0.4910 & 18.5280 & 0.2118         & 0.1431 & 29.2351 & 0.8064  \\
NN-$W\ell_1$ ($\eta=2 \cdot 10^{-4}$)   & 0.1052 & 31.9065 & 0.8513         & 0.1473 & 28.9847 & 0.8078  \\ 
NN-$W\ell_1$ ($\eta=2 \cdot 10^{-3}$)   & 0.1052 & 31.9065 & 0.8513         & 0.1214 & 30.6619 & 0.8546  \\
NN-$W\ell_1$ ($\eta=2 \cdot 10^{-2}$)   & 0.1052 & 31.9065 & 0.8513         & 0.1219 & 30.6318 & 0.8421  \\    
\bottomrule
\end{tabular*}
\end{table}

\begin{figure}
\centering

\includegraphics[trim={35mm 11mm 15mm 0},clip, width=0.33\linewidth]{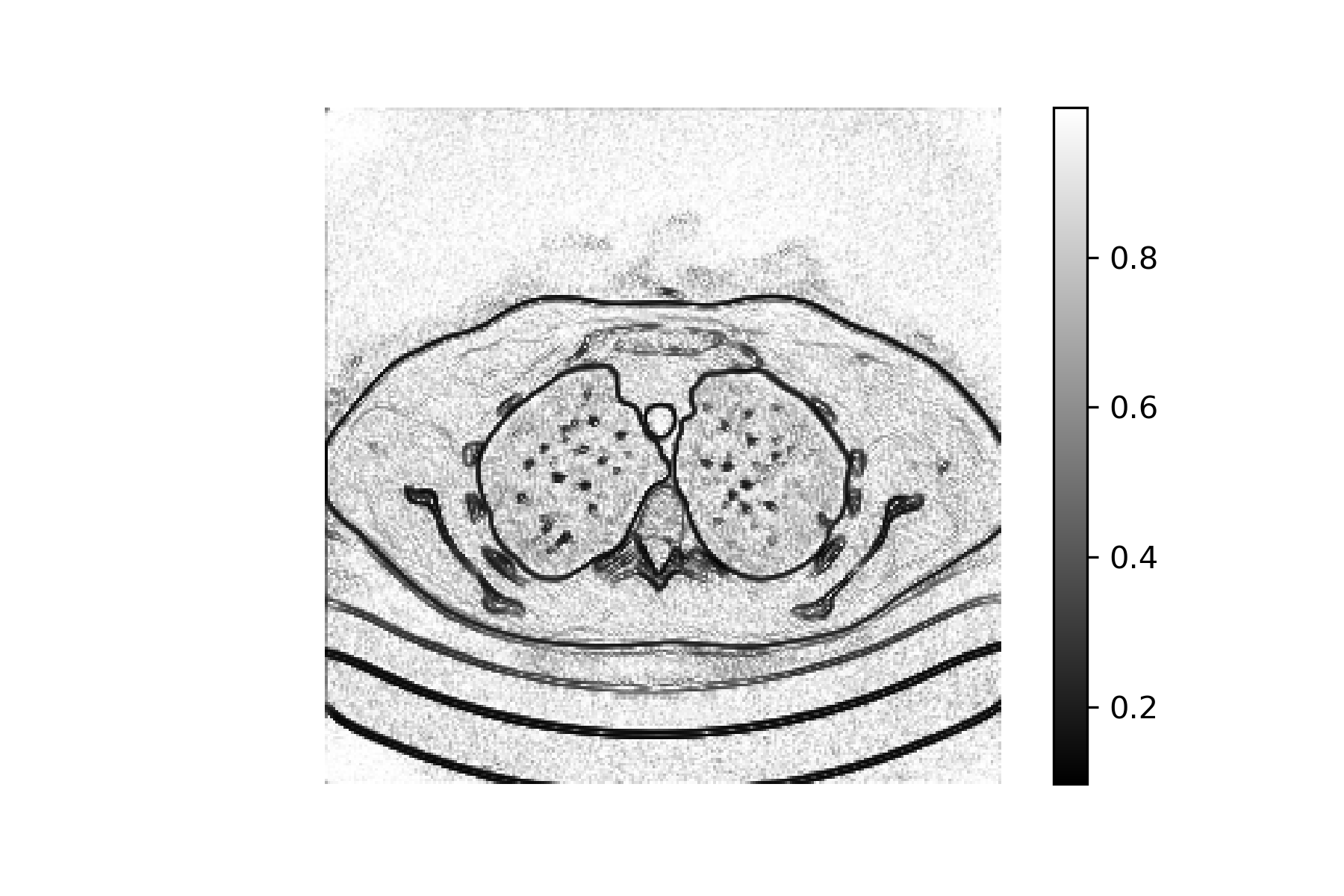}
\includegraphics[trim={35mm 11mm 15mm 0},clip, width=0.33\linewidth]{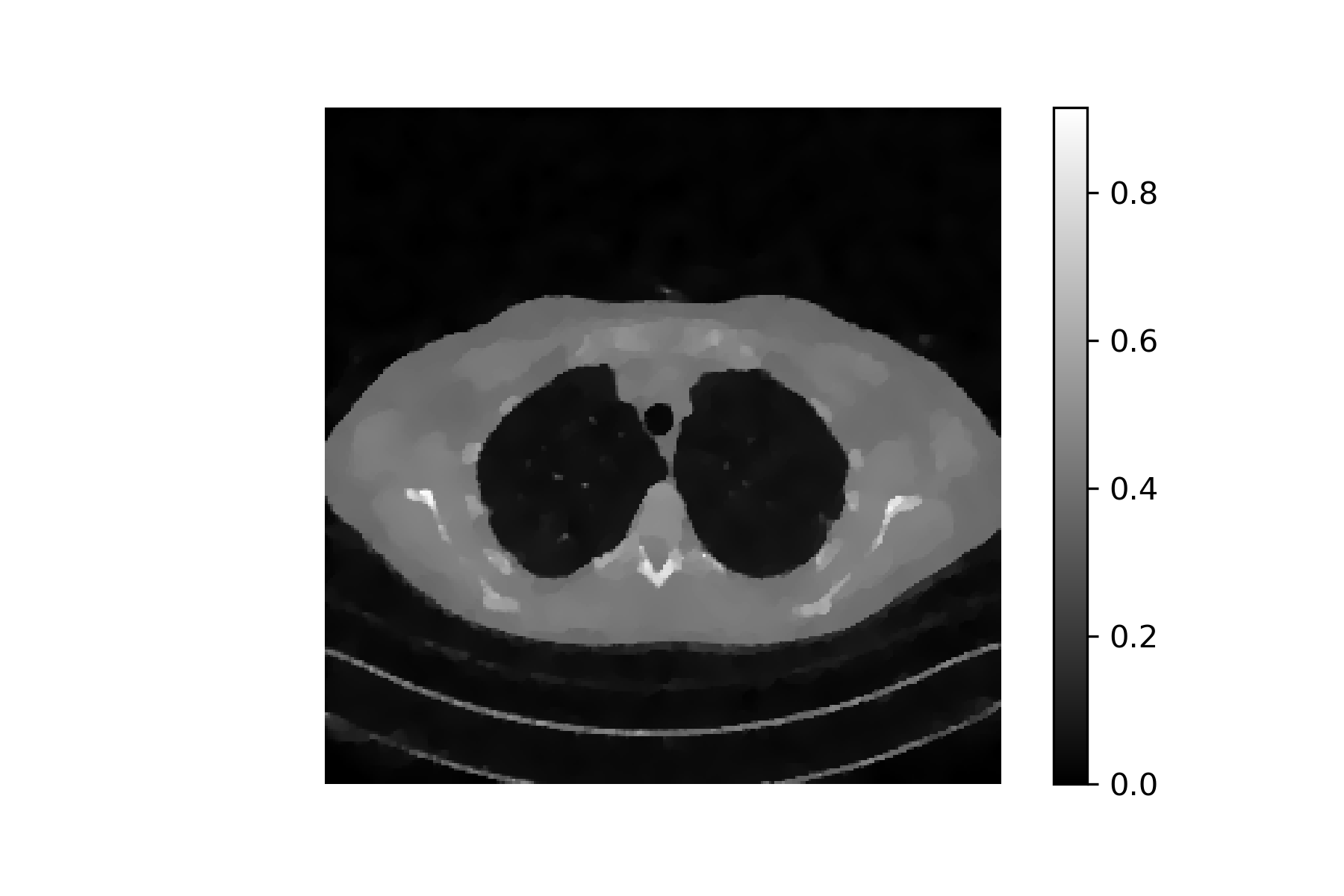}
\includegraphics[trim={35mm 10mm 10mm 35mm},clip, width=0.25\linewidth]{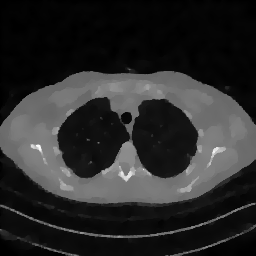} \\
\includegraphics[trim={35mm 11mm 15mm 0},clip, width=0.33\linewidth]{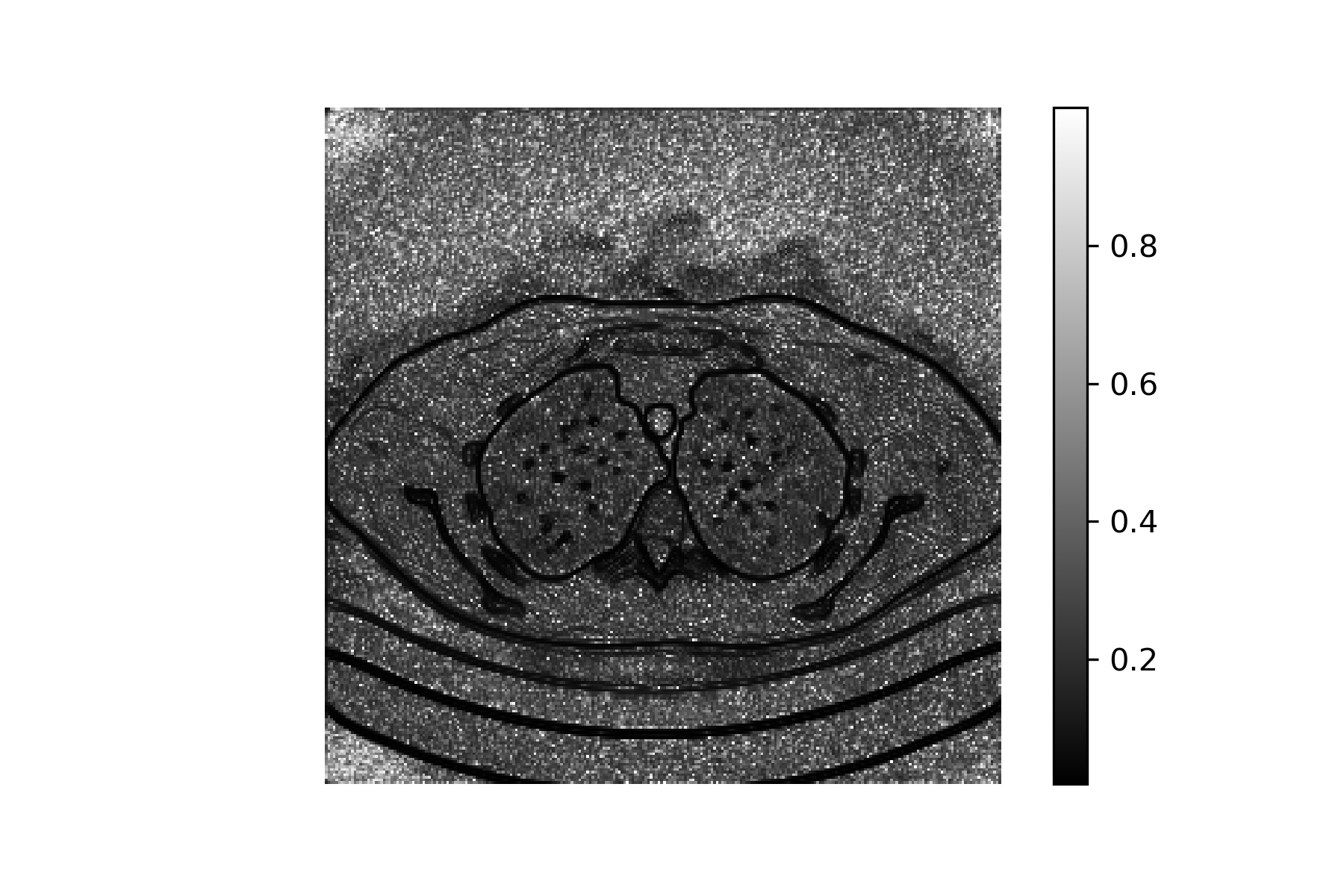}
\includegraphics[trim={35mm 11mm 15mm 0},clip, width=0.33\linewidth]{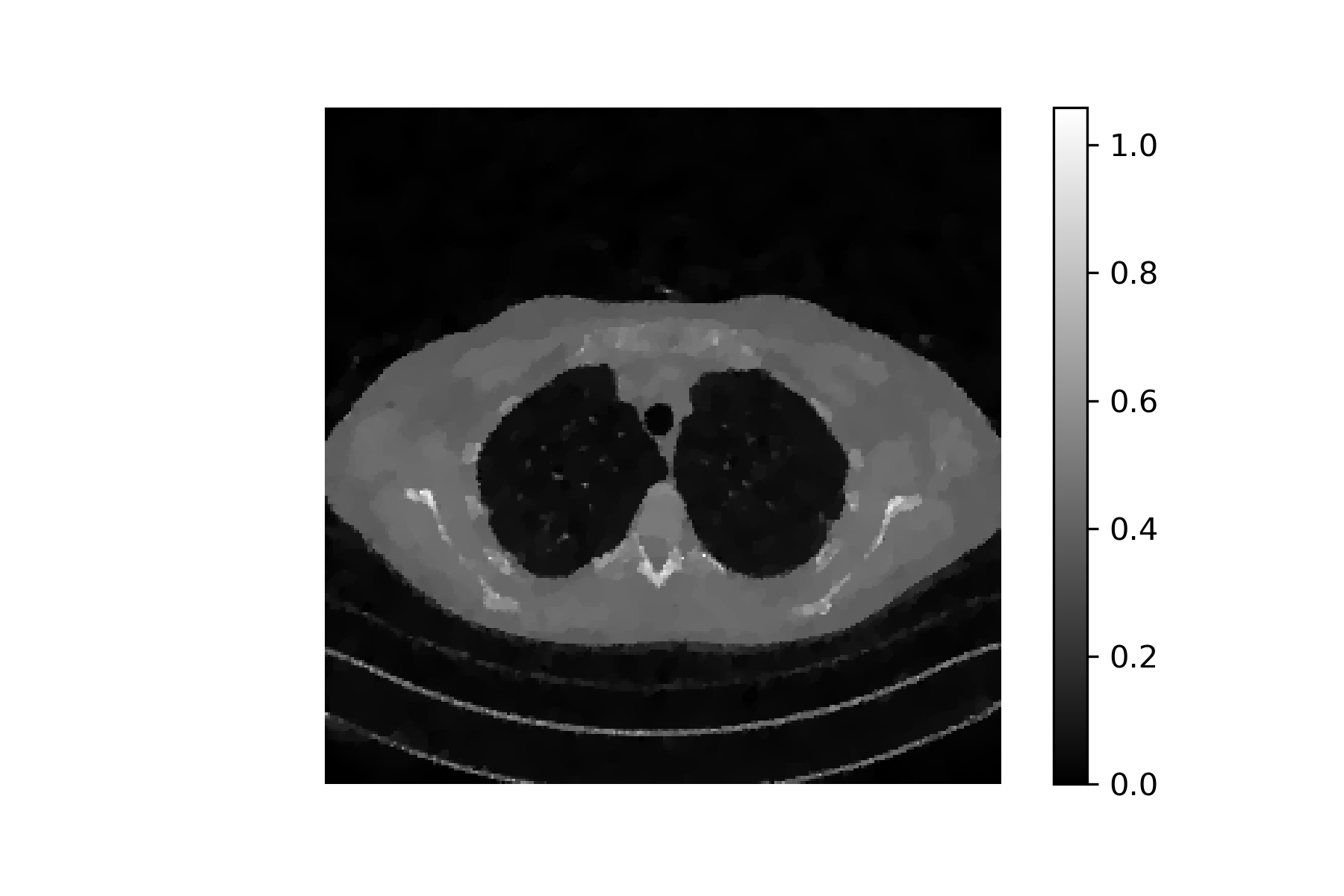}
\includegraphics[trim={35mm 10mm 10mm 35mm},clip, width=0.25\linewidth]{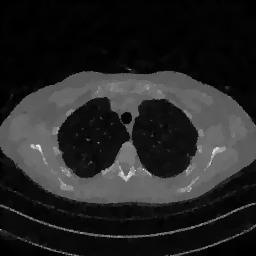} \\
\includegraphics[trim={35mm 11mm 15mm 0},clip, width=0.33\linewidth]{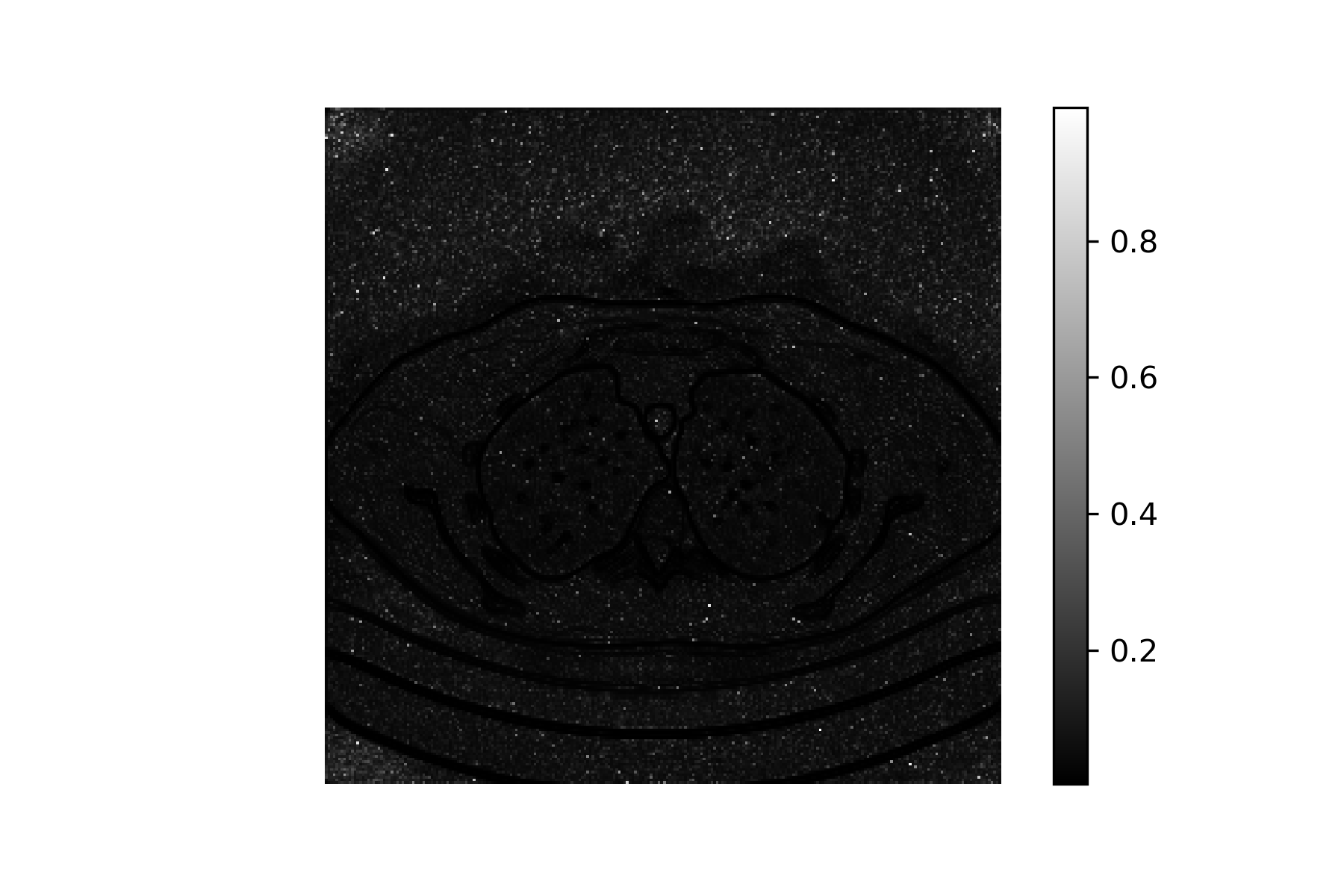}
\includegraphics[trim={35mm 11mm 15mm 0},clip, width=0.33\linewidth]{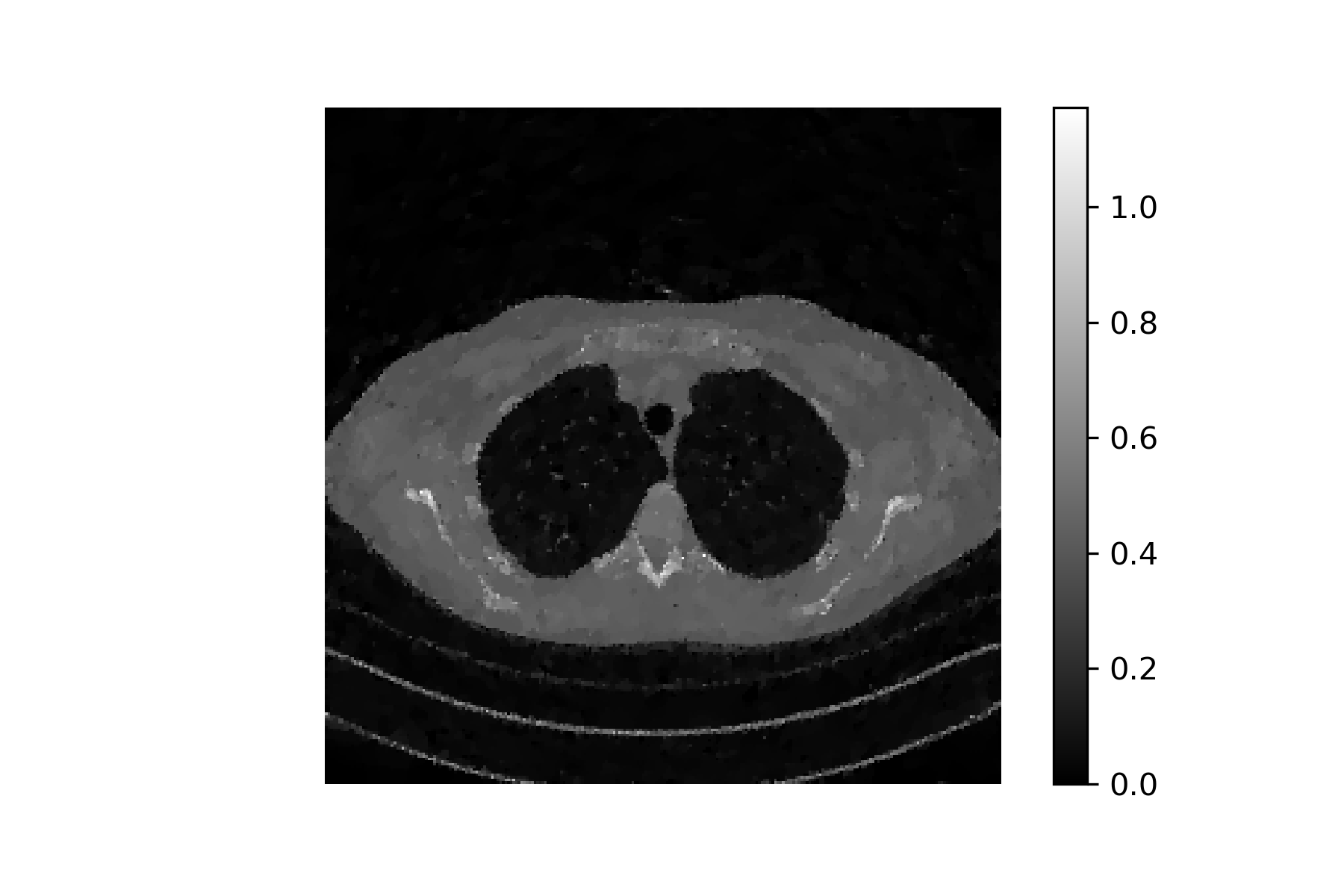} 
\includegraphics[trim={35mm 10mm 10mm 35mm},clip, width=0.25\linewidth]{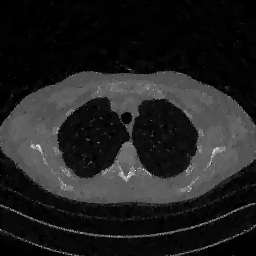}
  \caption{Results of the experiment on the real image with high noise ($\nu=0.02$) with NN-$W\ell_1$. In the left column from top to down the images of the weights with $\eta=2 \cdot 10^{-2}$, $\eta=2 \cdot 10^{-3}$, $\eta=2 \cdot 10^{-4}$; in the central column the corresponding reconstructions and in the right column a cropped portion of them.}
  \label{fig:results_Mayo02_w}
\end{figure}

\section{Conclusion}\label{sec:Concl}

In this paper, we propose a space-variant weighted Total Variation regularization method for reconstructing tomographic images from sparse views. Sparse-view CT constitutes an under-determined linear inverse problem, the solution of which is marked by noise and prominent artifacts.
The proposed weighting strategies is obtained by linearizing a smoothed Total p-Variation, and by exploiting an image that we called $\Tilde x$, without any prior knowledge of the noise characteristics in the data.
We have used a data driven approach to effectively approximate the ground truth image or its gradients. Finally we have applied the space-variant Total Variation to a sinogram from  very sparse geometry both of a synthetic and real image. The numerical results show that our model outperforms the global Total Variation, which uniformly acts on all the pixels of the image. Moreover, the best result is obtained when the information of the weights is retrieved from the ground truth, as demonstrated in the theoretical analysis. The approximation achieved by the neural network demonstrates its effectiveness, particularly when it learns to reconstruct the gradients of the images.
The proposed spatially variant weighted approach can be easily extended to regularization methods other than Total Variation. In addition, we also intend to adaptively select the value of $p$ in \eqref{eq:Psi_WL1_formulation}, in order to appropriately regularize both smooth and non-smooth regions of the image.

\section*{Declarations}

\subsection{Fundings} 
This work was partially supported by “Gruppo Nazionale per il Calcolo Scientifico (GNCS-INdAM)” (Progetto 2023 “Modelli e metodi avanzati in Computer Vision"). \\
E. Loli Piccolomini, D. Evangelista and E. Morotti are supported by the PRIN 2022 project “STILE: Sustainable Tomographic Imaging with Learning and rEgularization”, project  code: 20225STXSB, funded by the European Commission under the NextGeneration EU programme.\\
A. Sebastiani is supported by the project “PNRR - Missione 4 “Istruzione e Ricerca” - Componente C2 Investimento 1.1 “Fondo per il Programma Nazionale di Ricerca e Progetti di Rilevante Interesse Nazionale (PRIN)”, “Advanced optimization METhods for automated central veIn Sign detection in multiple sclerosis from magneTic resonAnce imaging (AMETISTA)”, project code: P2022J9SNP, MUR D.D. financing decree n. 1379 of 1st September 2023 (CUP E53D23017980001) funded by the European Commission under the NextGeneration EU programme.

\paragraph{Conflict of interest.} The authors declare that they have no conflict of interest in relation to the research presented in this paper.

\paragraph{Data availability.} Under request to the authors.

\paragraph{Materials availability.} Not applicable.

\paragraph{Code availability.} Under request to the authors.

\begin{appendices}

\section{Theoretical derivation of convergence results}\label{sec:Appendix}

In the following, we prove a stability result showing that $|| \x^*_{GT, \delta} - \x^*_{\Psi, \delta} ||_1$, i.e. the error introduced by using $\Psi(\y^\delta)$ instead of $\x^{GT}$ into the computation of $\w_\eta(\tilde{\x})$, converges to $0$ as the distance $|| \Psi(\y^\delta) - \x^{GT} ||_1$ goes to 0.

We first need to introduce a preliminary result.
\begin{proposition}\label{prop:uniformconvergence_implies_convergenceminimizers}
    Let $\{ f_k \}_{k \in \mathbb{N}}$ be a sequence of functions converging uniformly to $f^*$ on any compact subset of $\X$. Let us assume that for any $k \in \mathbb{N}$ it holds:
    \begin{enumerate}[label=(\roman*)]
        \item $\min_{\x \in \X} f_k(\x)$ has a unique solution $\x_k \in \R^n$, 
        \item $\min_{\x \in \X} f^*(\x)$ has a unique solution $\x^* \in \R^n$.
    \end{enumerate}
    Then:
    \begin{align*}
        \lim_{k \to \infty} || \x_k - \x^* ||_1 = 0.
    \end{align*}
\end{proposition}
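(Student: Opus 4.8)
The plan is to recognise this as an instance of the classical principle that uniform convergence of functionals forces convergence of their minimizers. Concretely, I would show that the minimizers $\x_k$ of $f_k$ constitute a \emph{minimizing sequence} for the limit functional $f^*$, and then combine coercivity, lower semicontinuity, and the uniqueness in hypothesis (ii) to pin the limit down to $\x^*$.

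First I would confine $\{\x_k\}_k$ to a fixed compact set $C \subseteq \X$ containing $\x^*$. This is the step requiring the most care, since uniform convergence is only assumed on compact subsets and, a priori, nothing forbids the minimizers from drifting to infinity. In the setting where the result is applied, the functionals $f_k$ are the objectives of \eqref{eq:MinWeighted} associated with the different weight vectors; they are convex, lower semicontinuous and \emph{equi-coercive}, because $\ker(\K)\cap\ker(\D)=\{\boldsymbol{0}\}$ makes $\frac{1}{2}\|\K\x-\y^\delta\|_2^2$ plus the weighted total variation control $\|\x\|$, and the weights $\w_\eta(\Psi_k(\y^\delta))$ are bounded below by a common positive constant once $|\D\Psi_k(\y^\delta)|$ converges (hence is uniformly bounded). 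Equi-coercivity makes the sublevel sets $\{\x : f_k(\x) \le c\}$ sit inside one common ball, so all $\x_k$ lie in a single compact $C$.

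On this compact set I would exploit the two optimality inequalities
\begin{align*}
    f_k(\x_k) \le f_k(\x^*), \qquad f^*(\x^*) \le f^*(\x_k),
\end{align*}
and rewrite $f^*(\x_k) = \bigl(f^*(\x_k) - f_k(\x_k)\bigr) + f_k(\x_k) \le \bigl(f^*(\x_k) - f_k(\x_k)\bigr) + f_k(\x^*)$. Letting $k \to \infty$, the bracketed term vanishes because $\sup_{\x \in C}|f^*(\x) - f_k(\x)| \to 0$ by uniform convergence on $C$, while $f_k(\x^*) \to f^*(\x^*)$ by convergence at the single point $\x^*$. Hence $\limsup_k f^*(\x_k) \le f^*(\x^*)$, and together with $f^*(\x_k) \ge f^*(\x^*)$ this gives $f^*(\x_k) \to f^*(\x^*) = \min_{\x \in \X} f^*(\x)$, i.e. $\{\x_k\}$ is a minimizing sequence for $f^*$.

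Finally I would close with a compactness-plus-uniqueness argument. Assuming for contradiction that $\x_k \not\to \x^*$, some subsequence $\x_{k_j}$ stays at distance at least $\varepsilon > 0$ from $\x^*$; being contained in $C$, it has a further subsequence converging to some $\bar{\x} \in \X$ with $\bar{\x} \ne \x^*$. Lower semicontinuity then yields $f^*(\bar{\x}) \le \liminf_j f^*(\x_{k_j}) = \min_{\x \in \X} f^*(\x)$, so $\bar{\x}$ is a second minimizer of $f^*$, contradicting (ii). Therefore $\x_k \to \x^*$, and since all norms on $\R^n$ are equivalent the convergence holds in $\ell_1$, giving $\lim_{k \to \infty}\|\x_k - \x^*\|_1 = 0$. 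I expect the only genuine difficulty to be the equi-coercivity/boundedness in the first step; once $\{\x_k\}$ is trapped in a compact set, the remainder is the routine "uniform convergence implies convergence of minimizers" mechanism.
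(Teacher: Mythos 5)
Your proof is correct and follows essentially the same route as the paper's: trap the minimizers $\{\x_k\}$ in a compact set, use uniform convergence to pass the optimality inequality to the limit along a convergent subsequence, and invoke the uniqueness of the minimizer of $f^*$ to upgrade subsequential convergence to convergence of the whole sequence. The only substantive difference is that you justify the confinement of $\{\x_k\}$ to a compact set inside the argument via equi-coercivity, whereas the paper simply posits such a compact set at this stage and verifies the boundedness of the minimizer sequence separately at each point where the proposition is applied (e.g., in Propositions \ref{prop:stability_delta0} and \ref{prop:convergence_Psi}).
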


\begin{proof}
    Let $\x_k$ be the unique minimizer of $f_k$ for any $k \in \mathbb{N}$ and let $C \subseteq \X$ be any compact set such that $\x_k \in C$ for any $k \in \mathbb{N}$. Since $\{ \x_k \}_{k \in \mathbb{N}}$ is a sequence in a compact set, there exists a convergent subsequence $\{ \x_{k_j} \}_{j \in \mathbb{N}}$. Let $\x^*$ be the limit of this subsequence. By definition, for any $k \in \mathbb{N}$, it holds:
    \begin{align*}
        f_k(\x_k) \leq f_k(\x), \quad \forall \x \in C,
    \end{align*}
    and in particular:
    \begin{align*}
        f_{k_j}(\x_{k_j}) \leq f_{k_j}(\x), \quad \forall \x \in C.
    \end{align*}
    Since $f_k$ converges uniformly to $f^*$ and $\x_{k_j}$ converges to $\x^*$, it holds that:
    \begin{align*}
        &\lim_{j \to \infty} f_{k_j}(\x_{k_j}) = f^*(\x^*), \\
        &\lim_{j \to \infty} f_{k_j}(\x) = f^*(\x), \quad \forall \x \in C.
    \end{align*}
    Consequently,
    \begin{align*}
        f^*(\x^*) \leq f^*(\x), \quad \forall \x \in C,
    \end{align*}
    i.e. $\x^*$ is a minimizer of $f^*$. Since $f^*$ has a unique minimizer, then all the subsequences of $\x_k$ converge to the same element $\x^*$, which implies that $\x_k$ converges to $\x^*$ as $k \to \infty$. In particular:
    \begin{align*}
        \lim_{k \to \infty} || \x_k - \x^* ||_1 = 0.
    \end{align*}
\end{proof}

Now, revisiting the definition of the $\Psi$ reconstructor introduced in Section \ref{ssec:conv}, we expound upon the notions of $p$-norm accuracy $\eta_p^{-1}$ and $p$-norm $\epsilon$-stability constant $C^\epsilon_{\Psi, p}$ associated with $\Psi$, thereby extending the concepts initially introduced in \cite{evangelista2022or,evangelista2023ambiguity}.
%
\begin{definition}\label{def:accuracy}
    Let $\Psi: \R^m \to \R^n$ be a reconstructor. Let:
    \begin{align*}
        \eta_p := \sup_{\x \in \X} || \Psi(\K \x) - \x ||_p.
    \end{align*}
    If $\eta_p < \infty$, we say the $\Psi$ is $\eta_p^{-1}$-accurate in $p$-norm.
\end{definition}

\begin{definition}\label{def:stability_constant}
    Let $\Psi: \R^m \to \R^n$ be an $\eta_p^{-1}$-accurate reconstructor in $p$-norm. Then, for any $\epsilon > 0$, we say that $C^\epsilon_{\Psi, p}$ is the $p$-norm $\epsilon$-stability constant if:
    \begin{align*}
        C^\epsilon_{\Psi, p} = \sup_{\substack{\x \in \X \\ || \e ||_p \leq \epsilon}} \frac{|| \Psi(\K\x + \e) - \x ||_p - \eta_p}{|| \e ||_p}.
    \end{align*}
\end{definition}
For an analysis of the accuracy and the stability constant of a reconstructor, see \cite{evangelista2022or} where the case $p=2$ is explored in detail. Note that most of the results obtained for $p=2$ can be easily extended to $p\geq 1$. 
In the context of this study, we emphasize that $\eta_p$ delineates the upper bound on the reconstruction error (measured in the $p$-norm) incurred by the reconstructor $\Psi$ when applied to noiseless data (when $\delta=0$). Conversely, $C^\epsilon_{\Psi, p}$ serves as a metric quantifying the amplification (or reduction), provided by $\Psi$, of input noise, which is upper bounded within a $p$-norm by $\epsilon$ in the data,  
since Definition \ref{def:stability_constant} can be rewritten as:
\begin{align}\label{eq:consequence_stab_constant}
    || \Psi(\K\x + \e) - \x ||_p \leq \eta_p + C^\epsilon_{\Psi, p} ||\e ||_p, \quad \forall \: || \e ||_p \leq \epsilon, \: \x \in \X.
\end{align} 

Now, we need to prove a couple of results.  
\begin{proposition}\label{prop:lip_of_w}
    For any $\delta > 0$ and any $\eta > 0$, $\w_\eta(\tilde{\x})$ defined in Equation \eqref{eq:OurWeights} is a Lipschitz continuous function of $| \D\tilde{\x} |$ in 1-norm, i.e. there exists $L(\w_\eta) \geq 0$, such that for any $\tilde{\x}, \tilde{\x}' \in \X$:
    \begin{align*}
        || \w_\eta(\tilde \x) - \w_\eta(\tilde{\x}') ||_1 \leq L(\w_\eta) || | \D \tilde{\x} | - | \D \tilde{\x}' | ||_1.
    \end{align*}
\end{proposition}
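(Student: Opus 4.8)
The plan is to exploit the fact that $\w_\eta$ acts \emph{component-wise} on the gradient magnitude, so that the vector-valued statement collapses to a one-dimensional Lipschitz estimate. Concretely, for each pixel $i$ the $i$-th entry of $\w_\eta(\tilde{\x})$ depends only on the scalar $(|\D\tilde{\x}|)_i \ge 0$ through the single-variable function
\[
    g(t) := \left( \frac{\eta}{\sqrt{\eta^2 + t^2}} \right)^{1-p} = \eta^{1-p}(\eta^2+t^2)^{-(1-p)/2}, \qquad t \ge 0 .
\]
If I can prove that $g$ is Lipschitz on $[0,\infty)$ with some constant $L(\w_\eta)$, then writing $t_i = (|\D\tilde{\x}|)_i$ and $s_i = (|\D\tilde{\x}'|)_i$ I obtain
\[
    || \w_\eta(\tilde{\x}) - \w_\eta(\tilde{\x}') ||_1 = \sum_{i=1}^n |g(t_i) - g(s_i)| \le L(\w_\eta) \sum_{i=1}^n |t_i - s_i| = L(\w_\eta)\, || \, |\D\tilde{\x}| - |\D\tilde{\x}'| \, ||_1 ,
\]
which is exactly the claimed bound. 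Thus everything reduces to the scalar Lipschitz property of $g$.

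Next I would establish that $g$ is continuously differentiable on $[0,\infty)$. Since $g$ depends on $t$ only through $t^2$, it is an even, smooth function on all of $\R$; in particular no irregularity arises at the origin, and differentiation gives
\[
    g'(t) = -\eta^{1-p}(1-p)\, \frac{t}{(\eta^2+t^2)^{(3-p)/2}} , \qquad g'(0) = 0 .
\]
I would then set $L(\w_\eta) := \sup_{t \ge 0} |g'(t)|$ and conclude, via the mean value theorem applied to $g$ on the interval with endpoints $t$ and $s$, that $|g(t)-g(s)| \le L(\w_\eta)\,|t-s|$ for all $t,s \ge 0$, completing the reduction above.

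The decisive step — and the only place where the hypothesis $0 < p < 1$ is genuinely used — is verifying that $\sup_{t \ge 0} |g'(t)| < \infty$. Near $t = 0$ the linear factor $t$ forces $g'(t) \to 0$, while as $t \to \infty$ the numerator grows linearly but the denominator grows like $t^{3-p}$; since $0 < p < 1$ we have $3-p > 1$, so $|g'(t)| \to 0$ as well. A continuous function on $[0,\infty)$ that vanishes at both $0$ and $\infty$ attains a finite maximum (extend it continuously to the one-point compactification of the ray and invoke compactness), hence $L(\w_\eta) = \sup_{t\ge 0}|g'(t)|$ is finite. This finiteness is the heart of the argument; the rest is bookkeeping. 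If an explicit constant is wanted, maximizing $t \mapsto t(\eta^2+t^2)^{-(3-p)/2}$ by setting its derivative to zero yields the unique interior critical point $t^* = \eta/\sqrt{2-p}$, and substituting $t^*$ into $|g'|$ produces a closed-form value of $L(\w_\eta)$ depending only on $\eta$ and $p$.
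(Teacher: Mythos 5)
Your proof is correct and follows the same basic route as the paper's: reduce the vector statement to the one-dimensional function $g(t)=\bigl(\eta/\sqrt{\eta^2+t^2}\bigr)^{1-p}$ acting entrywise on $|\D\tilde{\x}|$, and bound its derivative. In fact your version is more complete than the paper's, which only asserts that $f'$ is continuous and calls the conclusion trivial --- continuity of the derivative on an unbounded domain does not by itself give a global Lipschitz constant (consider $t\mapsto t^2$); what is needed is exactly the boundedness $\sup_{t\ge 0}|g'(t)|<\infty$ that you establish by checking the limits at $0$ and $\infty$ and invoking the mean value theorem, so your argument supplies the step the paper glosses over.
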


\begin{proof}
    If we consider:
    \begin{align*}
        f(\alpha) = \left( \frac{\eta}{\sqrt{\eta^2 + \alpha^2}} \right)^{1-p},
    \end{align*}
    the proof trivially follows from the observation that $f'(\alpha)$ is a continuous function of $\alpha$. Indeed:
    \begin{align*}
        f'(\alpha) =  (p-1) \frac{\alpha}{\sqrt{\eta^2 + \alpha^2}} \left( \frac{\eta}{\sqrt{\eta^2 + \alpha^2}} \right)^{p-2},
    \end{align*}
    which is continuous everywhere if $\eta \neq 0$.
\end{proof}

From now on, we denote as:
\begin{align}\label{eq:J_GT}
    \mathcal{J}_{GT}(\x, \y^\delta) := \frac{1}{2} ||\K\x-\y^\delta||_2^2 + \lambda || \w_\eta(\x^{GT}) \odot | \D \x | ||_1.
\end{align}
the objective function of our regularized model, where the ground truth image $\x^{GT}$ is used in place of $\tilde\x$ in the weighted strategy. 
We can thus prove the following inequality.
\begin{proposition}\label{prop:distance_of_J}
    For any $\x \in \X$, any fixed $\delta > 0$ and any $\y^\delta \in \R^m$, it holds:
    \begin{align}
        | \mathcal{J}_{GT}(\x, \y^\delta) -\mathcal{J}(\x, \y^\delta) | \leq \lambda L(\w_\eta) || \D ||_{2, 1} \left( \eta_1 + C^\epsilon_{\Psi, 1} ||\e||_1 \right) || \D \x ||_{2, 1},
    \end{align}
    where $\eta_1$ and $C^\epsilon_{\Psi, 1}$ are the accuracy and the $\epsilon$-stability constant in 1-norm of $\Psi$, respectively, $\epsilon>0$ is the maximum norm of the noise $\e$ in $\y^\delta$, and $\mathcal{J}_{GT}$ is given by Equation \eqref{eq:J_GT}.
\end{proposition}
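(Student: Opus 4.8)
The plan is to exploit the fact that the two functionals differ only in the regularization term. Since $\mathcal{J}_{GT}(\x,\y^\delta)$ and $\mathcal{J}(\x,\y^\delta)$ (with $\tilde{\x}=\Psi(\y^\delta)$) share the identical fidelity term $\frac{1}{2}\|\K\x-\y^\delta\|_2^2$, this term cancels in the difference, leaving
$\mathcal{J}_{GT}(\x,\y^\delta)-\mathcal{J}(\x,\y^\delta)=\lambda\big(\|\w_\eta(\x^{GT})\odot|\D\x|\|_1-\|\w_\eta(\tilde{\x})\odot|\D\x|\|_1\big)$,
where both weighted $\ell_1$ regularizers are evaluated at the \emph{same} argument $\x$. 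Taking absolute values and applying the reverse triangle inequality for the $1$-norm collapses this to $\lambda\,\|(\w_\eta(\x^{GT})-\w_\eta(\tilde{\x}))\odot|\D\x|\|_1$, removing the outer norms and reducing the task to estimating a single elementwise expression.

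Next I would separate the weight difference from the gradient magnitude. Writing the $1$-norm as $\sum_{i=1}^n |(\w_\eta(\x^{GT}))_i-(\w_\eta(\tilde{\x}))_i|\,(|\D\x|)_i$ and bounding each weight-difference factor by its maximum yields $\|\w_\eta(\x^{GT})-\w_\eta(\tilde{\x})\|_\infty\,\|\D\x\|_{2,1}$, since the surviving sum $\sum_i (|\D\x|)_i$ is exactly $\|\D\x\|_{2,1}$ by \eqref{eq:TV_definition}. As $\|\cdot\|_\infty\le\|\cdot\|_1$ in finite dimension, I can replace the $\infty$-norm by the $1$-norm and invoke the Lipschitz property of the weights (Proposition \ref{prop:lip_of_w}, applied with $\x^{GT},\tilde{\x}\in\X$) to obtain $\|\w_\eta(\x^{GT})-\w_\eta(\tilde{\x})\|_1\le L(\w_\eta)\,\||\D\x^{GT}|-|\D\tilde{\x}|\|_1$.

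It then remains to control $\||\D\x^{GT}|-|\D\tilde{\x}|\|_1$ by an image-domain distance. Applying the reverse triangle inequality of the Euclidean norm pixelwise to the two-dimensional vectors $((\D_h\x^{GT})_i,(\D_v\x^{GT})_i)$ and $((\D_h\tilde{\x})_i,(\D_v\tilde{\x})_i)$ gives $\bigl|(|\D\x^{GT}|)_i-(|\D\tilde{\x}|)_i\bigr|\le (|\D(\x^{GT}-\tilde{\x})|)_i$, so summing over $i$ yields $\||\D\x^{GT}|-|\D\tilde{\x}|\|_1\le\|\D(\x^{GT}-\tilde{\x})\|_{2,1}\le\|\D\|_{2,1}\,\|\x^{GT}-\tilde{\x}\|_1$, where the last step introduces the operator norm $\|\D\|_{2,1}$. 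Finally, since $\tilde{\x}=\Psi(\K\x^{GT}+\e)$, the stability estimate \eqref{eq:consequence_stab_constant} applied with $\x=\x^{GT}$ in $1$-norm bounds $\|\x^{GT}-\Psi(\y^\delta)\|_1\le\eta_1+C^\epsilon_{\Psi,1}\|\e\|_1$. Chaining all the inequalities reproduces precisely the claimed bound.

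The individual steps are elementary, so the main difficulty is bookkeeping: keeping the three norms ($\infty$, $1$ and the mixed $2,1$) consistent and ensuring the gradient-magnitude factor is retained as $\|\D\x\|_{2,1}$ rather than collapsed into another norm. The one structural (rather than purely mechanical) step is the pixelwise reverse triangle inequality for the gradient magnitude combined with the passage to the operator norm $\|\D\|_{2,1}$, since this is where the nonlinearity of $|\D\cdot|$ and the transfer from the gradient domain back to the image domain are handled; care is needed so that the bound $\|\D\boldsymbol{v}\|_{2,1}\le\|\D\|_{2,1}\,\|\boldsymbol{v}\|_1$ is read with the correct pair of norms.
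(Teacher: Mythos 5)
Your proposal is correct and follows essentially the same route as the paper: cancel the fidelity term, apply the reverse triangle inequality, separate the weight difference from $|\D\x|$, invoke the Lipschitz property of $\w_\eta$, pass from $\||\D\x^{GT}|-|\D\tilde{\x}|\|_1$ to $\|\D(\x^{GT}-\tilde{\x})\|_{2,1}$ via the pixelwise reverse triangle inequality, introduce the induced norm $\|\D\|_{2,1}$, and finish with the stability estimate \eqref{eq:consequence_stab_constant}. The only cosmetic difference is that you bound the weight-difference factor explicitly through $\|\cdot\|_\infty\le\|\cdot\|_1$ where the paper phrases the same step with a diagonal-matrix norm; your version is, if anything, slightly more careful.
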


\begin{proof}
    We first note that:
    \begin{align*}
        \mathcal{J}_{GT}(\x, \y^\delta) -\mathcal{J}(\x, \y^\delta)  
        =
        \lambda\left(|| \w_\eta(\x^{GT}) \odot | \D \x | ||_1 - || \w_\eta(\Psi(\y^\delta)) \odot | \D \x | ||_1\right).
    \end{align*}
    Thus,
    \begin{align*}
        | 	\mathcal{J}_{GT}(\x, \y^\delta) -\mathcal{J}(\x, \y^\delta) | & 
        = \lambda|\ || \w_\eta(\x^{GT}) \odot | \D \x | ||_1 - || \w_\eta(\Psi(\y^\delta)) \odot | \D \x | ||_1 \ | \\
        &
        \leq \lambda|| \w_\eta(\x^{GT}) \odot | \D \x | - \w_\eta(\Psi(\y^\delta)) \odot | \D \x | ||_1
    \end{align*}
    by the reverse triangular inequality.\\
    Now, let $\boldsymbol{W}_\eta(\tilde{\x})$ be the $n \times n$ diagonal matrix, whose diagonal entries are the elements of $\w_\eta(\tilde{\x})$, so that $\w_\eta(\tilde{\x}) \odot | \D \x | = \boldsymbol{W}_\eta(\tilde{\x}) | \D \x |$.
    Consequently, it holds:
    \begin{align*}
            | 	\mathcal{J}_{GT}(\x, \y^\delta) -\mathcal{J}(\x, \y^\delta) |
            &
            \leq \lambda \left\| \boldsymbol{W}_\eta(\x^{GT}) | \D \x | - \boldsymbol{W}_\eta(\Psi(\y^\delta))  | \D \x | \right\|_1 \\ 
            &
            = \lambda \left\| \left( \boldsymbol{W}_\eta(\x^{GT}) - \boldsymbol{W}_\eta(\Psi(\y^\delta)) \right) | \D \x |  \right\|_1 \\ 
            &
            \leq \lambda \left\| \boldsymbol{W}_\eta(\x^{GT}) - \boldsymbol{W}_\eta(\Psi(\y^\delta))  \right\|_1 \left\| | \D \x | \right\|_1.
    \end{align*}
    From Proposition \ref{prop:lip_of_w}, it follows:
    \begin{align*}
        \left\| \boldsymbol{W}_\eta(\x^{GT}) - \boldsymbol{W}_\eta(\Psi(\y^\delta))  \right\|_1 
        \leq 
        L(\w_\eta) \left\| | \D \x^{GT} | - | \D \Psi(\y^\delta) | \right\|_1,
    \end{align*}
    where $L(\w_\eta)$ is the Lipschitz constant of $\w_\eta$.
    
    To go on, note that the reverse triangular inequality gives:
    \begin{align*}
        \left\| | \D \x^{GT} | - | \D \Psi(\y^\delta) | \right\|_1 
        \leq 
        \left\| |  \D \x^{GT} - \D \Psi(\y^\delta)  | \right\|_1
        = 
        \left\| \D \left(\x^{GT} -\Psi (\y^\delta)\right) \right\|_{2,1},
    \end{align*}
    where $|| \cdot ||_{2, 1}$ is defined in \eqref{eq:TV_definition}. Now, let $|| \D ||_{2, 1}$ be the matrix norm induced by the $|| \cdot ||_{2, 1}$ vector norm, i.e.:
    \begin{align*}
        || \D ||_{2, 1} := \max_{\x \in \R^n} \frac{|| \D \x ||_{2, 1}}{|| \x ||_1}.
    \end{align*}
    We get:
    \begin{align*}
        || \D \left(\x^{GT} -\Psi (\y^\delta)\right) ||_{2,1} \leq || \D ||_{2, 1} || \x^{GT} - \Psi(\y^\delta) ||_1.
    \end{align*}
    To conclude, let $\epsilon > 0$ be the maximum norm of $\e$ in $\y^\delta = \K \x^{GT} + \e$. By Equation \eqref{eq:consequence_stab_constant}, following from Definition \ref{def:stability_constant}:
    \begin{align*}
        || \x^{GT} - \Psi(\y^\delta) ||_1 = || \x^{GT} - \Psi(\K \x^{GT} + \e) ||_1 \leq \eta_1 + C^\epsilon_{\Psi, 1} || \e ||_1,
    \end{align*}
    where $\eta_1$ and $C^\epsilon_{\Psi, 1}$ are the accuracy and the $\epsilon$-stability constant in 1-norm of $\Psi$, respectively.
    In conclusion, we have obtained the following inequality:
    \begin{align*}
        | 	\mathcal{J}_{GT}(\x, \y^\delta) -\mathcal{J}(\x, \y^\delta) | \leq \lambda L(\w_\eta) || \D ||_{2, 1} \left( \eta_1 + C^\epsilon_{\Psi, 1} ||\e||_1 \right) \left\| | \D \x | \right\|_1,
    \end{align*}
    which proves the result.
\end{proof}

To proceed, we first consider the noiseless case (i.e. $\delta = 0$), for which it holds the following Proposition.

\begin{proposition}\label{prop:stability_delta0}
    Let $\x^*_{GT, 0}$ be the unique minimum of $\mathcal{J}_{GT}(\x, \y^\delta)$, defined in \eqref{eq:J_GT}, with $\delta = 0$. Let $\{ \Psi_k \}_{k \in \mathbb{N}}$ be a sequence of reconstructor, each with accuracy $\eta_{1, k}^{-1}$ in 1-norm. Let $\x^*_{k, 0}$ be the unique minimum of $\mathcal{J}_k(\x, \y^0) := \mathcal{J}(\x, \y^\delta)$ with $\Psi = \Psi_k$ and $\delta = 0$. Then, if $\eta_{1, k} \to 0$ for $k \to \infty$:
    \begin{align*}
        || \x^*_{GT, 0} - \x^*_{k, 0} ||_1 \to 0.
    \end{align*}
\end{proposition}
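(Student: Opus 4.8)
The plan is to invoke Proposition \ref{prop:uniformconvergence_implies_convergenceminimizers}, which reduces the claim to verifying that the sequence of objective functions $\mathcal{J}_k(\x, \y^0)$ converges uniformly to $\mathcal{J}_{GT}(\x, \y^0)$ on every compact subset of $\X$, provided each minimizer is unique. The uniqueness of each $\x^*_{k,0}$ and of $\x^*_{GT,0}$ is guaranteed by the Corollary following Theorem \ref{theo:unicita} (uniqueness of the weighted-TV solution under the undersampling hypotheses we assume hold for $\K$), so hypotheses (i) and (ii) of Proposition \ref{prop:uniformconvergence_implies_convergenceminimizers} are satisfied. The remaining work is entirely the uniform-convergence estimate.

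First I would fix an arbitrary compact set $C \subseteq \X$ and apply Proposition \ref{prop:distance_of_J} with $\delta = 0$, so that $\e = \boldsymbol{0}$ and $\y^0 = \K\x^{GT}$. For each reconstructor $\Psi_k$ with accuracy $\eta_{1,k}^{-1}$, that proposition yields, for every $\x \in C$,
\begin{align*}
    | \mathcal{J}_{GT}(\x, \y^0) - \mathcal{J}_k(\x, \y^0) | \leq \lambda L(\w_\eta) || \D ||_{2, 1}\, \eta_{1, k}\, || \D \x ||_{2, 1},
\end{align*}
where the noise term $C^\epsilon_{\Psi_k, 1} ||\e||_1$ vanishes because $||\e||_1 = 0$. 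Since $C$ is compact and $\x \mapsto ||\D\x||_{2,1}$ is continuous, the quantity $\sup_{\x \in C} ||\D\x||_{2,1}$ is finite; call it $M_C$. Taking the supremum over $\x \in C$ gives
\begin{align*}
    \sup_{\x \in C} | \mathcal{J}_{GT}(\x, \y^0) - \mathcal{J}_k(\x, \y^0) | \leq \lambda L(\w_\eta) || \D ||_{2, 1}\, M_C\, \eta_{1, k}.
\end{align*}
Because $\lambda$, $L(\w_\eta)$, $||\D||_{2,1}$, and $M_C$ are all fixed finite constants independent of $k$, the hypothesis $\eta_{1,k} \to 0$ forces the right-hand side to zero, establishing uniform convergence on $C$.

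With uniform convergence on every compact subset of $\X$ and the uniqueness of all minimizers in hand, Proposition \ref{prop:uniformconvergence_implies_convergenceminimizers} applies directly and yields $||\x^*_{k,0} - \x^*_{GT,0}||_1 \to 0$, which is the assertion. I expect the main subtlety to be a technical one rather than a conceptual one: Proposition \ref{prop:uniformconvergence_implies_convergenceminimizers} is stated for a sequence of minimizers lying in a common compact set, so one must argue that the $\x^*_{k,0}$ do in fact stay within some fixed compact $C$. This can be handled by a coercivity argument, using that the data-fidelity and weighted-TV terms grow without bound and that the weights $\w_\eta$ are uniformly bounded in $(0,1]$ by Proposition \ref{prop:w_is_a_scale_term}, so the minimizers cannot escape to infinity as $k$ grows; establishing this uniform boundedness cleanly is the only step requiring care.
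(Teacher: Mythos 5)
Your proposal is correct and follows essentially the same route as the paper: both apply Proposition \ref{prop:distance_of_J} with $\delta=0$ so the stability term drops out, deduce uniform convergence of $\mathcal{J}_k$ to $\mathcal{J}_{GT}$ on compacts from $\eta_{1,k}\to 0$, verify that the minimizers $\x^*_{k,0}$ remain in a common compact set via a coercivity/boundedness argument comparing against a solution of $\K\x=\y^0$, and conclude with Proposition \ref{prop:uniformconvergence_implies_convergenceminimizers} together with the uniqueness guaranteed by Theorem \ref{theo:unicita}. The only cosmetic difference is that you bound $\sup_{\x\in C}\|\D\x\|_{2,1}$ directly by compactness where the paper uses $\|\D\x\|_{2,1}\leq \|\D\|_{2,1}^2\|\x\|_1$, which is immaterial.
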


\begin{proof}
    If $\delta = 0$, then for any $k \in \mathbb{N}$, Proposition \ref{prop:distance_of_J} implies:
    \begin{align*}
        | \mathcal{J}_{GT}(\x, \y^0) - \mathcal{J}_k(\x, \y^0) | \leq \eta_{1, k} L(\w_\eta) || \D ||_{2, 1} || \D \x ||_{2, 1} &\leq \eta_{1, k} L(\w_\eta) || \D ||_{2, 1}^2 || \x ||_1 \\ &\leq \eta_{1, k} n L(\w_\eta) || \D ||_{2, 1}^2 || \x ||_{\infty}, 
    \end{align*}
    which, considering that $\eta_{1, k} \to 0$ as $k \to \infty$, implies that $\mathcal{J}_k(\x, \y^0) \to \mathcal{J}_{GT}(\x, \y^0)$ uniformly on every compact subset of $\X$. \\
    
    Let $C \subseteq \X$ be the closure of any set that contains $\x^*_{k, 0}$ for all $k \in \mathbb{N}$. $C$ is limited, indeed let $\x^\dagger \in \R^n$ be a solution of $\K \x = \y^0$, then for any $k \in \mathbb{N}$:
    \begin{align*}
        \lambda || \w_\eta(\Psi_k(\y^0)) \odot | \D \x^*_{k, 0} | ||_1 \leq \mathcal{J}_k(\x^*_k, \y^0) \leq \mathcal{J}_k(\x^\dagger, \y^0) = \lambda || \w_\eta(\Psi_k(\y^0)) \odot | \D \x^\dagger | ||_1,
    \end{align*}
    where the second inequality comes from the definition of minimizer. By Proposition \ref{prop:w_is_a_scale_term} $\w_\eta(\Psi_k(\y^0)) \leq \boldsymbol{1}$, thus:
    \begin{align*}
        \lambda || \w_\eta(\Psi_k(\y^0)) \odot | \D \x^*_{k, 0} | ||_1 \leq \lambda || \w_\eta(\Psi_k(\y^0)) \odot | \D \x^\dagger | ||_1 \leq \lambda || \D\x^\dagger ||_{2, 1} = c < \infty.
    \end{align*}
    Consequently,
    \begin{align*}
        \lim_{k \to \infty} \lambda || \w_\eta(\Psi_k(\y^0)) \odot | \D \x^*_{k, 0} | ||_1 < \infty,
    \end{align*}
    which, since $\lambda || \w_\eta(\Psi_k(\y^0)) \odot | \D \x^*_{k, 0} | ||_1$ is coercive, implies that $\{ \x_k \}_{k \in \mathbb{N}}$ is limited, and so is $C$. \\
    
    Since $C$ is closed and limited, it is compact, and thus $\mathcal{J}_k(\x, \y^0) \to \mathcal{J}_{GT}(\x, \y^0)$ uniformly in $C$. Since $\x^*_{k, 0}$ is unique by Theorem \ref{theo:unicita}, by Proposition \ref{prop:uniformconvergence_implies_convergenceminimizers}, $|| \x^*_{k, 0} - \x^*_{GT, 0} ||_1 \to 0$ as $k \to \infty$.
\end{proof}

We can also prove a convergence result for $\x^*_{GT, \delta}$ and $\x^*_{\Psi, \delta}$.
\begin{proposition}\label{prop:convergence_GT}
    Let $\{ \delta_k \}_{k \in \mathbb{N}}$ be any sequence of positive noise levels such that $\delta_k \to 0$ as $k \to \infty$. For any $k \in \mathbb{N}$, let $\x^*_{GT, \delta_k}$ be the unique minimizer of $\mathcal{J}_{GT}(\x, \y^{\delta_k})$. Then:
    \begin{align*}
        \lim_{k \to \infty} || \x^*_{GT, \delta_k} - \x^*_{GT, 0} ||_1 \to 0.
    \end{align*}
\end{proposition}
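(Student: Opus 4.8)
The plan is to recognise that this is a stability statement with respect to the data, in which the regularizer never changes, and to reduce it to the abstract convergence-of-minimizers result already available in Proposition \ref{prop:uniformconvergence_implies_convergenceminimizers}. The crucial simplification compared with Proposition \ref{prop:stability_delta0} is that the weights $\w_\eta(\x^{GT})$ are computed from the fixed ground truth and therefore do not depend on $\delta_k$; hence the two objectives $\mathcal{J}_{GT}(\cdot, \y^{\delta_k})$ and $\mathcal{J}_{GT}(\cdot, \y^0)$ share exactly the same regularization term, and their difference reduces to a difference of the two quadratic fidelities.

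First I would make the uniform convergence explicit. Writing $\e_k := \y^{\delta_k} - \y^0$ with $\y^0 = \K\x^{GT}$ (so that $\| \e_k \|_2 \leq \delta_k$), a polarization identity gives
\begin{align*}
    \mathcal{J}_{GT}(\x, \y^{\delta_k}) - \mathcal{J}_{GT}(\x, \y^0) = - \langle \K\x - \y^0, \e_k \rangle + \frac{1}{2} \| \e_k \|_2^2.
\end{align*}
On any compact set $C \subseteq \X$ the quantity $\| \K\x - \y^0 \|_2$ is bounded by some $R = R(C)$, so Cauchy--Schwarz yields $| \mathcal{J}_{GT}(\x, \y^{\delta_k}) - \mathcal{J}_{GT}(\x, \y^0) | \leq R \delta_k + \tfrac{1}{2}\delta_k^2$ for every $\x \in C$. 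Since $\delta_k \to 0$, this shows $\mathcal{J}_{GT}(\cdot, \y^{\delta_k}) \to \mathcal{J}_{GT}(\cdot, \y^0)$ uniformly on every compact subset of $\X$, which is the first hypothesis of Proposition \ref{prop:uniformconvergence_implies_convergenceminimizers}.

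Next I would verify the remaining hypotheses and, in particular, confine all the minimizers to one common compact set, mirroring the argument in Proposition \ref{prop:stability_delta0}. Optimality of $\x^*_{GT, \delta_k}$ together with the admissibility of $\x^{GT}$ gives $\mathcal{J}_{GT}(\x^*_{GT, \delta_k}, \y^{\delta_k}) \leq \mathcal{J}_{GT}(\x^{GT}, \y^{\delta_k}) = \tfrac{1}{2}\| \e_k \|_2^2 + \lambda \| \w_\eta(\x^{GT}) \odot | \D\x^{GT} | \|_1 \leq c$ for a constant $c$ independent of $k$ (because $\delta_k \to 0$). Both terms of $\mathcal{J}_{GT}$ being nonnegative, this uniformly bounds $\| \K\x^*_{GT, \delta_k} \|_2$ and the weighted total variation of $\x^*_{GT, \delta_k}$; since by Proposition \ref{prop:w_is_a_scale_term} the weights are bounded below by a strictly positive constant, the plain total variation is bounded as well, and coercivity of $\mathcal{J}_{GT}$ then places the whole sequence $\{ \x^*_{GT, \delta_k} \}$ inside a bounded (hence, after taking closure, compact) subset $C$ of $\X$. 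Finally, the Corollary to Theorem \ref{theo:unicita} guarantees that $\mathcal{J}_{GT}(\cdot, \y^{\delta_k})$ and $\mathcal{J}_{GT}(\cdot, \y^0)$ each admit a unique minimizer, so Proposition \ref{prop:uniformconvergence_implies_convergenceminimizers} applies on $C$ and delivers $\| \x^*_{GT, \delta_k} - \x^*_{GT, 0} \|_1 \to 0$.

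I expect the only delicate point to be this uniform boundedness step: a bound on the weighted total variation alone does not control $\x$, since constant images have vanishing total variation, so coercivity must be invoked through the interplay of the fidelity and regularization terms, precisely by using that $\K$ does not annihilate constant images together with a discrete Poincar\'e-type inequality, or equivalently the non-negativity constraint defining $\X$, exactly as is implicitly done in Proposition \ref{prop:stability_delta0}. Everything else is a direct transcription of the abstract machinery already established.
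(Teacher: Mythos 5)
Your proof is correct and follows the same overall strategy as the paper: establish uniform convergence of $\mathcal{J}_{GT}(\cdot,\y^{\delta_k})$ to $\mathcal{J}_{GT}(\cdot,\y^{0})$ and then invoke Proposition \ref{prop:uniformconvergence_implies_convergenceminimizers} together with uniqueness of the minimizers. Two points in your version are in fact tighter than the paper's. First, the paper bounds the difference of fidelities by $\left|\tfrac{1}{2}\|\K\x-\y^{\delta_k}\|_2^2-\tfrac{1}{2}\|\K\x-\y^{0}\|_2^2\right|\leq\tfrac{1}{2}\|\y^{\delta_k}-\y^{0}\|_2^2\leq\tfrac{1}{2}\delta_k^2$, i.e.\ an inequality of the form $\left|\|u\|_2^2-\|v\|_2^2\right|\leq\|u-v\|_2^2$, which is false in general (take $u=2$, $v=1$ in $\R$); your polarization identity gives the correct estimate $R(C)\,\delta_k+\tfrac{1}{2}\delta_k^2$, which is uniform only on compact sets rather than globally, but that is exactly what Proposition \ref{prop:uniformconvergence_implies_convergenceminimizers} requires, so nothing is lost. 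Second, you explicitly confine the minimizers $\x^*_{GT,\delta_k}$ to a common compact set via the comparison $\mathcal{J}_{GT}(\x^*_{GT,\delta_k},\y^{\delta_k})\leq\mathcal{J}_{GT}(\x^{GT},\y^{\delta_k})\leq c$; the paper's proof of this proposition skips that verification even though the abstract lemma presupposes it (the analogous step is carried out only in Propositions \ref{prop:stability_delta0} and \ref{prop:convergence_Psi}). Your closing caveat is also well placed: a bound on the weighted total variation alone does not bound $\x$, so the confinement ultimately rests on the coercivity of the full objective on $\X$, which the paper asserts in Section \ref{ssec:unicita} but does not prove; making that dependence explicit, as you do, is the honest way to close the argument.
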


\begin{proof}
    Note that $\mathcal{J}_{GT}(\x, \y^{\delta_k})$ is continuous with respect to $\y^{\delta_k}$, since the term $\y^{\delta_k}$ appears in $\mathcal{J}_{GT}(\x, \y^{\delta_k})$ only via a quadratic term. Consequently,
    \begin{align}
        \lim_{k \to \infty} \mathcal{J}_{GT}(\x, \y^{\delta_k}) = \mathcal{J}_{GT}(\x, \y^{0}), \quad \forall \x \in \X.
    \end{align}
    To prove the result, we need to show that $\mathcal{J}_{GT}(\x, \y^{\delta_k})$ converges uniformly to $\mathcal{J}_{GT}(\x, \y^{0})$ in $\x$ on any compact, which together with the uniqueness of the minimizer, implies that $\x^*_{GT, \delta_k}$ converges to $\x^*_{GT, 0}$ as $k \to \infty$ by Proposition \ref{prop:uniformconvergence_implies_convergenceminimizers}. To this aim, we observe that for any $\x \in \X$:
    \begin{align*}
        \left| \mathcal{J}_{GT}(\x, \y^{\delta_k}) - \mathcal{J}_{GT}(\x, \y^{0}) \right| &=\left| \frac{1}{2}|| \K \x - \y^{\delta_k} ||_2^2 -  \frac{1}{2}|| \K \x - \y^{0} ||_2^2 \right| \\ &\leq \frac{1}{2}||  \K \x - \y^{\delta_k} - \left( \K \x - \y^0 \right) ||_2^2 \\ & = \frac{1}{2}|| \y^{\delta_k} - \y^0 ||_2^2 \leq \frac{1}{2}\delta_k^2.
    \end{align*}
    Since $\delta_k$ does not depend on $\x$, and $\delta_k \to 0$ as $k \to \infty$, then this implies that $\mathcal{J}_{GT}(\x, \y^{\delta_k})$ converges to $\mathcal{J}_{GT}(\x, \y^{0})$ uniformly. This proves that $\x^*_{GT, \delta_k}$ converges to $\x^*_{GT, 0}$ and, in particular,
    \begin{align*}
        \lim_{k \to \infty}|| \x^*_{GT, \delta_k} - \x^*_{GT, 0} ||_1 = 0.
    \end{align*}
\end{proof}

The convergence of $\x^*_{\Psi, \delta}$ to $\x^*_{\Psi, 0}$ as $\delta \to 0$ for a general reconstructor $\Psi$ is slightly harder, since in this case, the regularizer also depends on $\delta$ due to $\tilde{\x} = \Psi(\y^\delta)$. 

\begin{proposition}\label{prop:convergence_Psi}
    For any reconstructor $\Psi: \R^m \to \R^n$, let $\{ \delta_k \}_{k \in \mathbb{N}}$ be any sequence of positive noise levels such that $\delta_k \to 0$ as $k \to \infty$. For any $k \in \mathbb{N}$, let $\x^*_{\Psi, \delta_k}$ be the unique minimizer of $\mathcal{J}(\x, \y^{\delta_k})$. Then:
    \begin{align*}
        \lim_{k \to \infty} || \x^*_{\Psi, \delta_k} - \x^*_{\Psi, 0} ||_1 \to 0.
    \end{align*}
\end{proposition}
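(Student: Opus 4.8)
The plan is to invoke Proposition \ref{prop:uniformconvergence_implies_convergenceminimizers} with the choice $f_k := \mathcal{J}(\cdot, \y^{\delta_k})$ and $f^* := \mathcal{J}(\cdot, \y^0)$. Under the standing assumptions ensuring uniqueness (Theorem \ref{theo:unicita} and its Corollary), each $f_k$ and $f^*$ possesses a unique minimizer, so both hypotheses of that Proposition hold automatically. What remains is therefore (a) to show that $f_k \to f^*$ uniformly on every compact subset of $\X$, and (b) to exhibit a single compact set containing every minimizer $\x^*_{\Psi,\delta_k}$ together with $\x^*_{\Psi,0}$.

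For the uniform convergence I would split the difference $|f_k(\x) - f^*(\x)|$ into its fidelity and regularization contributions. The fidelity part is handled exactly as in Proposition \ref{prop:convergence_GT}: writing $\y^{\delta_k} - \y^0 = \e_k$ with $\|\e_k\|_2 \le \delta_k$, on a compact $C$ one obtains $\bigl|\tfrac12\|\K\x-\y^{\delta_k}\|_2^2 - \tfrac12\|\K\x-\y^0\|_2^2\bigr| \le M_C \delta_k + \tfrac12\delta_k^2$ with $M_C := \sup_{\x\in C}\|\K\x-\y^0\|_2$, which tends to $0$ uniformly in $\x$. The genuinely new term is the regularizer, since here $\tilde\x = \Psi(\y^{\delta_k})$ itself depends on $k$. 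Applying the reverse triangle inequality, then Proposition \ref{prop:lip_of_w}, then the reverse triangle inequality for $|\D\cdot|$ together with the induced matrix norm $\|\D\|_{2,1}$ — precisely the chain already assembled in the proof of Proposition \ref{prop:distance_of_J} — reduces the regularizer difference to a bound of the form $\lambda L(\w_\eta)\,\|\D\|_{2,1}\,\|\Psi(\y^{\delta_k}) - \Psi(\y^0)\|_1\,\|\D\x\|_{2,1}$.

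At this point the decisive observation is that $\Psi$ is assumed Lipschitz continuous (Section \ref{ssec:conv}); since $\|\y^{\delta_k} - \y^0\|_2 = \|\e_k\|_2 \le \delta_k \to 0$ and all norms on $\R^m$ are equivalent, this yields $\|\Psi(\y^{\delta_k}) - \Psi(\y^0)\|_1 \to 0$. Because $\|\D\x\|_{2,1}$ is bounded on the compact $C$, the entire regularizer difference vanishes uniformly in $\x$, and combined with the fidelity estimate this gives $f_k \to f^*$ uniformly on $C$.

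It remains to produce the compact set, and this is where I expect the main obstacle to lie. I would mimic the boundedness argument of Proposition \ref{prop:stability_delta0}: from $\mathcal{J}(\x^*_{\Psi,\delta_k}, \y^{\delta_k}) \le \mathcal{J}(\boldsymbol{0}, \y^{\delta_k}) = \tfrac12\|\y^{\delta_k}\|_2^2 \le \tfrac12(\|\y^0\|_2 + \delta_k)^2$ one bounds both the fidelity and the weighted total variation of the minimizers uniformly in $k$. The subtlety is that the weights vary with $k$, so to convert a bound on $\|\w_\eta(\Psi(\y^{\delta_k}))\odot|\D\x^*_{\Psi,\delta_k}|\|_1$ into a bound on $TV(\x^*_{\Psi,\delta_k})$ I need a \emph{uniform} positive lower bound on the weights. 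This is available because, by Proposition \ref{prop:w_is_a_scale_term}, each weight lies in $(0,1]$, and because $\Psi(\y^{\delta_k}) \to \Psi(\y^0)$ forces the weight vectors to converge to $\w_\eta(\Psi(\y^0))$, whose entries are strictly positive; hence $\inf_k \min_i (\w_\eta(\Psi(\y^{\delta_k})))_i > 0$. With such a bound, the uniformly bounded weighted total variation controls $TV(\x^*_{\Psi,\delta_k})$, and together with the fidelity control on $\K\x^*_{\Psi,\delta_k}$ and the non-negativity constraint, the coercivity of $\x \mapsto \tfrac12\|\K\x-\y^0\|_2^2 + TV(\x)$ on $\X$ confines all minimizers to a single bounded, hence compact, set. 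Proposition \ref{prop:uniformconvergence_implies_convergenceminimizers} then delivers $\|\x^*_{\Psi,\delta_k} - \x^*_{\Psi,0}\|_1 \to 0$.
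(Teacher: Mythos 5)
Your proposal follows essentially the same route as the paper's proof: split $|\mathcal{J}(\x,\y^{\delta_k})-\mathcal{J}(\x,\y^{0})|$ into fidelity and regularizer contributions, control the regularizer via the Lipschitz continuity of $\w_\eta$ (Proposition \ref{prop:lip_of_w}) and of $\Psi$, bound the minimizers using the objective value at a fixed feasible point, and conclude with Proposition \ref{prop:uniformconvergence_implies_convergenceminimizers}. If anything you are slightly more careful than the paper at two points --- your fidelity estimate on a compact set keeps the cross term $M_C\delta_k$ that the paper's bound $\tfrac12\delta_k^2$ omits, and you make explicit the uniform positive lower bound on the weights needed to pass from bounded weighted total variation to bounded total variation --- so the argument is correct.
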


\begin{proof}
    We proceed as in the proof of Proposition \ref{prop:convergence_GT}, denoting by $\mathcal{R}_k(\x)$ the value of the regularizer with weights computed by $\Psi(\y^{\delta_k})$ for the sake of simplicity, namely  i.e. $\mathcal{R}_k(\x) = || \w_\eta(\Psi(\y^{\delta_k})) \odot | \D \x | ||_1$. 
    \begin{align*}
        \left| \mathcal{J}(\x, \y^{\delta_k}) - \mathcal{J}(\x, \y^{0}) \right| &= \left| \frac{1}{2}|| \K \x - \y^{\delta_k} ||_2^2 -  \frac{1}{2}|| \K \x - \y^{0} ||_2^2 + \lambda \left( \mathcal{R}_k(\x) - \mathcal{R}_0(\x) \right) \right| \\ &\leq \left| \frac{1}{2}|| \K \x - \y^{\delta_k} ||_2^2 -  \frac{1}{2}|| \K \x - \y^{0} ||_2^2 \right| + \lambda \left| \mathcal{R}_k(\x) - \mathcal{R}_0(\x) \right| \\ &\leq \frac{1}{2}\delta_k^2 + \lambda \left| \mathcal{R}_k(\x) - \mathcal{R}_0(\x) \right|,
    \end{align*}
    The second term can be rearranged by means of Proposition \ref{prop:lip_of_w},
    \begin{align*}
        \left| \mathcal{R}_k(\x) - \mathcal{R}_0(\x) \right| &= \left| || \w_\eta(\Psi(\y^{\delta_k})) \odot | \D \x | ||_1 - ||\w_\eta(\Psi(\y^{0})) \odot | \D \x | ||_1 \right| \\ & \leq || \w_\eta(\Psi(\y^{\delta_k})) \odot | \D \x | - \w_\eta(\Psi(\y^{0})) \odot | \D \x | ||_1 \\ &= || \left(\w_\eta(\Psi(\y^{\delta_k})) - \w_\eta(\Psi(\y^{0}))\right) \odot | \D \x | ||_1 \\ &\leq || \w_\eta(\Psi(\y^{\delta_k})) - \w_\eta(\Psi(\y^{0})) ||_1 || \D \x ||_{2,1} \\ &\leq || \w_\eta(\Psi(\y^{\delta_k})) - \w_\eta(\Psi(\y^{0})) ||_1 || \D ||_{2,1} || \x ||_1 \\ &\leq || \w_\eta(\Psi(\y^{\delta_k})) - \w_\eta(\Psi(\y^{0})) ||_1 || \D ||_{2,1} n || \x ||_\infty \\ &\leq L(\w_\eta) || \Psi(\y^{\delta_k}) - \Psi(\y^0)||_1 || \D ||_{2,1} n || \x ||_\infty \\ &\leq L(\w_\eta) L(\Psi) \delta_k || \D ||_{2,1} n || \x ||_\infty,
    \end{align*}
    where $L(\Psi)$ is the Lipschitz constant of $\Psi$, which is well defined since $\Psi$ is Lipschitz continuous by definition.

    To summarize, we proved that:
    \begin{align*}
        \left| \mathcal{J}(\x, \y^{\delta_k}) - \mathcal{J}(\x, \y^{0}) \right| \leq \frac{1}{2}\delta_k^2 + \lambda L(\w_\eta) L(\Psi) \delta_k || \D ||_{2,1} n || \x ||_\infty,
    \end{align*}
    which implies that $\mathcal{J}(\x, \y^{\delta_k})$ converges uniformly to $\mathcal{J}(\x, \y^{0})$ on any compact subset of $\X$. By considering the closed subset $C$ of $\X$ that contains the sequence $\{ \x^*_{\Psi, \delta_k} \}_{k \in \mathbb{N}}$, we conclude as in Proposition \ref{prop:convergence_GT}, by observing that $C$ is limited since:
    \begin{align*}
        || \w_\eta(\Psi(\y^{\delta_k})) \odot | \D \x^*_{\Psi, \delta_k} | ||_1 &\leq \mathcal{J}_\Psi(\x^*_{\Psi, \delta_k}, \y^{\delta_k}) \leq \mathcal{J}_\Psi(\x^\dagger, \y^{\delta_k}) \\ & \leq \frac{\delta_k^2}{2} + \lambda || \w_\eta(\Psi(\y^{\delta_k})) \odot | \D \x^\dagger | ||_1 \\& \leq \frac{\delta_k^2}{2} + \lambda || \D \x^\dagger ||_{2, 1} \to c < \infty,
    \end{align*}
    where $\x^\dagger$ is any solution of $\K\x = \y^0$, and $|| \w_\eta(\Psi(\y^{\delta_k})) \odot | \D \x^*_{\Psi, \delta_k} | ||_1$ is coercive.
\end{proof}

In conclusion, we observe that the stability results from Proposition \ref{prop:stability_delta0} can be generalized to the case where $\delta > 0$. Thus, we can now prove Theorem \ref{teo:conv}.


\begin{proof}
    Let:
    \begin{align*}
        \mathcal{J}_k(\x, \y^\delta) := \frac{1}{2} || \K\x - \y^\delta ||_2^2 + \lambda || \w_\eta(\Psi_k(\y^\delta)) \odot | \D \x | ||_1.
    \end{align*}
    By proceeding as in the proof of Proposition \ref{prop:distance_of_J}, we can show that:
    \begin{align*}
        | \mathcal{J}_k(\x, \y^\delta) - \mathcal{J}_{GT}(\x, \y^\delta) | \leq \lambda L(\w_\eta) \left\| | \D \x^{GT} | - | \D \Psi_k(\y^\delta) | \right\|_1 || \D \x ||_{2, 1},
    \end{align*}
    where $L(\w_\eta)$ is defined as in Proposition \ref{prop:lip_of_w}. Consequently:
    \begin{align*}
        | \mathcal{J}_k(\x, \y^\delta) - \mathcal{J}_{GT}(\x, \y^\delta) | &\leq n \lambda L(\w_\eta) || \D ||_{2, 1} || \x ||_\infty \left\| | \D \x^{GT} | - | \D \Psi_k(\y^\delta) | \right\|_1,
    \end{align*}
    which implies that $\mathcal{J}_k(\x, \y^\delta)$ converges to $\mathcal{J}_{GT}(\x, \y^\delta)$ uniformly on any compact set, whenever
    \begin{align*}
        \lim_{k \to \infty} \left\| | \D \x^{GT} | - | \D \Psi_k(\y^\delta) | \right\|_1 = 0.
    \end{align*}
    By Proposition \ref{prop:uniformconvergence_implies_convergenceminimizers}, we conclude that:
    \begin{align*}
        \lim_{k \to \infty} || \x^*_{GT, \delta} - \x^*_{\Psi_k, \delta} ||_1 = 0.
    \end{align*}
\end{proof}

\end{appendices}

\bibliographystyle{plain}
\bibliography{biblio}
\end{document}